\documentclass[11pt]{article}

\usepackage[margin=1in]{geometry}
\usepackage[style=alphabetic,backend=biber,doi=true,maxnames=99]{biblatex}
\bibliography{ref.bib}

\input{arxiv_macros}
\DeclareMathOperator{\poly}{poly}
\newcommand{\OPT}{\mathrm{OPT}}

\newcommand{\cost}{\mathrm{cost}}
\newcommand{\load}{\mathrm{load}}
\newcommand{\loadvec}{\overrightarrow{\load}}
\newcommand{\ADAPTLP}{\mathsf{ADAPT}\text{-}\mathsf{LP}}
\newcommand{\ADAPTCLP}{\mathsf{ADAPT}\text{-}\mathsf{CR}}
\newcommand{\slbr}{\ensuremath{\mathsf{SLBR}}\xspace}

\newcommand{\lb}{\mathrm{lb}}
\newcommand{\denlb}{\mathrm{density}}
\newcommand{\mx}{\mathrm{max}}
\newcommand{\total}{\mathrm{total}}
\newcommand{\nxt}{\mathrm{next}}
\newcommand{\supp}{\mathrm{supp}}
\newcommand{\rank}{\ensuremath{\mathrm{r}}}
\newcommand{\viol}{\nu}
\newcommand{\mksp}{\mathrm{makespan}}
\newcommand{\OPTAD}{\mathrm{OPT_{AD}}}

\newcommand{\mc}{\mathcal}
\newcommand{\M}{\mc{M}}

\newcommand{\Q}{\ensuremath{\mathcal{Q}}}
\newcommand{\Rp}{\ensuremath{\mathbb R}_{\geq 0}}
\newcommand{\R}{\ensuremath{\mathbb R}}
\newcommand{\Z}{\ensuremath{\mathbb Z}}

\renewcommand{\Pr}{\operatorname{\mathbb{P}}}
\newcommand{\E}{\operatorname{\mathbb{E}}}
\newcommand{\ceil}[1]{\ensuremath{\left\lceil#1\right\rceil}}

\newcommand{\norm}[1]{\ensuremath{\Vert #1 \Vert}}

\DeclareMathOperator{\Ber}{Bernoulli}

\newcommand{\sm}{\setminus}
\newcommand{\sg}{\sigma}
\newcommand{\sgp}{{\sg^\prime}}
\newcommand{\sgs}{{\sg^*}}
\newcommand{\ld}{\lambda}

\newcommand{\bz}{\overline{z}}
\newcommand{\by}{\overline{y}}

\newcommand{\bone}{\mathds{1}}
\newcommand{\euler}{\ensuremath{\mathrm{e}}}
\newcommand{\LHS}{\ensuremath{\mathrm{LHS}}}

\newcommand{\s}{\ensuremath{S}} 
\newcommand{\I}{\ensuremath{\mc{I}}}
\newcommand{\F}{\ensuremath{\mc{F}}}
\newcommand{\U}{\ensuremath{\mc{U}}}
\newcommand{\tr}{\mathrm{tr}}
\newcommand{\ex}{\mathrm{ex}}
\newcommand{\A}{\ensuremath{\mc{A}}}

\newcommand{\smooth}{smooth\xspace}
\newcommand{\smoothing}{smoothing\xspace}
\newcommand{\smoothed}{smoothed\xspace}

\newcommand{\consistent}{consistent\xspace}

\newcommand{\subsets}[1]{\ensuremath{2^{#1} \sm \{ \emptyset \}}} 

\newcommand{\sse}{\subseteq}
\newcommand{\jobset}{\mc{J}}

\newcommand{\asgn}{\pi}

\newcommand{\mset}{\sg}

\newcommand{\delout}{\ensuremath{\delta^{\mathrm{out}}}}
\newcommand{\scaleparm}{\Delta}
\newcommand{\indexset}{\mathcal{K}}
\newcommand{\sml}{\mathsf{sml}}
\newcommand{\med}{\mathsf{med}}
\newcommand{\lrg}{\mathsf{lrg}}

\title{Stochastic Load Balancing with Machine Reservations%
\thanks{A preliminary version of this paper appeared in the proceedings of the
27th International Conference on Integer Programming and Combinatorial
Optimization, 2026.}}

\author{%
David Alem\'an Espinosa%
\thanks{University of Waterloo, Waterloo, Canada.
Email: \texttt{dalemanespinosa@uwaterloo.ca}.
Supported in part by C.~Swamy's NSERC Discovery Grant.}
\qquad
Naveen Garg%
\thanks{Indian Institute of Technology Delhi, New Delhi, India.
Email: \texttt{naveen@cse.iitd.ac.in}.
Research was supported by the Usha Hasteer Chair.}
\qquad
Sharat Ibrahimpur%
\thanks{Eidgen\"ossische Technische Hochschule Z\"urich, Z\"urich, Switzerland.
Email: \texttt{sibrahimpur@ethz.ch}.
Part of this work was conducted while the author was a postdoctoral researcher
at the London School of Economics and Political Science and the University of
Bonn. His research was supported by the European Union's Horizon 2020 research
and innovation programme under grant agreement No.~757481 (ScaleOpt), the Dutch
Research Council NWO Vidi Grant 016.Vidi.189.087, and the Deutsche
Forschungsgemeinschaft (DFG, German Research Foundation), project
No.~537750605.}
\\[1ex]
Neil Olver%
\thanks{London School of Economics and Political Science, London, United Kingdom.
Email: \texttt{n.olver@lse.ac.uk}.}
\qquad
Chaitanya Swamy%
\thanks{University of Waterloo, Waterloo, Canada.
Email: \texttt{cswamy@uwaterloo.ca}. Research supported in part by an NSERC Discovery Grant.}
}

\date{}

\begin{document}

\maketitle

\begin{abstract}
We introduce a novel variant of stochastic load balancing that enables a quantitative tradeoff between the practical benefits of non-adaptive policies and their performance limitations. 
Our model describes a solution in two stages. 
In the first stage, given only job-size distributions, we reserve a set of at most $k$ machines for each job (a $k$-reservation).
In the second stage, after observing job-size realizations, we assign each job to one of its reserved machines (a consistent assignment). 
The goal is to minimize the expected makespan. 
If $k=1$, we get the standard stochastic load balancing problem of finding a non-adaptive assignment with minimum expected makespan. 
If $k$ is equal to the number of machines, then we obtain an all-powerful omniscient optimum that can tailor the assignment arbitrarily to the job-size realizations. 

\medskip

We give a number of results that quantify this tradeoff.
Most saliently, we show that in the setting of identical machines, a $2$-reservation suffices to achieve a constant-factor approximation to the omniscient optimum, establishing a ``power-of-two-choices'' result for stochastic load balancing.
We also show that this no longer holds true in the more challenging setting of related machines.
Nonetheless, we give a number of positive algorithmic results for this setting: a true $O(\log m/\log \log m)$-approximation; a bicriteria $O(1)$-approximation by reserving twice as many machines per job relative to an optimal $k$-reservation; and a $2$-reservation whose cost is within a constant factor of what the \emph{adaptive} optimum can achieve.

\end{abstract}

\medskip
\noindent\textbf{Keywords:}
Stochastic load balancing; machine reservations; approximation algorithms;
two-stage stochastic optimization; power of two choices.

\clearpage

\section{Introduction} \label{sec:intro}

Uncertainty is an unavoidable facet of various decision environments, and
a thriving area of optimization, {\em stochastic optimization}, deals with optimization under uncertainty. 
Stochastic load-balancing and scheduling problems constitute a prominent and well-studied class of stochastic-optimization problems 
(see, e.g.,~\cite{EberleGMMZ25,GoelI99,GuptaKNS21,GuptaMUX20,IbrahimpurS20,ImMP15,KleinbergRT00,MoehringSU99,Pinedo04}). 
The typical setup in {\em stochastic load balancing} involves some $m$ machines, and job sizes (a.k.a. processing times) that are random variables whose distributions are specified in the input. 
The goal is to come up with a job-to-machine assignment that minimizes the expected makespan, where {\em makespan} is the maximum load on a machine. 
Two ways of computing and specifying the assignment are considered in the literature, based on what information is available for the computation: a {\em non-adaptive} policy computes the entire job-to-machine assignment given only the distributional information and without knowing the job-size realizations, while an {\em adaptive} policy
can make sequential decisions: it can inspect jobs sequentially and choose the assignment of the currently inspected job based on the realizations of previously inspected jobs.  

An appealing aspect of a non-adaptive policy is that it is easy to implement and specify: since the uncertainty does not influence decisions at ``run-time'', i.e., when job sizes become known, there is little (logistical or computational) overhead incurred at run-time. 
Moreover, in certain settings, the certainty of knowing which machine will process which job can also be desirable for other compelling, non-computational reasons.
For instance, consider a bank or investment firm that has to decide how to assign fund managers to cater to its clients. 
Here the machines correspond to fund managers, and jobs correspond to clients. 
The amount of time needed to serve a client is subject to
uncertainty. 
Taking a client-centric perspective, the firm would like to assign each client
to a specific, dedicated fund manager (i.e., come up with a non-adaptive
policy). 
This could offer the client a more personalized experience, enhance client satisfaction, cultivate customer loyalty, and help build a long-term relationship with the client. 
Non-adaptive policies are also often easier to design and analyze than adaptive policies, partly because an adaptive policy could lead to an exponential-size decision tree.
On the flip side, being more constrained, even an optimal non-adaptive policy may incur a much larger cost than an optimal adaptive policy; that is, the non-adaptive optimum may be much larger than the adaptive optimum. 
Concretely, for stochastic load balancing: (1)~the {\em adaptivity gap}, which measures the worst-case ratio of the non-adaptive optimum to the adaptive optimum, over all instances, is $\Omega(\log m/\log\log m)$~\cite{GuptaKNS21}, even for $m$ identical machines\footnotemark;
\addtocounter{footnote}{-1}
(2)~one can devise non-adaptive policies that achieve an $O(1)$-approximation relative to the non-adaptive optimum~\cite{GuptaKNS21,KleinbergRT00}, even on unrelated
machines%
\footnote{The machine environments, identical, related, and unrelated, are defined as follows. In general, the processing time of a job can depend on the machine to which it is assigned. With {\em unrelated} machines, the processing-time vectors $\{p_{ij}\}_{i\in[m]}$ and $\{p_{ij^\prime}\}_{i\in[m]}$ of two jobs $j$ and $j^\prime$ need not have any relationship. With {\em related} machines, each machine $i$ has a
speed $s_i>0$, and we have $p_{ij}=p_j/s_i$ for every machine $i$ and job $j$; with identical machines, we have $p_{ij}=p_j$ for every $i,j$.}
\addtocounter{footnote}{-1}
(and even for more general objectives~\cite{IbrahimpurS20,Molinaro19}); and  
(3)~designing an adaptive policy that achieves a better guarantee relative to the adaptive optimum than a non-adaptive policy is open for unrelated machines, and only recently an $O(1)$-approximate adaptive policy was obtained for related machines\footnotemark~\cite{EberleGMMZ25}.

\subsection{Our Contributions} \label{contrib}
In this work, we propose and study a novel model for stochastic load balancing that allows one to quantitatively trade off the attractive features of non-adaptive policies and the associated performance loss resulting from their constrained nature.
To motivate our model, let us reconsider the investment-firm example discussed above. 
For simplicity, suppose that the uncertainty in client service times is such that each client arrives with some probability, and if it arrives, then it uses up a fixed amount of service time. 
At the beginning of the day, the firm gets to know which clients will be arriving during the day, and has to assign these clients to
fund managers. 
Suppose that there is no way of choosing dedicated managers for each client to obtain reasonably balanced loads for the managers.%
\footnote{The $\Omega(\log m/\log\log m)$ adaptivity gap holds even with
such weighted Bernoulli job-size distributions~\cite{GuptaKNS21}.}
As a means of alleviating this situation, the firm may consider incorporating a little flexibility in assigning clients to managers, in the following way. 
The firm now reserves a small dedicated {\em set of managers} for each client, and always assigns a client to one of the managers reserved for them. 
This flexibility can enable the firm to choose, at the beginning of the day
when the set of arriving clients becomes known, an assignment that meets the
above consistency constraint and better balances the loads. Provided that the set of dedicated managers for each client is ``small,'' the firm can retain a similar flavor of personalized client service. 
The size of the reserved
manager set therefore controls the trade-off between the benefits of
dedicated personalized service and the quality of load balancing. 

\subsubsection{Problem Description.}
We formalize the above type of setup via the following two-stage model, which we dub {\em two-stage stochastic load balancing with reservations} (\slbr). 
We have a machine set $M = [m]$ and a job set $\jobset$. 
In the first stage, given only the job-size distributions, we reserve, for each job $j$, a set $\mset(j)\sse[m]$ of machines with $1 \leq |\mset(j)| \leq k$, where $k$ is an input parameter. 
In the second stage, we observe the job-size realizations, and have to assign each job $j$ to a machine $\asgn(j)\in\mset(j)$. 
We refer to $\mset$ as a {\em $k$-reservation}, and say that $\asgn:\jobset\to[m]$ is a {\em $\mset$-\consistent assignment}. 
The goal is to minimize the expected makespan.  
Throughout, we use ``reservation'' to denote the first-stage decision, and ``assignment'' to denote the assignment of jobs to machines in the second stage (\consistent with the
first-stage reservation). 

Formally, given job sizes $\{p_{ij}\}_{i\in[m],j\in\jobset}$ and an assignment $\asgn : \jobset \to M$, the load on a machine $i \in M$ is $\load(i;\asgn,p) \coloneqq \sum_{j:\asgn(j)=i} p_{ij}$. 
The cost of a $k$-reservation $\mset$ for these job sizes is the minimum
makespan achievable by a $\mset$-\consistent assignment: 
\begin{equation*} \label{eq:secondstageobj} 
  \cost(\mset;p) \coloneqq \min_{\substack{\mset\text{-\consistent} \\ \text{assignment } \asgn}}
  \mksp(\asgn;p),
\end{equation*} 
where $\mksp(\asgn;p) \coloneqq \max_{i \in [m]} \load(i;\asgn,p)$.
In an instance of stochastic load balancing with reservations, the vector $\{P_{ij}\}_{i\in[m]}$ of processing times of a job $j$ on the different machines is a random vector, whose distribution is specified in the input.
We often consider the setting where jobs are independent, which means that random variables $P_{ij}$ and $P_{i^\prime j^\prime}$ are independent whenever $j\neq j^\prime$.
The goal is to choose a $k$-reservation $\mset$ to minimize
$\cost(\mset) \coloneqq \E[\cost(\mset;P)]$, where the expectation is taken over the realizations of the $P_{ij}$s. 
This corresponds to computing a good first-stage decision. We would also, of course, like to compute a $\mset$-\consistent assignment in the second stage, for each realization of the $P_{ij}$s. This becomes an instance of {\em deterministic} makespan-minimization on unrelated machines, which admits well-known $2$-approximation algorithms~\cite{LenstraST90,ShmoysT93}. We therefore focus on the first-stage computation. 

We use $\OPT_k(\I)$ to denote the optimal value for instance $\I$; we drop $\I$ when this is clear from the context. 
When $k=1$, the only $\mset$-\consistent assignment is $\mset$ itself, so the problem becomes that of computing the best non-adaptive assignment. 
Thus, $\OPT_1$ is the non-adaptive optimum. 
At the other extreme, when $k=m$, taking $\mset(j)=[m]$ for all $j$ imposes no restrictions on the second-stage assignment; so $\OPT_m=\E[\min_{\text{assignment }\asgn}\mksp(\asgn;P)]$, which we refer to as the {\em omniscient optimum}.
As $k$ increases, we increase the flexibility allowed for the second-stage assignment, and so $\OPT_k$ decreases with $k$. 
(Note that with deterministic jobs, we have $\OPT_1=\OPT_m$---we can take the optimal assignment as the $1$-reservation---so the setting
$k>1$ is relevant only with stochastic jobs.)  

\smallskip

Various interesting lines of inquiry arise in our model.
\vspace{-5pt}
\begin{enumerate}[label=$\bullet$]
\item First, there is the algorithmic question of designing good approximation algorithms for the problem. 
As noted above, \slbr with $k=1$ corresponds to obtaining a good approximation relative to the non-adaptive optimum considered in prior work~\cite{GuptaKNS21,KleinbergRT00}, so we consider $k\geq 2$. 

\smallskip

\item Another pertinent question that we explore is: {\em how much do we stand to lose by constraining ourselves to use \consistent (second-stage) assignments?} 
Since $\OPT_m$ and $\OPT_k$ are, respectively, the best expected makespan one can achieve without any restrictions and with the restriction that the assignment be \consistent with a $k$-reservation, this is formally the {\em analysis} question of determining the ratio $\OPT_k(\I)/\OPT_m(\I)$ for a specific instance $\I$, or its worst-case value across all instances. 
We coin the term {\em price of $k$-reservation} to refer to the worst-case ratio. 
Note that this quantity decreases with $k$, and the price of $m$-reservation is $1$ by definition.

A low price of $k$-reservation for a small value of $k$ has the appealing interpretation that a little flexibility goes a long way. 
Instead of fixing the job-to-machine assignment beforehand, if we only require that each job be assigned to one of $k$ preselected machines for that job, then one can alleviate the performance loss faced by non-adaptive assignments and get close in performance to a solution that is allowed to arbitrarily tailor the job-assignment in the second stage to the job-size realizations. 

\smallskip

\item Our two-stage reservation-based model and adaptive policies can both be viewed as enhancements of non-adaptive policies geared toward obtaining better-quality solutions, albeit of a very different nature. 
An adaptive policy has {\em partial information} about job-size realizations when assigning jobs, knowing the realizations of previously assigned jobs, but the job-assignment is {\em unconstrained}. 
In a reservation-based algorithm, by contrast, the assignment must be {\em \consistent with the reservation}, which is chosen before any job sizes are realized, but one has {\em full information} about job-size realizations when determining the assignment. 

Although this is somewhat of an apples-to-oranges comparison, we can ask about the relative effectiveness, or power, of these two models, by analyzing the ratio of $\OPT_k$ and the adaptive optimum, $\OPTAD$. 
This ratio is always bounded by the price of $k$-reservation; but are there settings where the price of $k$-reservation is large, yet $\OPT_k/\OPTAD$ is small, showing that $k$-reservations are nevertheless as powerful as adaptive policies, loosely speaking? 
If $\OPT_k/\OPTAD$ is small (say $O(1)$) for a small value of $k$ (e.g., $k=2$), then the upshot is that one can retain, to an extent, the benefits of a non-adaptive assignment, and at the same time reap the performance gains of an adaptive policy. 
\end{enumerate}

\subsubsection{Our Results.}
We obtain especially crisp answers to the above questions for the setting of identical (a.k.a. parallel) machines, where $P_{ij}=P_j$ for every machine $i$ and job $j$. 
Strikingly, we show that $2$-reservations are already powerful enough to yield an $O(1)$-approximation to $\OPT_m$. 

\begin{theorem} \label{thm:iden2options}
For any $\slbr$ instance on identical machines with $k \geq 2$, we can efficiently compute a random $2$-reservation $\mset$ such that $\E[\cost(\mset)] = O(\OPT_m)$. 
\end{theorem}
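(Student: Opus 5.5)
We will compare against standard lower bounds on $\OPT_m$ and build the $2$-reservation from a random "power-of-two-choices" construction. For identical machines, the omniscient optimum satisfies two bounds.
\begin{itemize}
\item $\OPT_m \ge \E[\max_j P_j]$.
\item $\OPT_m \ge \E[\sum_j P_j]/m$.
\end{itemize}
These are too weak alone, since the adaptivity-gap instances show that $\E[\max_i \sum_{j\to i} P_j]$ for a fixed assignment can be $\Omega(\log m/\log\log m)$ times them. So we first guess (up to a factor $2$, by trying powers of $2$ and taking the best, or via a truncation-based estimate) a threshold $T$ with $T = \Theta(\OPT_m)$. We then split each job into a truncated part $P_j' = \min(P_j,T)$ and an exceptional part $P_j\,\bone[P_j > T]$. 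Since $\E[\max_j P_j] \le \OPT_m$, a Markov-type argument gives $\sum_j \Pr[P_j > T] = O(1)$ when $T$ is a large constant times $\OPT_m$. A refinement is that $\E[\sum_j P_j\bone[P_j>T]] = O(\OPT_m)$ fails in general; we only have that the expected maximum of the large parts is small.

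\textbf{Step 1 (small parts).} We compute effective-size bounds in the Kleinberg–Rabani–Tardos sense. The key quantity is $\E[\sum_j P_j']$, which we would like to bound by $O(m\,\OPT_m)$. Since the sum of truncated sizes is at most $m$ times the omniscient makespan plus $m T$-type terms, we get $\E[\sum_j P'_j]\le m(\OPT_m + T)$. The reservation is then: for each job $j$, choose two machines independently and uniformly at random. In the second stage, we assign greedily or via an orientation argument, each job to the less loaded of its two machines. The main tool is a balls-into-bins result with weighted balls of weight at most $T$. For the two-choice process with weights in $[0,T]$ and total expected weight $\le m(\OPT_m+T)$, the maximum load is $\text{average} + O(T\log\log m)$. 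This is not good enough: we need $O(T)$, not $O(T\log\log m)$.

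\textbf{Step 2 (the fix, and the main obstacle).} Because the second stage is offline (full information), we are not restricted to the greedy sequential process. The best orientation of the random graph (machines as vertices, jobs as edges) achieves maximum load $O(\text{average}+T)$ with high probability, by the known results on offline orientability of random (multi)graphs, where max load is $\lceil \text{avg}\rceil+1$ whp. So the plan is as follows.
\begin{itemize}
\item Bucket the jobs by truncated size class $[T/2^{\ell+1},T/2^\ell]$.
\item Apply the offline orientation bound per class, obtaining class-$\ell$ load $O(\text{avg}_\ell + T/2^\ell)$.
\item Sum over classes to get $O(\text{avg}+T)$.
\end{itemize}
This must hold in expectation, and the tail of the realized average must be controlled. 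Since the average is a sum of independent bounded variables, Chernoff bounds give $\E[\text{avg}] = O(\OPT_m + T)$. Making the weighted offline orientation bound rigorous in expectation (rather than whp for unweighted balls) is the step I expect to be hardest. I would first try a Hall-type/max-flow argument: a fractional orientation with load $L$ exists iff every machine set $S$ has weight of jobs with both choices in $S$ at most $L|S|$. I would then bound, by a union bound over sets $S$ of each size $s$, the probability that the weight inside $S$ exceeds $C(\text{avg}\cdot s + T)$ using $(s/m)^2$ edge probabilities. Finally, rounding the fractional orientation costs an additive $T$.

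\textbf{Step 3 (large parts).} Jobs with $P_j>T$ number $O(1)$ in expectation. More strongly, their realized count $N$ has $\E[N\cdot\text{stuff}]$ controlled by independence. In the orientation, these jobs are given priority: each is placed on one of its two machines, with total added makespan at most $\max_j P_j$ times the maximum number of large jobs sharing a machine. With two random choices and few large jobs, this multiplicity is $O(1)$ in expectation. Combining with $\E[\max_j P_j]\le\OPT_m$ via independence between the large jobs' sizes and the random reservation, we get $O(\OPT_m)$. Finally, randomness of $\sigma$ is the reservation's randomness, so the bound is on $\E[\cost(\sigma)]$, as the statement requires.
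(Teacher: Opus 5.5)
Your Step~2 already contains the paper's whole proof: the max-flow/Hall characterization of fractional orientations (Lemma~\ref{lem:density}), a rounding step, and a union bound over machine sets $S$ using that a job lands inside $S$ with probability $(|S|/m)^2$. Two things go wrong around it.

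First, the per-set threshold you propose, $C(\mathrm{avg}\cdot s+T)$, cannot be supported by a union bound, because its $T$-term does not grow with $s$. Take $\sqrt m$ jobs of size $T$, so $\mathrm{avg}=T/\sqrt m$. The random edges are then essentially vertex-disjoint. The endpoints of $2C$ of them form a set of size $4C$ containing weight $2CT > C(4C/\sqrt m+1)T$, and this happens with constant probability. What you actually need, for density $O(\mathrm{avg}+T)$, is the threshold $C(\mathrm{avg}+T)\,s$. After scaling by $\max(T,\mathrm{avg})$ this is exactly the paper's computation: sizes lie in $[0,1]$, the expected weight inside $R$ is at most $|R|^2/m$, the threshold is $\euler^2|R|$, and $\sum_\ell \binom{m}{\ell}(\euler m/\ell)^{-\euler^2\ell}/\ell=O(1)$.

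Second, and more seriously, the truncation at $T$ and Step~3 do not work as stated.
\begin{enumerate}
\item Deducing $\sum_j\Pr[P_j>T]=O(1)$ from $\E[\max_jP_j]\le\OPT_m$ requires independent jobs. Theorem~\ref{thm:iden2options} holds for arbitrarily correlated jobs. For example, let all $m$ jobs equal one variable that is $A$ with probability $\epsilon$ and $0$ otherwise. Then $\OPT_m=\epsilon A$, but $\epsilon m$ jobs exceed $T=C\epsilon A<A$ in expectation.
\item Even under independence, bounding $\E[\max_jP_j\cdot K]$ needs more than independence of the sizes from $\sg$. The multiplicity $K$ depends on which jobs are large, so it is correlated with $\max_j P_j$.
\item Placing $\min(P_j,T)$ and $P_j\bone[P_j>T]$ by separate rules does not yield an assignment, since the two parts of one job may go to different machines.
\end{enumerate}

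All of this is avoidable. Your premise that $\E[\max_jP_j]$ and $\E[\sum_jP_j]/m$ are too weak is mistaken. They are too weak against a fixed assignment, but the random $2$-reservation with a full-information second stage matches them realization by realization. Run your Step~2 argument on the untruncated realized sizes $p$, with $p_{\max}$ in the role of $T$. This gives $\E_\sg[\cost(\sg;p)]=O(\max(p_{\max},p_{\mathrm{total}}/m))$ for every fixed $p$, which is Theorem~\ref{thm:twochoicessuffice}. Since $\max(P_{\max},P_{\mathrm{total}}/m)$ lower-bounds the omniscient makespan in every realization, exchanging expectations gives $\E[\cost(\sg)]=O(\OPT_m)$. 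This route needs no threshold guessing, no large-job case, and no independence assumption. Note also that $T$ never enters the algorithm, since the reservation ignores it.
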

Thus, we simultaneously obtain {\em an $O(1)$-approximation algorithm, for any $k$}, and that {\em the price of $k$-reservation is $O(1)$, even for $k=2$}.\footnote{The price of $1$-reservation for $m$ identical machines is $\Theta(\log m/\log \log m)$. The upper bound follows from reserving a single machine independently and uniformly for each job. The lower bound is realized by an instance with $m^2$ i.i.d. $\Ber(1/m)$ jobs for which $\OPT_m$ and $\OPTAD$ are both $\Theta(1)$ while $\OPT_1$ is $\Theta(\log m/\log \log m)$.} 
The latter statement can be viewed as a ``power-of-two-choices result'' for stochastic load balancing, showing that allowing even $2$ preselected machines enables one to obtain substantially smaller expected makespan.
The guarantee in \Cref{thm:iden2options} in fact holds for a random $2$-reservation which can be computed without knowing the underlying distributions, and so this guarantee holds even with {\em arbitrarily correlated jobs}. 
Note that the latter setting can model an oblivious adversary, who can choose arbitrary job sizes but does not know the random bits used to compute the $2$-reservation. 
\Cref{thm:iden2options} is proved in \Cref{sec:identical}. 

\smallskip

We next consider the more general setting with related machines. 
Each machine $i \in M$ has a speed $s_i > 0$, and each job $j \in \jobset$ has a random size $P_j$. 
The processing time of job $j$ on machine $i$ is $P_{ij} = P_j/s_i$.  
All our results for related machines assume that the $P_j$s are mutually independent. 
In stark contrast with identical machines, the price of $k$-reservation is no longer bounded by a constant for any constant value of $k$. 

\begin{restatable}{theorem}{relporlbthm} \label{relporlb-thm} 
For every positive integer $k$ and every $m = \exp(\Omega(k \log k))$, there is an instance of $\slbr$ on $m$ related machines for which $\OPT_k/\OPT_m = \Omega(\sqrt[2k]{m}/k)$. 
In particular, for any $k=o(\log m/\log\log m)$, the price of $k$-reservation is unbounded for sufficiently large $m$.
\end{restatable}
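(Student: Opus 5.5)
The plan is to build a multi-scale instance in which the omniscient optimum places every realized job on a machine of the right speed, whereas any reservation of at most $k$ machines per job is forced to leave some job without an adequate machine at some scale. Set $q \coloneqq \lfloor m^{1/(2k)}\rfloor$. Use $k+1$ speed classes $L_0,\dots,L_k$, where class $L_\ell$ has roughly $q^{2(k-\ell)}$ machines of speed $q^{\ell}$; the faster classes are exponentially smaller. The jobs are independent weighted Bernoulli jobs, and all jobs are i.i.d. so that the structure stays symmetric: each job $j$ is present with some small probability $p$, and when present its size is drawn from $q^{0},q^{1},\dots,q^{k}$, with size $q^{\ell}$ having probability proportional to the number of class-$L_\ell$ machines divided by the total expected count. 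The number of jobs $N$ and $p$ are chosen so that, for every $\ell$, the expected number of realized size-$q^\ell$ jobs is about $|L_\ell|$.

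The upper bound on $\OPT_m$ comes first. In a typical realization, the omniscient assignment sends each realized size-$q^\ell$ job to its own machine in $L_\ell$, giving load $O(1)$. Chernoff bounds on the count at each level, together with a fallback of spreading overflow jobs onto slower classes, which have many more machines, bound the expected makespan by $O(1)$.

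The lower bound on $\OPT_k$ uses pigeonhole. Each job reserves at most $k$ machines but there are $k+1$ classes, so every job $j$ has a class $\ell(j)$ containing none of its reserved machines. When $j$ realizes with size $q^{\ell(j)}$, it must go either to a slower machine, paying a factor at least $q$ in processing time, or to a faster machine in $L_{\ell'}$ with $\ell' > \ell(j)$. The faster classes are a $q^{-2}$ fraction of the size of $L_{\ell(j)}$, and each fast machine is reserved by only a bounded share of the jobs. Averaging over jobs and levels, I expect to find a level $\ell$ with at least $N/(k+1)$ jobs having $\ell(j)=\ell$; this is where the factor $1/k$ in the bound enters. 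The realized size-$q^\ell$ jobs among these either pile up, about $q^2/k$ of them, on the few faster machines, or each pays the factor $q$ on a slower machine. Either way some machine gets load $\Omega(q/k)$ with constant probability. This gives $\OPT_k = \Omega(q/k) = \Omega(\sqrt[2k]{m}/k)$. The condition $m=\exp(\Omega(k\log k))$ ensures $q \gg k$, so the class sizes are integral and the concentration bounds hold.

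The main obstacle is the second branch of the lower bound, and it is where I would spend the most effort. Reserved machines differ from job to job, and even the faster machines have limited capacity, so I must show that the realized jobs at level $\ell$ whose reservations avoid $L_\ell$ cannot be spread out cheaply. My first attempt is a load-counting argument. For each machine, compute the expected realized work among the jobs that reserve it. Then show that the total capacity $\sum_{i\in L_{>\ell}} s_i$ of the faster classes is a factor about $q$ smaller than the expected volume $\approx |L_\ell| q^\ell$ of these jobs, while slower machines each add a factor $q$ to every job they process. Turning this expectation-level statement into a constant-probability bound on the makespan needs a Chernoff bound combined with a union bound over the levels. If this fails, I would tune the class sizes, using ratios $q^2$ versus $q$, to rebalance the two branches, accepting the $\sqrt[2k]{\cdot}$ exponent that such tuning produces.
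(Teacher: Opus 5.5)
Your proposal is correct and follows the paper's proof. The instance is the same (your $L_\ell$ is the paper's $C_{k+1-\ell}$), and $\OPT_m=O(1)$ follows from per-class concentration. The lower bound uses the same pigeonhole over the $k+1$ classes and the same dichotomy: one job on a slower machine costs $q$; otherwise a realized volume $\Omega(|L_\ell|q^\ell/k)$ must fit on faster classes of total speed $O(|L_\ell|q^{\ell-1})$. That comparison holds deterministically in each realization at the single level fixed by the reservation, so you need no per-machine expectations and no union bound over levels.

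Three small fixes:
\begin{itemize}
\item Drop the ``fallback'' of moving overflow to slower classes in the $\OPT_m$ bound. Each moved job would have load $q$, and at the single-machine class overflow occurs with constant probability. Just keep each size class on its own machines.
\item Drop the claim that each fast machine is reserved by only a bounded share of jobs. It is false in general, and the volume argument does not need it.
\item For the fastest class the bad event has probability only $\Omega(1/k)$, not a constant. This still gives expected cost $\Omega(q/k)$.
\end{itemize}
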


Complementing this, we show that the price of $k$-reservation is a constant already when $k \approx \log m$. 

\begin{restatable}{theorem}{relporubthm} \label{relporub-thm} 
For any instance of $\slbr$ on $m$ related machines, the price of $\Omega(\log m)$-reservation is $O(1)$. 
\end{restatable}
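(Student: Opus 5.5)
The plan is to reduce to identical machines by bucketing machines into speed classes, and then reuse the oblivious random reservation of \Cref{thm:iden2options} inside each class. Let $s_1\ge s_2\ge\dots\ge s_m$ be the speeds, and let $T^*(P)$ be the omniscient makespan for realization $P$, so that $\OPT_m=\E[T^*(P)]$.

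\emph{Step 1: discard slow machines.} Let $\bar M=\{i: s_i\ge s_1/m\}$. Fix a realization and an optimal omniscient assignment. The jobs it places on machines outside $\bar M$ have total size at most $m\cdot (s_1/m)\cdot T^*(P)=s_1T^*(P)$. Moving all of them onto machine $1$ raises its load by at most $T^*(P)$. So we always reserve machine $1$ for every job, and from now on we pretend that only machines in $\bar M$ exist, at the cost of a factor $2$.

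\emph{Step 2: speed classes.} Partition $\bar M$ into classes $C_\ell=\{i: s_1 2^{-\ell-1}<s_i\le s_1 2^{-\ell}\}$ for $\ell=0,\dots,\lceil\log_2 m\rceil$, which gives $O(\log m)$ nonempty classes. For every job $j$ and every class $C_\ell$, we reserve a set $R_\ell(j)$ of $r=O(1)$ machines of $C_\ell$, chosen independently and uniformly at random. This is the oblivious reservation underlying \Cref{thm:iden2options}, possibly with a constant number $r\ge2$ of choices instead of exactly $2$. Together with machine $1$, each job gets $O(\log m)$ reserved machines.

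\emph{Step 3: the second-stage assignment.} For each realization $P$, let $J_\ell(P)$ be the set of jobs that the omniscient optimum places in $C_\ell$. Class $\ell$ then defines an identical-machines instance on $|C_\ell|$ machines whose jobs are $J_\ell(P)$ with sizes $P_j$ (and $0$ for the other jobs). These sizes are arbitrarily correlated with each other, because $J_\ell(P)$ depends on all of $P$. However, they are independent of our random reservation, which is exactly the setting in which \Cref{thm:iden2options} applies. In this sub-instance the omniscient makespan, measured in time at speed $s_12^{-\ell}$, is at most $2T^*(P)$, since speeds within a class differ by a factor of at most $2$. We assign the jobs of $J_\ell(P)$ using the best $R_\ell$-consistent assignment within $C_\ell$.

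\emph{Step 4: combining the classes (main obstacle).} Let $X_\ell$ be the makespan obtained in class $\ell$. The total cost is then $O(\max_\ell X_\ell)$ plus the slow-machine term. \Cref{thm:iden2options} only bounds each $\E[X_\ell]$ by $O(\OPT_m)$, and summing over $\Theta(\log m)$ classes loses a factor $\log m$. So I need a \emph{tail} version of the identical-machine result. Concretely, I would aim to show that for every fixed realization of sizes, every $t\ge1$, and a suitably large constant $r$,
\[
\Pr_{R}\bigl[X_\ell > c\,t\,T^*(P)\bigr]\le (m\,2^{t})^{-2},
\]
where $\Pr_R$ is taken over the random reservation. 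I would get this by revisiting the balls-into-bins/power-of-choices analysis that proves \Cref{thm:iden2options}. That analysis views the reservation as a random $r$-uniform hypergraph on the machines of the class. The second-stage assignment is then a flow (orientation) problem, and a Hall-type condition fails only if some small set of machines spans too many jobs' choice sets. Taking $r$ a large enough constant makes the probability of such dense sets at most $m^{-2}$, and each extra factor $t$ in the load threshold multiplies the failure probability by another geometrically small factor. A union bound over the $O(\log m)$ classes and integration over $t$ then give
\[
\E\bigl[\max_\ell X_\ell \,\big|\, P\bigr]=O(T^*(P)).
\]
Taking expectation over $P$ yields $O(\OPT_m)$.

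If the high-probability tail bound turns out to be out of reach with constant $r$, the fallback is to use $r=\Theta(\log m)$ random choices in only $O(1)$ classes. For this I would merge classes geometrically: a job of class $\ell$ is allowed to be sent to the faster class $\ell-1$, where it costs half as much. The hope is that the loads then telescope into a geometric series rather than a plain sum over classes.
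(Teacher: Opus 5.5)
Steps 1--3 are fine. Conditioning on $P$ and sending each job to the class where the omniscient optimum places it gives, for each fixed realization, per-class identical-machine instances whose sizes are independent of the reservation. So Theorem~\ref{thm:twochoicessuffice} bounds each $\E_R[X_\ell]$.

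The gap is Step 4, and the claim there is actually false, not just unproven. The bound $\Pr_R[X_\ell>c\,t\,T^*(P)]\le(m2^t)^{-2}$ fails for constant $r$ and $c$, because the failure probability in a class depends on $|C_\ell|$, not on $m$. Suppose $C_\ell$ consists of $n$ equal-speed machines, each carrying one job of load $T^*(P)$ in the optimum. With probability $n^{1-rn}$, all $rn$ choices of these jobs land on one machine, and then $X_\ell=n\,T^*(P)$.

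Your dyadic bucketing allows $\Theta(\log m)$ such classes. Take $K$ classes, class $\ell$ having $n$ machines of speed $2^{-\ell}$ and $n$ jobs of size $2^{-\ell}$ for $0\le\ell<K$. Add $2^K$ machines of speed $2^{-K}$ and $2^K$ jobs of size $2^{-K}$. Then $K\ge\log_2 m-1$, Step 1 discards nothing, $T^*=1$, and the optimal assignment is unique up to permutations within classes. Choose $n$ maximal with $n^{rn-1}\le K$, i.e., $n=\Theta(\log\log m/(r\log\log\log m))$. Then some class fails with probability at least $1-1/\euler$, so your within-class second stage has expected makespan $\Omega(\log\log m/\log\log\log m)$. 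Hence $\E[\max_\ell X_\ell\mid P]=O(T^*(P))$ is false for your scheme. The fallback in your last paragraph is not developed enough to evaluate.

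The missing ingredient is smoothing (Definition~\ref{def:smoothedmachines} and Lemma~\ref{lem:optundersmoothing}). At an $O(1)$ loss and with $O(m^2)$ machines, it makes class processing powers grow geometrically, so $|C_\ell|\ge 2^{\ell-1}$ and $L=O(\log m)$. In the example above, the small intermediate classes disappear.

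With smoothing, $r=2$ suffices and no $m^{-2}$ bound is needed. In the union bound over $R\subseteq C_\ell$, the expected excess over $\euler^2\gamma|R|$ contributes at most $1/|C_\ell|$ per class (Lemma~\ref{boundedgamma}), and $\sum_\ell 1/|C_\ell|\le 2$.

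The paper's route on top of this is as follows:
\begin{itemize}
\item It routes jobs by realized size: a job with $\s_{\ell+1}\le P_j<\s_\ell$ goes to $C_\ell$.
\item It gets the per-class expected volume bounds from Lemma~\ref{lem:multiassignmentcostlb} applied to the trivial reservation $\sigma^*(j)=M$.
\item It controls the random per-class volumes by Chernoff over $P$ (Lemma~\ref{alphaconcentration}), again using $|C_\ell|\ge 2^{\ell-1}$.
\item It concludes with Lemma~\ref{lem:twochoicesperclass} for $\beta=1$.
\end{itemize}

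Your realization-wise routing would also work once smoothing is added. For fixed $P$, the optimum on the smoothed machines puts volume at most $O(T^*(P))\,|C_\ell|\,\s_\ell$ on $C_\ell$, with each job of size at most $O(T^*(P))\,\s_\ell$. So Lemma~\ref{boundedgamma} with $\gamma=1$ applies directly. This skips the Chernoff step over $P$ and does not even need independent jobs.
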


The above two theorems are proved in \Cref{sec:related-por}.  
Next, we turn to designing approximation algorithms.  Related machines pose substantial technical challenges. 
In \Cref{sec:overview}, we illustrate with a simple example how two simple strategies fail to yield constant-factor approximations for \slbr on related machines. 
We overcome some of these challenges by developing suitable machinery for obtaining three different flavors of approximation guarantees. 
First, we use a linear-programming-based approach combined with standard concentration inequalities to obtain a true approximation guarantee. 

\begin{restatable}{theorem}{rellogthm} \label{rellog-thm} 
There is an $O(\log m/\log \log m)$-approximation algorithm for instances of \slbr with $m$ related machines. 
\end{restatable}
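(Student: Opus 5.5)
We will compare against $\OPT_k$ via a linear relaxation and round the fractional solution to a genuine $k$-reservation. First, we guess $T$, an estimate of $\OPT_k$ within a constant factor; the guessing is done over powers of $2$. Following the standard truncation approach for stochastic makespan (as in~\cite{KleinbergRT00,GuptaKNS21}), we split each job's size on machine $i$ into a truncated part $P_j\bone[P_j\le s_iT]/s_i$ and an exceptional part $P_j\bone[P_j> s_iT]/s_i$.

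The LP has variables $x_{ij}\in[0,1]$, meaning that $i$ is reserved for $j$, and variables $y_{ij}$, the fractional probability that $j$ is actually assigned to $i$. It has the following constraints.
\begin{itemize}
\item $\sum_i x_{ij}\le k$ and $\sum_i y_{ij}\ge 1$ with $y_{ij}\le x_{ij}$.
\item An expected-load constraint $\sum_j y_{ij}\,\E[\min(P_j,s_iT)]/s_i\le O(T)$ for each machine $i$.
\item An exceptional-mass constraint $\sum_{i,j} x_{ij}\Pr[P_j> s_iT]\le O(1)$.
\end{itemize}
Because machines are related, a job's exceptional event shrinks as speed grows. So the last constraint can be stated per job relative to its fastest reserved machine: $\Pr[P_j>s_{\max}T]$ summed over jobs is $O(1)$, since in any realization the omniscient optimum handles such jobs only if their size is below $s_{\max}\cdot\OPT$.

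We show feasibility of this LP for $T=\Theta(\OPT_k)$ by averaging the optimal reservation $\mset^*$ and its optimal consistent assignment over realizations. Set $x_{ij}=\bone[i\in\mset^*(j)]$ and $y_{ij}=\Pr[\asgn^*(j)=i]$. The expected truncated load on each machine is at most $\E[\min(\mksp,T)]$-type quantities, bounded by Markov-type arguments. This verification, especially the expected-load constraint, is where the main care is needed. The assignment $\asgn^*$ depends on the realizations, so $y_{ij}$ correlates with $P_j$. We therefore write the constraint as $\E\bigl[\sum_j \bone[\asgn^*(j)=i]\min(P_j,s_iT)/s_i\bigr]\le \E[\min(\mksp,\cdot)]$, which does hold. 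The rounding then has to use the stronger realization-dependent form. To handle this we plan to use an LP that is weaker but valid. We exploit the fact that with related machines the load bound can be grouped by speed classes, $\sum_{i:s_i\ge s}$ of the loads.

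Given a fractional solution, we round $y$ (ignoring the flexibility of $k>1$) to an integral single assignment. We use Shmoys--Tardos/GAP rounding on the truncated expected sizes, getting each machine's expected truncated load $O(T)$ plus one extra job of truncated size at most $T$. This yields a $1$-reservation, hence a valid $k$-reservation. On each machine, the truncated load is a sum of independent variables bounded by $T$ with mean $O(T)$. By a Chernoff bound and a union bound over $m$ machines, the expected maximum is $O(T\log m/\log\log m)$. Exceptional jobs contribute at most the sum of their sizes over the assigned speeds, and in expectation this is $O(T)$ by the exceptional-mass constraint. This gives $\E[\cost]=O(\frac{\log m}{\log\log m})\OPT_k$.
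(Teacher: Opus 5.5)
\textbf{Gap: a $1$-reservation cannot be competitive with $\OPT_k$.} Your rounding discards the reservation and rounds $y$ to a single assignment, so the output is a $1$-reservation. For $k\ge 2$, no $1$-reservation can be within $O(\log m/\log\log m)$ of $\OPT_k$. Take the instance of Lemma~\ref{lem:adaptvstworeslb}: one machine of speed $\sqrt m$, $m$ unit-speed machines, and $n\gg m$ i.i.d.\ jobs with $P_j=m/n+\sqrt m\cdot\Ber(1/n)$.
\begin{itemize}
\item Reserving the fast machine plus one slow machine per job, with about $n/m$ jobs per slow machine, gives $\OPT_2=O(1)$.
\item Any $1$-reservation either puts $n/2$ jobs on the fast machine, giving deterministic load at least $\sqrt m/2$, or puts $n/2$ jobs on slow machines, one of which realizes $\sqrt m$ with constant probability. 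So $\OPT_1=\Omega(\sqrt m)$.
\end{itemize}

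\textbf{Why your analysis does not catch this.} Your LP is satisfied here with $T=2$: take $x$ from that $2$-reservation and $y_{ij}=\Pr[\asgn^*(j)=i]$, so $y_{ij}=1/n$ for the fast machine. The accounting therefore has to break, and it breaks at the exceptional step. Your constraint bounds $\sum_{i,j}x_{ij}\Pr[P_j>s_iT]$, which is a probability and is about $1$ here. It does not bound the expected exceptional volume $\sum_j\E[P_j\bone(P_j>s_{\asgn(j)}T)]/s_{\asgn(j)}$ of the rounded assignment. That volume is about $\sqrt m$ for any single assignment that keeps the fast machine's truncated load $O(T)$.

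If you instead impose an exceptional-volume constraint weighted by the marginals $y_{ij}$, the LP stops lower-bounding $\OPT_k$. The products $y_{ij}\E[\cdot]$ decouple the assignment from the realization, which is exactly the correlation issue you flagged. The resulting LP is essentially a relaxation of the $k=1$ problem and is infeasible on this instance for $T=o(\sqrt m)$. The promised ``weaker but valid LP'' is never specified, and no LP can rescue a rounding whose output is a $1$-reservation.

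\textbf{The missing idea.} The relaxation must use only first-stage variables, and the rounding must keep several reserved machines so that the second-stage machine can depend on the realized size. The paper does this as follows.
\begin{itemize}
\item It smooths the machines into geometric speed classes $C_1,\dots,C_L$ with $L=O(\log m)$.
\item It uses variables $x_{j,K}$ for class sets $K$ with $|K|\le k$.
\item Relative to $K$, it splits $P_j$ into a super-exceptional portion $P_{j,K,0}$ and $\ell$-truncated portions $P_{j,K,\ell}$, supported on $[\s_{\nxt(\ell;K)},\s_\ell)$. Any cheap assignment must place the latter on a class of $K$ at least as fast as class $\ell$.
\item Lemma~\ref{lem:multiassignmentcostlb} shows that volume constraints on these quantities hold for every reservation of cost at most $1/80$. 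These constraints depend only on the reservation, which sidesteps the correlation problem entirely.
\item Iterative rounding (Theorem~\ref{thm:iterativerounding}, using bounded column sums coming from $|C_\ell|\ge 2^{\ell-1}$) gives an integral class reservation.
\item Lemma~\ref{lem:onechoiceperclass} picks one machine per reserved class.
\item Lemma~\ref{lem:costofonechoice} routes each $\ell$-truncated portion to the reserved class-$\ell$ machine and the super-exceptional portion to the fastest reserved machine.
\end{itemize}
Your Chernoff/union-bound step is the right final ingredient, but it has to be applied to a genuine multi-class reservation of this form, not to a single assignment.
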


In the second approach, we allow ourselves twice as many reservations per job and obtain a bicriteria approximation. 

\begin{restatable}{theorem}{relbicriteriathm} \label{relbicriteria-thm} 
For any \slbr instance on $m$ related machines with $k \geq 1$, we can efficiently compute a random $2k$-reservation $\mset$ with $\E_{\mset}[\cost(\mset)] = O(\OPT_k)$. 
\end{restatable}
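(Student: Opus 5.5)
We plan to prove the statement by writing down an LP relaxation of $\OPT_k$, solving it, and rounding its solution randomly so that each job gets $2k$ machines. The rounding will make every job's reserved set contain a random ``fast'' machine together with some spread-out machines from its fractional support, and we will argue that this set has enough capacity in every realization.

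\textbf{Step 1 (guessing and normalisation).} By standard doubling over a polynomial range, we may assume we know $T$ with $\OPT_k \le T \le 2\OPT_k$. We split each $P_j$ into a truncated part $P_j \bone[P_j \le T s_{\max}]$ and an exceptional part; the exceptional part is handled by the usual argument from prior work~\cite{GuptaKNS21,KleinbergRT00} that contributions above the threshold can be charged to $T$. We also group machines into speed classes of geometrically decreasing speeds, and discard machines whose speed is below $s_{\max}/m$, which costs only a constant factor.

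\textbf{Step 2 (LP relaxation).} Fix an optimal $k$-reservation $\mset^*$. For each speed threshold $\ell$, consider the set $F_\ell$ of the fastest $\ell$ machines and the set $S_\ell$ of jobs whose reservation lies entirely within $F_\ell$. Every such job must run inside $F_\ell$, so $\E[\cost]$ is at least the expected value of $\sum_{j\in S_\ell} P_j / \sum_{i\in F_\ell} s_i$, and for each single machine at least $\E[P_j]/s_i$-type bounds hold. We would write the LP in variables $x_{ij}$, the fraction of job $j$ reserved on machine $i$, with $\sum_i x_{ij} = k$ and $x_{ij}\le 1$. Its constraints would be effective-size (Kleinberg--Rabani--Tardos style) load constraints: for each prefix $F_\ell$ and each scale, the total effective size of jobs, weighted by how much of their mass lies within $F_\ell$, must be at most $O(T)\cdot\sum_{i\in F_\ell} s_i$. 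We must check that the $\mset^*$ indicator is feasible; this will use the known fact that a set of independent jobs whose sum exceeds $T$ times the capacity with large probability forces expected makespan above $T$.

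\textbf{Step 3 (rounding).} For each job, we would sample $2k$ machines: $k$ machines by dependent rounding of $x_{\cdot j}$ (so each machine is chosen with marginal $x_{ij}$), plus $k$ additional machines drawn from the fastest machines in proportion to speed. Conceptually, the second half emulates the identical-machine power-of-two-choices argument of \Cref{thm:iden2options} within each speed class.

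\textbf{Step 4 (bounding the cost).} In any realization, we bound $\cost(\mset;P)$ via a fractional flow, or Hall-type, argument: the cost is at most $O(\max_\ell \text{load of } S_\ell / \text{capacity}(F_\ell) + \max_j P_j/s_{\text{fastest reserved}})$ up to constants, using the LST rounding. Taking expectations and applying Chernoff bounds over the independent jobs to the prefix loads, the LP constraints give $O(T)$.

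\textbf{Main obstacle.} The key difficulty is controlling the maximum over all prefixes $\ell$ without losing a $\log m/\log\log m$ factor. We would try to argue that the relevant prefixes form a geometric sequence of capacities, so the tail probabilities decay geometrically and sum to $O(1)$. The doubled number of reservations is exactly what buys this slack: the extra $k$ choices ensure that overflow in one speed class can be redistributed to a class of comparable total capacity.
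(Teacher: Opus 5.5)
There is a genuine gap, and it sits where you locate the main obstacle. The bound you assert in Step~4 is false. That bound says $\cost(\sigma;P)$ is at most a constant times the largest prefix ratio (volume of jobs reserved inside $F_\ell$)/(capacity of $F_\ell$), plus an exceptional term. But the flow/Hall condition of Lemma~\ref{lem:density} ranges over \emph{all} nonempty $R\subseteq M$, and prefix ratios cannot see how jobs are spread among the identical machines of one class.

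Here is a concrete failure. Take one machine of speed $m^{0.01}$, $m$ unit-speed machines, $m$ deterministic unit jobs, and $k=1$, so $\OPT_1=1$. Your prefix constraints are satisfied by $x_{ij}=1/m$ on every slow machine. Dependent rounding then gives each job one uniformly random slow machine, and your extra machine is the fast one. Every prefix ratio is $O(1)$ with high probability. However, the slow class is a balls-into-bins process: the overflow above any level $\tau$ has volume about $m/(\tau+1)!$, while the fast machine absorbs only about $\tau m^{0.01}$. So the cost is $\Omega(\log m/\log\log m)$. Drawing the extra machines proportionally to speed over all of $M$ fails similarly for a middle class: the samples land in a slower class of much larger processing power that is too slow for class-$\ell$ realizations. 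This is exactly the one-choice-per-class loss of Lemma~\ref{lem:costofonechoice}.

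The missing idea is that the doubling must happen \emph{inside each reserved class}. For every $\ell\in\psi(j)$ the paper samples two independent uniform machines of $C_\ell$. It then splits $j$ into its $\ell$-truncated pieces $(j,\ell)$, at most one of which is nonzero in any realization, and gives piece $(j,\ell)$ the $2$-reservation $\sigma(j)\cap C_\ell$. Lemma~\ref{lem:density}(ii),(iii) then bounds the truncated cost by twice a maximum density over subsets of a \emph{single} class. This also avoids the factor $\max_j|\sigma(j)|=2k$ that Lemma~\ref{lem:density}(ii) would cost if applied directly. Because $\Pr[\sigma(j)\cap C_\ell\subseteq R]=|R|^2/|C_\ell|^2$, the union bound over all subsets of $C_\ell$ converges (Lemma~\ref{boundedgamma}). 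The growth $|C_\ell|\ge 2^{\ell-1}$ of smoothed classes handles the maximum over classes.

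Second, the LP of Step~2 is not a valid relaxation as described, and it does not expose the quantity the rounding has to control. Charging a job to $F_\ell$ in proportion to the share of its reservation inside $F_\ell$ is violated on the instance of Lemma~\ref{lem:adaptvstworeslb}. There $\OPT_2=O(1)$, and every $2$-reservation of cost $O(1)$ must give all but an $O(1/\sqrt m)$ fraction of the jobs the fast machine plus one slow machine. So about $m/2$ units of volume get charged against the fast machine's capacity $\sqrt m$. Counting only jobs whose reservation lies entirely inside $F_\ell$ errs the other way: large realizations are forced onto fast classes whenever a job's slower reserved machines are too slow for them.

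The right quantity is the $\ell$-exceptional portion $P_j\bone(P_j\ge\s_{\nxt(\ell;K)})$, whose threshold is set by the next slower \emph{reserved class}. It is not linear in machine-level variables, and a single global truncation at $Ts_{\max}$ cannot replace it. This is why the paper smooths the machines and uses configuration variables $x_{j,K}$ over sets $K$ of at most $k$ speed classes (polynomially many, since $L\le\log_2|M|$). It proves feasibility through Lemma~\ref{lem:multiassignmentcostlb}. It then applies the bounded-column-sum iterative rounding of Theorem~\ref{thm:iterativerounding} to obtain an integral class reservation $\psi$ with $\sum_{j:\ell\in\psi(j)}\E[P_{j,\psi(j),\ell}]\le 4|C_\ell|\s_\ell$ for all $\ell$ and super-exceptional term at most $1$. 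Lemma~\ref{lem:twochoicesperclass} with $\beta=4$ then finishes the proof.
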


Finally, we showcase the effectiveness of $2$-reservations versus adaptive policies. 

\begin{restatable}{theorem}{reladapthm} \label{reladap-thm} 
For any \slbr instance on $m$ related machines with $k \geq 2$, we can efficiently compute a random $2$-reservation whose expected cost is within constant factors of the adaptive optimum. 
\end{restatable}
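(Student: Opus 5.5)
Our plan is to compare against a lower bound on $\OPTAD$ for related machines that is LP-expressible, and then round it into a random $2$-reservation. The lower bound comes from the standard truncation/exceptional-job framework: guess $T$, a constant-factor estimate of $\OPTAD$, by geometric search. For each job $j$ and machine $i$, split $P_j/s_i$ into a truncated part $\min(P_j/s_i,T)$ and an exceptional part $P_j/s_i\cdot \bone[P_j/s_i>T]$.

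An adaptive policy with expected makespan at most $T$ yields fractional values $y_{ij}$, the probability that $j$ is assigned to $i$. Because a job's size is revealed only after it is assigned, its realization is independent of the decision, and we get constraints of the form
\[
\sum_i y_{ij}=1,\qquad \sum_j y_{ij}\,\beta_{i}(P_j)\le O(1)\ \text{(effective-size load)},\qquad \sum_{i,j} y_{ij}\,\E\!\left[\tfrac{P_j}{s_i}\bone[P_j/s_i>T]\right]\le O(T).
\]
Here $\beta_i$ is a Kleinberg--Rabani--Tardos style effective size at scale $T$, chosen to handle machines of different speeds, e.g.\ grouping machines into speed classes as in the adaptive-policy analysis of \cite{EberleGMMZ25}. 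We call this relaxation $\ADAPTLP$ and first show that it is feasible whenever $T\ge c\cdot\OPTAD$. This is the step where adaptivity enters. The key point is that the LP uses only expectations of per-job quantities that are independent of the assignment decision, and we argue it via a stopping-time/martingale inequality on each machine's load.

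Next we round. We first take a non-adaptive, GAP-style (Shmoys--Tardos) rounding of $y$ to an integral assignment $\sigma_1$ that respects the truncated effective-size loads up to a constant. Next, independently for each job, we sample a second machine $\sigma_2(j)$. We set $\mset(j)=\{\sigma_1(j),\sigma_2(j)\}$. The second choice should be a \emph{fastest-machines overflow} choice: for instance, pick $\sigma_2(j)$ uniformly among the group of fastest machines whose total speed is comparable to the total speed needed to absorb the exceptional mass of $j$. Mirroring \Cref{thm:iden2options}, we then treat this group as a near-identical pool with a power-of-two-choices style spreading.

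In the second stage we use the following assignment. A job goes to $\sigma_1(j)$ if it realizes small, i.e.\ $P_j/s_{\sigma_1(j)}\le T$. Otherwise it goes to $\sigma_2(j)$, or to whichever of the two reserved machines yields the smaller current load. The bound then splits into two parts.
\begin{enumerate}
\item The truncated loads on $\sigma_1$ are bounded by $O(T)$ in expectation over the max. This uses independence of the $P_j$ and the effective-size Chernoff-type bound, exactly as in the KRT/GKNS non-adaptive analysis.
\item The exceptional jobs are few in expectation, with total expected exceptional mass $O(T)$ weighted by speed. Their routing to fast machines through the second choice keeps the max load $O(T)$. Here we invoke (a related-machine variant of) the argument behind \Cref{thm:iden2options}, applied within each speed class, and sum a geometric series over classes.
\end{enumerate}

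The main obstacle is the second part. The hard case, per \Cref{relporlb-thm}, is exactly where $2$-reservations fail against $\OPT_m$. So we must use the weaker benchmark: an adaptive policy also cannot foresee which jobs are large, so it must itself tolerate exceptional jobs landing on machines whose speed is governed only by their expected exceptional contribution. We first try to encode this in $\ADAPTLP$ through a per-speed-prefix constraint of the form
\[
\sum_j \E\!\left[P_j\,\bone\!\left[P_j> T\,s(\text{prefix})\right]\right]\le O\!\left(T\cdot s(\text{prefix})\right)
\]
for every prefix of fastest machines. We then show that the random second choice within the matching prefix absorbs these jobs with constant expected overload.
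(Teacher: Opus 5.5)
There is a genuine gap, and it sits in the part you treat as routine. In your second stage every job whose realization is small is pinned to $\sigma_1(j)$. The truncated load is therefore exactly the load of a fixed non-adaptive assignment, and against $\OPTAD$ that load cannot be $O(T)$.

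Concretely, take $m$ identical machines and $m^2$ i.i.d.\ jobs with $P_j\sim\Ber(1/m)$. List scheduling shows $\OPTAD=\Theta(1)$. Any $T$ for which your LP is feasible has $T\ge 1$, since for $T<1$ the exceptional mass is about $m$. So every nonzero realization is small and the second reserved machine is never used. Your schedule then costs at least $\OPT_1=\Theta(\log m/\log\log m)$.

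The same example shows that the step ``$\ADAPTLP$ is feasible whenever $T\ge c\cdot\OPTAD$'' cannot hold for an LP with KRT-style effective-size load constraints, whatever stopping-time argument you try. Combined with your Shmoys--Tardos rounding and the KRT/GKNS Chernoff analysis, it would yield a non-adaptive assignment of cost $O(\OPTAD)$. That contradicts the $\Omega(\log m/\log\log m)$ adaptivity gap of \cite{GuptaKNS21}. An adaptive policy only certifies expected truncated volume per machine plus total expected exceptional mass; this is $\ADAPTLP$, see \Cref{thm:eberlestruct,lem:adaptlpfeasibility}. Rounding that to one machine per job for the small parts loses $\Theta(\log m/\log\log m)$, as in \Cref{lem:costofonechoice}.

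You also have the difficulty reversed. The exceptional part is the easy one. If $P_j\bone(P_j\ge \s_{\psi(j)})$ is always sent to a reserved machine of class $\psi(j)$, the expected exceptional makespan is at most the sum of the expected exceptional loads, which the LP bounds by a constant. No overflow pool, prefix constraints, or balancing is needed there. The second choice has to be spent on the small jobs, and both choices should lie in the same speed class.

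The paper proceeds as follows:
\begin{enumerate}
\item It smooths the machines and consolidates a feasible $\ADAPTLP$ solution into class variables $z_{\ell,j}=\sum_{i\in C_\ell}y_{ij}$.
\item It rounds these via \Cref{thm:iterativerounding} to a $1$-class reservation $\psi$ with per-class expected truncated volume at most $2|C_\ell|\s_\ell$ and exceptional mass at most $1$.
\item It reserves two independent uniform machines of $C_{\psi(j)}$ for each job $j$.
\end{enumerate}
Machines inside a class are identical and each job's two choices lie in one class, so the density bound of \Cref{lem:density} decomposes by class. The power-of-two-choices argument of \Cref{thm:twochoicessuffice}, in the form of \Cref{lem:twochoicesperclass}, then bounds the truncated makespan by $O(1)$ using only expected volumes. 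Expected volumes are exactly what the adaptive benchmark supplies.
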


Thus, while $k$-reservations cannot get us close to the omniscient optimum for any constant $k$, even $2$-reservations are as powerful as adaptive policies (modulo $O(1)$ factors).\footnote{In fact, $2$-reservations can sometimes do even better than adaptive policies: we exhibit an instance (\Cref{lem:adaptvstworeslb}) where the adaptive optimum exceeds the cost of an optimal $2$-reservation by a factor of $\Omega(\sqrt{m})$.} As a consequence, we obtain that for weighted Bernoulli job-size distributions---an oft-considered special case (see, e.g., \cite{AntoniadisHSU25,EberleFMM19,GuptaMZ23,KleinbergRT00}) that sometimes encapsulates the crux of the general setting~\cite{GuptaKNS21,KleinbergRT00}---the price of $2$-reservation is $O(1)$. This is because one can show that, with weighted Bernoulli distributions, the adaptive optimum is $O(\OPT_m)$.

Our algorithmic results for $\slbr$ on related machines (\Cref{rellog-thm,relbicriteria-thm,reladap-thm}) can be found in \Cref{sec:related,sec:adaptive}. 

\subsection{Related work} \label{relwork}

As noted earlier, taking $k=1$ in our model, we obtain the classical stochastic load-balancing problem of finding a non-adaptive assignment with minimum expected makespan, which is a problem with a rich history of study. 
In their seminal work, Kleinberg, Rabani and Tardos~\cite{KleinbergRT00} developed an 
$O(1)$-approximation algorithm for this problem on identical machines. 
They were the first to study stochastic load balancing with arbitrary job-size distributions, whereas prior work had only considered specialized distributions (see, e.g.,~\cite{BrunoDF81,Weiss95,WeissP80}). 
Recently, Gupta, Kumar, Nagarajan and Shen~\cite{GuptaKNS21} obtained an $O(1)$-approximation for the general setting of unrelated machines, and subsequently, such results were also obtained for more general norm-based objectives of the machine-load vector~\cite{IbrahimpurS20,IbrahimpurS22,Molinaro19}.  
Another stream of work has focused on obtaining improved approximation guarantees for specialized distributions, such as exponential and Poisson distributions~\cite{DeKLN20,GoelI99,IbrahimpurS21}.    

In contrast to the above line of work, which focuses on obtaining guarantees relative to the non-adaptive optimum, the study of (adaptive) policies that obtain guarantees relative to the {\em adaptive optimum} has received very little attention. 
Gupta et al.~\cite{GuptaKNS21} showed that non-adaptive policies are quite limited in this respect: the adaptivity gap is $\Omega(\log m/\log\log m)$, even for identical machines and weighted Bernoulli job-size distributions. 
Very recently, Eberle, Gupta, Megow, Moseley and Zhou~\cite{EberleGMMZ25} devised an $O(1)$-approximation {\em adaptive policy} for the setting of related machines, showing for the first time how adaptive policies can do better than non-adaptive policies. 
They also showed that the $\Omega(\log m/\log\log m)$ lower bound on the adaptivity gap is tight by designing a non-adaptive policy with a matching approximation guarantee, even for the setting of unrelated machines. 
We use some of their techniques in obtaining our approximation results for related machines. Obtaining an adaptive policy for unrelated machines that does better remains a challenging open question.

The high-level idea of our model---that providing some limited choice in deciding the job assignment at a second stage may allow for dramatically better solutions---invites comparison to the rich stream of literature on the ``power of two choices''~\cite{MRS01}.  
The prototypical result in this area is the balls-and-bins result of~\cite{AzarBKU99}, which shows that with $m$ identical machines and $m$ deterministic unit-size online jobs, if we choose $d$ random machines for each job, and assign the job to the least-loaded machine among these $d$ machines, then the expected makespan decreases from $\Theta(\log m/\log\log m)$ when $d=1$, to $O(\log\log m/\log d)$ when $d\geq 2$. They also proved that the expected {\em offline} optimum is $O(1)$, i.e., if we get to see all jobs and can then assign jobs, but in a way that is \consistent with the machine-set chosen for each job, then the expected makespan becomes $O(1)$. 
The latter result has the same flavor as our result for identical machines (Theorem~\ref{thm:iden2options}): indeed, using our terminology, they show that the cost of a random $2$-reservation is $O(1)$, albeit in the extremely special case of {\em identical} jobs. 

\slbr is an example of a two-stage stochastic optimization problem with recourse.
Two-stage stochastic optimization problems have been extensively studied in the Operations Research literature, from a modeling and computational viewpoint, and in the theoretical CS (TCS) literature from an approximation-algorithms viewpoint; see, e.g., the text~\cite{BirgeL97}, and the survey~\cite{SwamyS06}, which discusses work in TCS in this area. 
While work on stochastic load balancing has focused exclusively on independent jobs, work in two-stage stochastic optimization has also considered other, more general distribution models that incorporate arbitrary correlation effects.  
These include the scenario model, where the input specifies all realizations of the uncertain data, and the most-general black-box model, where one only assumes that one has sampling access to the underlying distribution; approximation algorithms have been devised even in the black-box model~\cite{GuptaPRS04,ShmoysS06}. 
We note that in order to obtain these positive results in the black-box model, an important property that the two-stage problem needs to satisfy is that every first-stage action has a corresponding (possibly more-expensive) second-stage action; \slbr does not satisfy this property.

Before concluding this section, we mention two other models of stochastic scheduling that involve a notion of limited adaptivity distinct from ours. 
The first model is due to Sagnol and Schmidt genannt Waldschmidt~\cite{SagnolSchmidtGW21}, who studied makespan minimization for independent stochastic jobs on $m$ identical machines.
At time $0$, using only the job-size distributions, each job is assigned to a machine and given a position in that machine's queue.
Their model is parameterized by a delay $\delta>0$ and a shift interval $\tau>0$.
At any time $t$ that is a multiple of $\tau$, a job $j$ that has not yet begun processing may be moved to the queue of another machine, but it cannot begin processing before time $t+\delta$.
Thus, their policies adapt sequentially during execution by modifying the queues in response to the processing times revealed so far.
In contrast, our model makes no assignments during execution: it reserves a set of machines for each job before the job sizes are revealed and, after observing the entire realization, assigns each job to one of its reserved machines.
The second model is due to Chen, Megow, Rischke and Stougie~\cite{ChenMRS15}, who considered a different two-stage reservation model.
Time slots can be reserved at a relatively low cost before the uncertain processing times are known.
After the processing times are revealed, additional slots may be purchased at a higher second-stage cost if the capacity reserved initially is insufficient to process all jobs.

\section{Technical Overview} \label{sec:overview}

In this section, we give an overview of our techniques, discussing the main ideas underlying our algorithms and their analysis. 
A full proof of our result on identical machines can be found in \Cref{sec:identical}. 
Due to limited space, detailed proofs of our results on related machines have been deferred to the full version of this paper. 

\subsection{Effectiveness of \texorpdfstring{\boldmath $2$}{2}-Reservations versus Omniscient Optimum for Identical Machines}

Recall that for identical machines, we obtain the rather strong positive result that $2$-reservations are sufficient to get within a constant factor of the omniscient optimum $\OPT_m$. 
This holds even with arbitrarily correlated job sizes.  
Our algorithm here is very simple: the reservation for each job simply consists of $2$ machines chosen independently and uniformly at random.  
This algorithm and its analysis also lie at the heart of our results for related machines.

One of the main ingredients for analyzing this algorithm is a {\em density-based approximation} of the cost of a given reservation $\sigma$ (see parts
(i) and (ii) of Lemma~\ref{lem:density}). 
We can view a $2$-reservation $\mset$ as specifying a graph, where we have a node for each machine, and, for each job $j$, we have an edge (possibly a self-loop) joining the machines in $\mset(j)$. 
A $\mset$-\consistent assignment corresponds to an orientation of this graph. 
For given job sizes $\{p_j\}_{j\in\jobset}$, the makespan of an orientation is the maximum weighted in-degree of a machine node, where the edge corresponding to $\mset(j)$ has weight $p_j$.  
We consider a {\em fractional} relaxation of the problem of finding an orientation to minimize the maximum weighted in-degree. 
This can be easily cast as a flow problem. Its dual is a cut problem that can be interpreted as finding a maximum-density node-set in this graph, where the density of a set $R\sse[m]$ is ${\text{(total weight of edges contained in $R$)}}\,/\,{|R|}$. 
Also, one can argue that a fractional orientation can be rounded to a $\mset$-\consistent assignment of makespan at most twice the maximum density. 
Thus, our goal is to find a $2$-reservation $\mset$ with small maximum density.

We prove that, for any job sizes, a random $2$-reservation (as described above) has small expected maximum density, where the expectation is over the random choice of the $2$-reservation. 
It is easy to bound $\E[\text{density of }R]$ for any $R\sse[m]$.  We use Chernoff bounds to show that the density of $R$ is ``close'' to its expectation, and then take a union bound over all sets $R\sse[m]$ to obtain a bound on $\E[\max_{R\sse[m]}(\text{density of $R$})]$. 
Although this entails taking a union bound over exponentially many node-sets, two key insights make the calculations go through: 
(1)~if $|R|$ is small, then the expected density of $R$ is small, and so a tail event corresponds to a large relative deviation from the expectation; 
(2)~if $|R|$ is large, then the expected density of $R$ is large; however, the tail event corresponds to a large absolute deviation from the expectation. 
In either case, applying Chernoff bounds yields a rapid decay in probability, which turns out to be strong enough to survive the union bound.  

\subsection{Price of Reservation Bounds for Related Machines}

Our result in \Cref{relporlb-thm} is based on the intuition that any $k$-reservation can at best average out the load of $k$ speed classes. 
So, the hard instance (shown in \Cref{table:porlb}) is chosen such that the set of machines $M$ can be partitioned into $k+1$ speed classes $C_1,\ldots,C_{k+1}$ such that: 
(a)~the speed $\s_{\ell}$ of class $\ell$ decreases by some factor $\scaleparm$ over the classes, where $\scaleparm \approx \sqrt[2k]{|M|}$; 
and 
(b)~the total processing power (i.e., $|C_\ell| \cdot \s_{\ell}$) within each class increases by a factor $\scaleparm$ over the classes. 
The job-set $\jobset$ consists of $n \gg |M|$ i.i.d. jobs. 
The (common) job-size distribution is such that the random variable $N_\ell$ denoting the number of jobs that realize a size equal to the speed of class $\ell$ has expected value $|C_\ell|$. 
The rest of the probability mass is located at size $0$. 
The instance is explicitly stated in \Cref{table:porlb} in \Cref{sec:related-por}. 

By concentration, we can show that with high probability $N_\ell = O(|C_\ell|)$ holds for all $\ell \in [k+1]$. 
Thus, the omniscient optimum is $O(1)$. 
However, in any $k$-reservation, some $n/(k+1)$ jobs (call this set $J^\prime$) are devoid of a class-$\ell$ machine for some $\ell \in [k+1]$. 
This implies $\OPT_k = \Omega(\scaleparm/k)$ because neither the faster speed classes nor the slower speed classes are equipped to handle the load that arises from jobs in $J^\prime$ in a typical realization. 
Faster classes lack sufficient aggregate processing power, whereas slower classes are too slow to handle a large job-size realization without the load blowing up to $\Omega(\scaleparm)$. 

\subsection{(Bicriteria) Approximation Algorithms for Related Machines}

The situation for related machines is much more technically challenging. 
In particular, trivial extensions of our ideas from the identical-machines setting can lead to very poor approximation guarantees. 
For instance, suppose we have $m+1$ related machines, where the first machine has speed $\sqrt{m}$ and the remaining $m$ machines have unit speed. 
Suppose further that there are $n \gg m$ i.i.d. jobs, where the size of each job is distributed as $m/n + \sqrt{m} \cdot \mathrm{Bernoulli}(1/n)$. 
Observe that an $\Omega(\sqrt{m})$ loss in approximation is unavoidable in either of the following cases: 
(i)~the $2$-reservation for each job intersects at most one speed class; or (ii)~the two machines are sampled proportionally to their speeds, possibly from different speed classes. 

Our approximation algorithms for related machines are based on linear programming techniques. 
We begin with two preprocessing steps.  
First, we ``smooth'' the instance into a more convenient form. 
Among other things, the machines are partitioned into classes $C_1,\ldots,C_L$, where the speed $\s_{\ell}$ of class-$\ell$ machines decreases geometrically with $\ell$, while the processing power $|C_\ell| \cdot \s_{\ell}$ increases geometrically with $\ell$. 
This is inspired by the smoothing operation in the work of Im, Kell, Panigrahi and Shadloo~\cite{IKPS18}. 
It can be shown that this smoothing changes $\OPT_k$ by at most a constant factor. 
Second, via binary search, we suitably scale the instance so that $\OPT_k = \Omega(1)$. 
The point of this scaling is to formulate a linear program (LP) for the scaled instance such that infeasibility of the LP certifies that $\OPT_k$ is significantly larger than $1$. 
To simplify the exposition, we assume throughout that $\OPT_k = 1$. 
We also restrict attention to the case $k=2$, since the main ideas already appear in full generality in this special setting. 

A natural linear relaxation allows jobs to fractionally reserve up to $k$ machines.
We relax this modeling choice even further by allowing a job to reserve up to $k$ speed classes. 
Specifically, for each job $j \in \jobset$ and each pair $1 \leq \ell_1 < \ell_2 \leq L$, we introduce a variable $x_{j,\ell_1,\ell_2} \in [0,1]$, where $x_{j,\ell_1,\ell_2} = 1$ models that job $j$ reserves the speed classes $C_{\ell_1}$ and $C_{\ell_2}$. 
For each $j \in \jobset$, we impose the (assignment) constraint $\sum_{1 \leq \ell_1 < \ell_2 \leq L} x_{j, \ell_1, \ell_2} = 1$, which models a relaxed first-stage 2-class reservation. 

We next add volume constraints to control the load induced on the speed classes in the second stage. 
Since we assumed $\OPT_k = 1$, it is useful, for each pair $1 \leq \ell_1 < \ell_2 \leq L$, to partition the realizations of $P_j$, relative to the thresholds $\s_{\ell_2}$ and $\s_{\ell_1}$, into three regimes: small, medium, and large.  
Thus, define $P_{j,\ell_1,\ell_2}^{\sml} := P_{j} \cdot \bone(P_j < \s_{\ell_2})$, $P_{j,\ell_1,\ell_2}^{\med} := P_{j} \cdot \bone(\s_{\ell_2} \leq P_{j} < \s_{\ell_1})$, and $P_{j,\ell_1,\ell_2}^{\lrg} := P_{j} \cdot \bone(P_{j} \geq \s_{\ell_1})$. 
Here, $P^\sml_{j,\ell_1,\ell_2}$ represents realizations that can be processed even by a slow machine in $C_{\ell_2}$ without exceeding load $1$. 
The random variable $P^{\med}_{j,\ell_1,\ell_2}$ represents realizations that can be handled by a machine in $C_{\ell_1}$, but not by one in $C_{\ell_2}$, without exceeding load $1$. 
Finally, $P^{\lrg}_{j,\ell_1,\ell_2}$ captures realizations that cause load at least $1$ regardless of whether the job is assigned to a machine in $C_{\ell_1}$ or $C_{\ell_2}$. 

Since we assumed $\OPT_k = 1$, for each $\ell \in [L]$ we include the volume constraint: 
\begin{equation} \label{eq:truncvol}
\sum_{j \in \jobset, 1 \leq \ell^\prime < \ell} \E[P_{j,\ell^\prime,\ell}^{\sml}] \cdot x_{j,\ell^\prime,\ell} + \sum_{j \in \jobset, \ell < \ell^\prime \leq L} \E[P_{j,\ell,\ell^\prime}^{\med}] \cdot x_{j,\ell,\ell^\prime} \leq |C_\ell| \cdot \s_\ell.
\end{equation}
For $\ell^\prime < \ell$, the small part $P_{j,\ell^\prime,\ell}^{\sml}$ could also be processed on the faster machines in $C_{\ell^\prime}$ without exceeding load $1$. 
However, at the cost of an $O(1)$ loss in the makespan objective, we may charge these small realizations to the slower speed class $C_\ell$. 
This is because, after smoothing, the total processing power in class $\ell$ dominates that of all faster classes combined. 

Finally, the large realizations $\{ P_{j,\ell_1,\ell_2}^{\lrg} \}$ behave qualitatively differently from the small and medium parts: if the event $\{P_{j,\ell_1,\ell_2}^{\lrg} > 0\}$ happens, then assigning $j$ to any machine in $C_{\ell_1} \cup C_{\ell_2}$ already incurs load at least $1$. 
Thus, these correspond to exceptional overflow events rather than class-specific volume. 
It therefore suffices to control their total expected contribution via a single aggregate constraint. 
As in \cite{GuptaKNS21,IbrahimpurS20}, we include the following exceptional-jobs constraint:
\begin{equation} \label{eq:excepvol}
    \sum_{j \in \jobset} \sum_{1 \leq \ell_1 < \ell_2 \leq L} \Bigl(\E[P_{j,\ell_1,\ell_2}^{\lrg}]/\s_{\ell_1}\Bigr) \cdot x_{j,\ell_1,\ell_2} \leq 1.
\end{equation}

Now let $x^* \in [0,1]^{\jobset \times \binom{L}{2}}$ be a fractional solution to this feasibility LP.  
Observe that the coefficients in \eqref{eq:truncvol} are at most $\s_\ell$, while those in \eqref{eq:excepvol} are at most $1$. 
Since $|C_\ell|$ grows geometrically with $\ell$, this (assignment) LP satisfies a bounded-column-sums property. 
An iterative rounding result of Linhares, Olver, Swamy and Zenklusen \cite{LinharesOSZ20} therefore yields an integral $2$-class reservation $\psi$ that approximately preserves the volume guarantees satisfied by $x^*$. 

At this point, however, we still have reservations to two speed classes rather than to two specific machines. 
This is where we must give up something.  
A natural approach is to reserve, for each job $j$, one uniformly random machine from each of the two speed classes in $\psi(j)$. 
Due to the usual balls-and-bins behavior, this can incur an $O(\log m / \log \log m)$ loss in approximation. 
If we instead allow the reservation size to double, then we can define a $4$-reservation by choosing \emph{two} (uniformly) random machines from each of the two speed classes in $\psi(j)$. 
Each speed class can then be analyzed similarly to the identical-machines setting, yielding only a constant-factor loss relative to $\OPT_k$. 

\subsection{Effectiveness of  \texorpdfstring{\boldmath $2$}{2}-Reservations versus Adaptive Optimum for Related Machines}
 
Our key technical insight for proving \Cref{reladap-thm} is that the adaptive optimum is lower bounded, up to constant factors, by the ``cost'' of an optimal $1$-class reservation. 
Here, a $1$-class reservation means that in the first stage, for each job, we must commit to (or reserve) a single speed class that will be responsible for processing whatever size this job realizes in the second stage. 
(The second-stage assignment can choose any one machine from the reserved speed class for each job.) 
Thus, by sampling two machines independently and uniformly from the class that is reserved for each job, we can average out the load within each speed class. 
The associated $2$-reservation has an expected cost that is roughly the adaptive optimum.

\section{Preliminaries} \label{sec:prelims}

This section collects some preliminary tools that will be used in the rest of the paper. 
The following lemma gives bounds on the optimal makespan of an assignment that is \consistent with a reservation by relating it to a certain density function over subsets of machines. 

\subsection{Density Lower Bounds}

\begin{lemma} \label{lem:density}
Let $\jobset$ be a set of jobs, $M$ be a set of related machines with positive speeds $\{s_i\}_{i \in M}$, and $\sg : \jobset \to \subsets{M}$ be a reservation. 
For any arbitrary job sizes $p \in \Rp^{\jobset}$, define: 
\vspace{-5pt}
\begin{equation} \label{def:density}
\denlb(p;\sg,s) \coloneqq \max_{R \subseteq M : R \neq \emptyset} \frac{\sum_{j \in \jobset: \sg(j) \subseteq R} p_j}{\sum_{i \in R} s_i}.
\end{equation}
The following are true:
\begin{enumerate}[(i)]
    \item For any $\sg$-\consistent assignment $\asgn : \jobset \to M$, we have $\mksp(\asgn;p) \geq \denlb(p;\sg,s)$. 

    \item There is a $\sg$-\consistent assignment $\asgn : \jobset \to M$ satisfying 
    \[ 
    \mksp(\asgn;p) \leq \denlb(p;\sg,s) \cdot \max_{j \in \jobset} |\sg(j)|.
    \]

    \item Suppose there is a nontrivial partition of $M$ into nonempty subsets $C_1,\ldots,C_L$ such that for every job $j \in \jobset$ and every index $\ell \in [L]$, either $\sg(j) \subseteq C_\ell$ or $\sg(j) \cap C_\ell = \emptyset$ holds. 
    Then, the maximum in \eqref{def:density} is attained by a nonempty $R \subseteq C_\ell$ for some $\ell \in [L]$. 
\end{enumerate}
\end{lemma}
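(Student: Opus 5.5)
Part (i) is a direct averaging argument. Fix a $\sg$-\consistent assignment $\asgn$ and a nonempty $R \sse M$. Every job $j$ with $\sg(j) \sse R$ has $\asgn(j) \in \sg(j) \sse R$. Hence the total processing these jobs place on $R$ is at most $\sum_{i \in R} s_i \cdot \load(i;\asgn,p) \le \mksp(\asgn;p)\sum_{i\in R}s_i$. Here the load of machine $i$ counts $p_j/s_i$ per job, so the work it performs is $s_i\cdot\load(i;\asgn,p)$. Dividing by $\sum_{i\in R}s_i$ and maximizing over $R$ gives (i).

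For (ii), let $D := \denlb(p;\sg,s)$ and consider the fractional orientation problem. We want $y_{ij}\ge 0$ for $i\in\sg(j)$ with $\sum_i y_{ij}=1$ and $\sum_j p_j y_{ij} \le D s_i$ for all $i$. This is a bipartite transportation problem: the source sends $p_j$ to job $j$, job $j$ sends to machines in $\sg(j)$, and machine $i$ sends at most $D s_i$ to the sink. By max-flow/min-cut (equivalently Hall's condition for supplies and capacities), a flow saturating all source arcs exists iff for every job set $S$ we have $\sum_{j\in S}p_j \le D\sum_{i\in \sg(S)} s_i$, where $\sg(S)=\bigcup_{j\in S}\sg(j)$. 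Taking $R=\sg(S)$, every $j\in S$ satisfies $\sg(j)\sse R$, so the definition of $D$ gives exactly this inequality. Hence a fractional assignment with fractional makespan $\le D$ exists.

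To round, take $y$ to be a vertex of the polytope $\{y\ge 0: \sum_{i\in\sg(j)}y_{ij}=1,\ \sum_j p_jy_{ij}\le Ds_i\}$. Assign each job to a machine $i\in\sg(j)$ maximizing $y_{ij}$, so $y_{\asgn(j)j}\ge 1/|\sg(j)|$. Then the load on $i$ is
\[
\sum_{j:\asgn(j)=i} p_j/s_i \le \max_j|\sg(j)| \sum_j p_jy_{ij}/s_i \le \max_j|\sg(j)|\cdot D.
\]
This simple argmax rounding gives exactly the claimed factor, so no vertex structure is needed. Jobs with $p_j=0$ are placed arbitrarily within $\sg(j)$. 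This step is the main point of the lemma, and the only care needed is setting up the flow capacities correctly with speeds.

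For (iii), take a maximizing $R$ and split it as $R_\ell = R\cap C_\ell$. Under the hypothesis, each job with $\sg(j)\sse R$ has $\sg(j)\sse R_\ell$ for exactly one $\ell$ (namely the class containing $\sg(j)$). So the numerator splits as $\sum_\ell N_\ell$, where $N_\ell$ sums $p_j$ over jobs with $\sg(j)\sse R_\ell$, and the denominator splits as $\sum_\ell S_\ell$ over nonempty $R_\ell$. By the mediant inequality, $\sum N_\ell/\sum S_\ell \le \max_\ell N_\ell/S_\ell$. The maximum is therefore attained by some nonempty $R_\ell\sse C_\ell$.
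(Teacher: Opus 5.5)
Your proof is correct and follows the paper's route. For (ii) you use a flow/cut (Hall-type) characterization of fractional $\sigma$-consistent assignments with machine capacities $D s_i$ and round each job to an $i \in \sigma(j)$ with $y_{ij} \ge 1/|\sigma(j)|$; for (iii) you use the mediant inequality over the pieces $R \cap C_\ell$. The one difference is in (i): you prove it by a direct averaging (weak-duality) argument, whereas the paper deduces it from the exact equality between $\mathrm{density}(p;\sigma,s)$ and the optimal fractional makespan, which its flow argument establishes.
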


\begin{proof}
We establish the first part by showing that $\denlb(p;\sg,s)$ equals the optimal makespan that is achievable when the jobs are divisible, i.e., jobs can be run on one or more reserved machines, simultaneously, as long as the total portions add up to $1$. 
We say that a fractional assignment $x \in [0,1]^{M \times \jobset}$ is $\sg$-\consistent if for all $j \in \jobset$, we have $\sum_{i \in M} x_{ij} = 1$ and $x_{ij} = 0$ for any $i \notin \sg(j)$. 
For a fractional $\sg$-\consistent assignment $x$, we define $\mksp(x;\sg,s) = \max_i (\sum_j p_j x_{ij})/s_i$. 
Clearly, the makespan of any $\sg$-\consistent assignment $\asgn$ is bounded below by the optimal makespan achievable by a fractional $\sg$-\consistent assignment. 

We now interpret fractional assignments as flows in a suitable digraph $D = (V,A,b : V \to \R, u: A \to \Rp)$ whose nodes have demands $b_v$ and arcs have nonnegative capacities $u_e$. 
The makespan of $x$ will follow from a certain feasibility condition.  
The node-set $V$ comprises a node for each job $j \in \jobset$ with $b_j = -p_j$, a node for each machine $i \in M$ with $b_i = 0$, and a special sink-node $t$ with $b_t = \sum_j p_j$. 
The arc-set $A$ comprises an arc $(j,i)$ for every job $j \in \jobset$ and every reserved machine $i \in \sg(j)$, with $u_{j,i} = \infty$. It also comprises an arc $(i,t)$ for every machine $i \in M$, with $u_{i,t} = \ld s_i$ for some parameter $\ld \in \Rp$ to be defined shortly. 
We choose $\ld$ to be the smallest real number that guarantees the existence of a feasible flow satisfying the (node) demands and the (arc) capacities. 
Note $\ld \geq p(\jobset)/s(M)$ holds. 
Note that there is a one-to-one correspondence between feasible flows and optimal fractional assignments that are \consistent with $\sg$, simply by equating $x_{ij}$ and the portion of job $j$'s demand that is routed through the arc $(j,i)$. 
The cut conditions imply that the following constraints are necessary and sufficient for the existence of a feasible flow:
\[
\forall \text{ nonempty } X \subseteq \jobset \cup M, \, b(\jobset \cap X)  + u(\delout(X)) \geq 0 \text{ holds}, 
\]
where $\delout(X)$ is the set of arcs going from $X$ to $V \sm X$. 
Since $u_{j,i} = \infty$ whenever $i \in \sg(j)$, we only need to check the cut condition for those $X$ satisfying $j \in X \cap \jobset \implies \sg(j) \subseteq X \cap M$. 
Moreover, for any fixed machine set $R \subseteq M$, the most stringent finite cut with machine side $R$ is obtained by including precisely those jobs $j$ with $\sg(j) \subseteq R$.
In other words, the following condition is necessary and sufficient for the existence of a feasible flow: for any nonempty $R \sse M$, we have $\ld \cdot \sum_{i \in R} s_i \geq \sum_{j \in \jobset : \sg(j) \subseteq R} p_j$. 
This is precisely the form of the density lower bound, so $\ld = \denlb(p;\sg,s)$ holds. 

The second part is straightforward. 
Let $x$ be an optimal fractional assignment that is $\sg$-\consistent. 
By feasibility of $x$, for every job $j$, there is a machine $i(j) \in \sg(j)$ with $x_{ij} \geq 1/|\sg(j)|$. 
We define $\asgn$ by setting $\asgn(j) = i(j)$ for all $j$, i.e., we round up one of the large variables for $j$.
As $\mksp(x;\sg,s) = \ld = \denlb(p;\sg,s)$, we get that for any machine $i \in M$, $\sum_j p_j x_{ij} \leq \ld s_i$ holds. 
Therefore, 
\[
\mksp(\asgn;p) = \max_i \frac{\sum_{j : \asgn(j) = i} p_j}{s_i} \leq \max_i \frac{\sum_{j} (p_j \cdot x_{ij} \cdot |\sg(j)|) }{s_i} \leq \ld \cdot \max_{j \in \jobset} |\sg(j)|.
\]

Last, suppose $M = C_1 \sqcup \dots \sqcup C_L$ admits a nontrivial partition such that for every job $j$, $\sg(j)$ is contained in one of the $C_\ell$s. 
Suppose $R$ is a nonempty set that is not contained in any of the $C_\ell$s and is a maximizer in the definition of $\denlb(p;\sg,s)$. 
Define $J_0 \coloneqq \{ j : \sg(j) \subseteq R\}$, and for any $\ell \in [L]$, $J_\ell \coloneqq \{ j : \sg(j) \subseteq R \cap C_\ell \}$. 
By our assumption on the reservation $\sg$, we have $J_0 = \bigcup_{\ell} J_\ell$. 
By definition, $J_\ell$s are disjoint and $s(R) = \sum_{\ell \in [L] : R \cap C_\ell \neq \emptyset} s(R \cap C_\ell)$. 
Clearly,
\[
\denlb(p;\sg,s) = p(J_0)/s(R) \leq \max_{\ell \in [L] : R \cap C_\ell \neq \emptyset} p(J_\ell)/s(R \cap C_\ell).
\]
Therefore, some nonempty $(R \cap C_\ell)$-set is also a maximizer in the definition of $\denlb(p;\sg,s)$. 
\end{proof}

\subsection{Smoothing Related Machines and Associated Results}

Throughout most of our algorithms for the related-machines setting, we use the following notion of \smoothed machines, which was previously considered by Im et al.~\cite{IKPS18}. 

\begin{definition}\label{def:smoothedmachines} 
We say that a set $M$ of related machines is \emph{\smoothed} if it can be partitioned into speed classes $C_1,\ldots,C_L$ with the following properties:
\begin{enumerate}[(P1)]
\item (Rounded Speeds) Each machine in class $C_\ell$, $\ell \in [L]$, has a common speed $\s_\ell$, and this speed is a power of $2$. \label{prop:smoothpowersof2} 
    \item (Geometric Speeds) We have $\s_{\ell} \leq \s_{\ell-1}/2$ for any $\ell \in \{2,\ldots,L\}$. 
    \label{prop:smoothgeomspeeds}
    \item (Geometric Processing Power) We have $|C_\ell| \cdot \s_\ell \geq \sum_{\ell^\prime = 1}^{\ell-1} |C_{\ell^\prime}| \cdot \s_{\ell^\prime}$ for any $\ell \in [L]$. \label{prop:smoothgeomcomputes}
\end{enumerate}
\end{definition}

\begin{remark} \label{remark:smoothedmachines} 
A \smoothed set of machines $M = \bigcup_{\ell=1}^{L} C_\ell$ has $|C_\ell| \geq 2^{\ell-1}$ for all $\ell \in [L]$, and hence $L \leq \log_2 |M|$. 
\end{remark}

The following technical lemma will be useful in showing that \smoothing the machine speeds does not change the optimal value by more than a constant factor for the \slbr instances that we consider. 

\begin{lemma} \label{lem:smoothing}
For any given set $M$ of $m$ related machines, we can efficiently compute a \smoothed collection $M^\prime$ of $O(m^2)$ related machines, described by the speed class partitioning $C_1,\ldots,C_L$, such that: 
\begin{enumerate}[(i)]
    \item (Forward Approximation Loss) There is a universal mapping $f : M \to M^\prime$ such that for any $p \in \Rp^{\jobset}$ and any $\asgn : \jobset \to M$ we have $\mksp(f \circ \asgn;p) \leq 12 \cdot \mksp(\asgn;p)$. 
    \label{prop:smoothinglossforward}
    \item (Reverse Approximation Loss) There is a universal mapping $g : M^\prime \to M$ such that for any $p \in \Rp^{\jobset}$ and any $\asgn : \jobset \to M^\prime$ we have $\mksp(g \circ \asgn;p) \leq \mksp(\asgn;p)$. \label{prop:smoothinglossreverse} 
\end{enumerate}
\end{lemma}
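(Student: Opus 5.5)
The plan is to build $M^\prime$ from $M$ by a short sequence of elementary operations. For each operation I would exhibit its own forward map (with a constant loss) and reverse map (with no loss), and then compose them. The reverse maps can all be obtained from one principle, which I would record first.

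\emph{Splitting principle.} Suppose each machine $i\in M$ is replaced by copies $c$ with speeds $s_c$ satisfying $\sum_c s_c\le s_i$, and $g$ sends every copy back to $i$. Then for any $\asgn:\jobset\to M^\prime$, the load of $g\circ\asgn$ on $i$ is
\[
\frac{\sum_c s_c\cdot \load(c)}{s_i}\le \max_c \load(c).
\]
Here $\load(c)$ denotes the makespan contribution of copy $c$ (total job size divided by $s_c$). So $g$ never increases the makespan. Allowing slack in $\sum_c s_c\le s_i$ also covers rounding speeds down and dropping machines entirely. The forward map $f$ then only needs to send each original machine $i$ to a single machine of $M^\prime$. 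We must check that whatever lands on a machine $c\in M^\prime$ has total speed $O(s_c)$. In that case the load on $c$ is at most $\sum_{i\in f^{-1}(c)} s_i\cdot\mksp(\asgn;p)/s_c=O(\mksp(\asgn;p))$.

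The steps, in order, are as follows.

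\emph{Step 1.} Round every speed down to a power of $2$. Here $f$ is the identity and the loss is a factor $2$. This gives (P1).

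\emph{Step 2.} Let $s_{\max}$ be the largest speed. Discard every machine of speed below $s_{\max}/m$, and let $f$ send each discarded machine to a fastest machine. Their total speed is less than $s_{\max}$, so this adds at most one $\mksp$ to that machine's load, a further factor $2$. After this step only $O(\log m)$ distinct power-of-$2$ speeds remain. Consecutive present classes automatically differ by a factor at least $2$, which gives (P2).

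\emph{Step 3.} Enforce (P3). I would scan the classes from fastest to slowest, and when a class $C_\ell$ is too weak compared with the kept faster power, do one of two things. Either discard it and map its machines $f$-wise onto faster kept machines, or split faster machines into slower copies so that power is transferred downward. Splitting a machine of speed $2^a$ into copies of speed $2^{a-t}$ produces up to $2^t$ copies. Since $t\le\log m$ after Step 2, this is the source of the $O(m^2)$ machine count.

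The main obstacle is doing Step 3 with a constant \emph{total} forward loss. Naively discarding every weak class can charge the same fast machine once per discarded class, which costs a factor of order $L$.

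My first attempt at a fix is to discard a class only when its power is less than half the total power of all \emph{faster} classes, including the already-discarded ones. The discarded classes then have geometrically decreasing total power relative to the kept prefix, because each discarded class is at most half of everything faster. Their aggregate speed is therefore $O(1)$ times the kept power. The packing comes next. Every discarded machine is slower (by at least a factor $2$) than any kept faster machine, so a greedy assignment of discarded machines to kept machines, in proportion to speed, ensures that each kept machine $c$ receives total speed $O(s_c)$. This yields a constant such as the stated $12$.

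If the kept class that immediately follows a run of discarded classes still violates (P3), I would instead apply the splitting option to it. The idea is to view some of the faster kept machines as several slower copies and merge those copies into the violating class. This keeps (P1)--(P2), stays within the splitting principle for $g$, and gives an $f$ that only maps to the fastest copy, with loss at most $2$.
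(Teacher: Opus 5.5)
Your splitting principle for $g$ and Steps 1--2 are fine; they match the paper's rounding and discarding, with forward loss $4$. Step 3 is the whole difficulty, and it does not work as stated.

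\emph{The discard rule.} You claim that classes discarded under your rule have total power $O(1)$ times the kept power. This is false. Your comparison base, ``all faster classes, including discarded ones'', grows with every discard, so each discarded class can add almost half of the current total. For example, take one fast class of power $1$ followed by classes of powers $0.4\cdot 1.4^{\ell-2}$ for $\ell\ge 2$. Everything faster than class $\ell$ has power exactly $1.4^{\ell-2}$, so every class is discarded. Yet the discarded power totals $1.4^{L-1}-1$ times the kept power. That ratio is exponential in $L$, hence polynomial in $m$, and no $f$ can pack it onto the kept machines with constant loss.

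\emph{The splitting fallback.} This does not repair the rule; in the example it is never even triggered. Splitting faster kept machines removes power from classes where (P3) was already secured, and can break it there or empty them. The claimed $f$-loss of $2$ is asserted rather than derived. If a speed-$2^a$ machine is replaced by copies of speed $2^{a-t}$, mapping it to a copy costs $2^t$. Keeping a half-speed copy instead creates or feeds intermediate classes that then need (P3) themselves.

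\emph{The missing idea.} The weak classes' own power should be recycled \emph{downward}, measured against the kept prefix only. The paper proceeds as follows.
\begin{enumerate}
\item Let $\ell_0$ be the largest index with (P3) holding on $[\ell_0]$.
\item Let $\ell_1$ be the largest index with $\sum_{\ell=\ell_0+1}^{\ell_1}|C_\ell|\s_\ell<\sum_{\ell=1}^{\ell_0}|C_\ell|\s_\ell$.
\item The run $C_{\ell_0+1},\dots,C_{\ell_1}$ is used twice.
\begin{enumerate}
\item For $f$, its machines are packed onto $C_{\ell_0}$. By (P3) at $\ell_0$, the run's power is below $2|C_{\ell_0}|\s_{\ell_0}$. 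Its speeds are powers of two at most $\s_{\ell_0}/2$, so each machine of $C_{\ell_0}$ receives total speed at most $2\s_{\ell_0}$. This is a factor-$3$ loss.
\item For the structure (and $g$, by your own principle), each run machine is split into copies of speed $\s_{\ell_1+1}$, which are merged into $C_{\ell_1+1}$. By maximality of $\ell_1$, the merged class has power $\sum_{\ell=\ell_0+1}^{\ell_1+1}|C_\ell|\s_\ell\ge\sum_{\ell\le\ell_0}|C_\ell|\s_\ell$, so (P3) now holds one index further. If $\ell_1=L$, the run is simply dropped.
\end{enumerate}
\item Classes up to $\ell_0$ are never touched again. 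So each machine of $M'$ absorbs reassigned machines in at most one iteration.
\end{enumerate}
Hence the whole phase costs a single factor $3$, not one per iteration, and the total forward loss is $4\cdot 3=12$.
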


We remark that we have not tried to optimize the forward approximation loss. 
The following notion of splitting a machine into one or more slower machines will be useful to us in \smoothing the machine speeds. 

\begin{definition} \label{def:machinesplitting}
A finite set $M$ of one or more related machines with speeds $\{s_r\}_{r \in M}$ is said to be a \emph{splitting} of a machine with speed $s \in \R_{>0}$  if $s_r > 0$ for all $r \in M$ and $\sum_{r \in M} s_r = s$. 
\end{definition}

Note that the trivial splitting corresponds to just retaining the machine as is. 
The following lemma shows that the splitting operation does not lead to a decrease in the makespan objective. 
This will be relevant in establishing property \ref{prop:smoothinglossreverse}. 
The proof follows from the definition of $q$-norms; the makespan objective is the $\infty$-norm of the load vector. 

\begin{lemma} \label{lem:splittingislossy}
Suppose that we have a set of $m$ related machines with positive speeds $s_1,\ldots,s_m$, and a prescribed splitting $(M_i, \{s_r\}_{r \in M_i})$ of machine $i$ for each $i \in [m]$. 
Let $p \in \R_{\geq 0}^{\jobset}$ be an arbitrary job-size vector and $\asgn : \jobset \to \bigcup_{i \in [m]} M_i$ be an arbitrary assignment of the jobs to the split machines.   
Define $\asgn^\prime : \jobset \to [m]$ as follows: for each job $j \in \jobset$, $\asgn^\prime(j)$ is the unique $i$ such that $\asgn(j) \in M_i$, i.e., $\asgn^\prime$ is obtained by undoing the splitting operation. 
For any $q \geq 1$, we have $\norm{\loadvec(\asgn^\prime,p)}_q \leq \norm{\loadvec(\asgn,p)}_q$. 
\end{lemma}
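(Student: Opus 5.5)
The plan is to work machine by machine. The loads of $\asgn$ and $\asgn^\prime$ are related locally: for each original machine $i$, the load on $i$ under $\asgn^\prime$ can be expressed through the loads on the split machines $r \in M_i$ under $\asgn$. We then aggregate over $i$ using the definition of the $q$-norm, treating finite $q$ and $q=\infty$ separately.

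Fix $i \in [m]$. For $r \in M_i$, write $V_r \coloneqq \sum_{j : \asgn(j) = r} p_j$ for the volume placed on split machine $r$. The jobs with $\asgn^\prime(j)=i$ are exactly those with $\asgn(j) \in M_i$. Hence
\[
\load(i;\asgn^\prime,p) = \frac{\sum_{r \in M_i} V_r}{s_i} = \sum_{r \in M_i} \frac{s_r}{s_i}\cdot \frac{V_r}{s_r} = \sum_{r \in M_i} \lambda_r \, \load(r;\asgn,p),
\]
where $\lambda_r \coloneqq s_r/s_i > 0$. By Definition~\ref{def:machinesplitting}, $\sum_{r\in M_i}\lambda_r = 1$. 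So the load of $i$ is a convex combination of the loads of its split machines.

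For $q=\infty$, a convex combination is at most its maximum term. Thus $\load(i;\asgn^\prime,p) \le \max_{r\in M_i}\load(r;\asgn,p)$, and taking the maximum over $i$ gives the claim.

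For finite $q \ge 1$, the map $t \mapsto t^q$ is convex on $\Rp$, so Jensen's inequality gives
\[
\load(i;\asgn^\prime,p)^q \le \sum_{r\in M_i}\lambda_r\,\load(r;\asgn,p)^q \le \sum_{r \in M_i}\load(r;\asgn,p)^q,
\]
where the second step uses $\lambda_r \le 1$. The sets $M_i$ partition the split machines $\bigcup_i M_i$. Summing over $i$ and taking $q$-th roots therefore yields $\norm{\loadvec(\asgn^\prime,p)}_q \le \norm{\loadvec(\asgn,p)}_q$.

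There is no real obstacle here. The only point needing care is the convex-combination identity, which depends on the split speeds summing exactly to $s_i$. The crude bound $\lambda_r\le 1$ is enough for the final step.
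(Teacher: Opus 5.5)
Your proof is correct and takes essentially the same route as the paper. Both arguments write $\load(i;\asgn^\prime,p)$ as a speed-weighted average of the loads on the split machines in $M_i$, and then aggregate over the partition $\{M_i\}$. The only cosmetic difference is the per-machine bound: the paper bounds this average by $\max_{r\in M_i}\load(r;\asgn,p)\le \norm{(\load(r;\asgn,p))_{r\in M_i}}_q$, whereas you use Jensen's inequality and then drop the weights via $\lambda_r\le 1$.
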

\begin{proof}
Fix a machine $i \in [m]$ and consider the set of jobs $J_i \coloneqq \{ j \in \jobset : \asgn(j) \in M_i\}$ that are processed by a machine in $M_i$ under the assignment $\asgn$. 
Using an averaging argument, we have: 
\begin{multline*}
\load(i;\asgn^\prime,p) = \frac{p(J_i)}{s_i} \leq \max_{r \in M_i} \frac{\sum_{j : \asgn(j) = r} p_j}{s_r} = \max_{r} \load(r;\asgn,p) \leq \norm{(\load(r;\asgn,p))_{r \in M_i}}_q.
\end{multline*}
We raise the first and the last terms in the above inequality to their $q$th powers to obtain: 
\begin{equation*}
\bigl(\norm{\loadvec(\asgn^\prime,p)}_q\bigr)^q = \sum_{i \in [m]} \bigl(\load(i;\asgn^\prime,p)\bigr)^q \leq \sum_{i \in [m], r \in M_i} \bigl(\load(r;\asgn,p)\bigr)^q = \bigl(\norm{\loadvec(\asgn,p)}_q\bigr)^q. \tag*{\qedhere} 
\end{equation*}
\end{proof}

Observe that it is straightforward to ensure that the \smoothed machines satisfy \Cref{lem:smoothing}~\ref{prop:smoothinglossreverse} because none of the following operations leads to a decrease in the makespan objective: rounding down the speeds; splitting a machine into one or more slower machines; and discarding some machines.  
Thus, the crux is in performing the \smoothing operation while ensuring  \Cref{lem:smoothing}~\ref{prop:smoothinglossforward}. 

\begin{proofof}{\Cref{lem:smoothing}}
We prove the lemma by describing the procedure for obtaining the set of \smoothed machines $M^\prime$ through a sequence of \smoothing steps where we keep track of the mappings $f : M \to M^\prime$ and $g : M^\prime \to M$.
To minimize notational heft, we assume that the multiplicative gap between the fastest and slowest speed is at most $m$ and that the speeds are already \emph{rounded}.
For the former, we may discard every machine whose speed is smaller than $s_{\max}/m$, where $s_{\max}$ is the largest machine speed, and map such machines to a fastest machine. 
Doing this may increase the makespan objective by at most a factor $2$ since the total speed of all discarded machines is at most the speed of a fastest machine.   
For the latter, we round down all the speeds to the nearest power of $2$, and this can lead to an increase in the makespan by at most a factor $2$. 
Overall, the forward approximation loss, attributable to the above \smoothing steps, is at most $4$, so there is still room for another loss of a factor $3$ to establish property~\ref{prop:smoothinglossforward}. 
Note that the makespan objective does not decrease when speeds are rounded down. 
The mapping $g : M^\prime \to M$ will be implicitly defined as each machine $r \in M^\prime$ arises from some machine $i \in M$ via splitting, so $g(r) \coloneqq i$. 
Thus, by Lemma~\ref{lem:splittingislossy}, claim~\ref{prop:smoothinglossreverse} (reverse approximation loss) always holds. 
The mapping $f : M \to M^\prime$, however, will require some care in its definition, so we update it in each \smoothing step. 
We remark that $f(i)$ for a machine $i \in M$ is updated at most once, and this is done just before $i$ undergoes a nontrivial splitting. 

Initially, $M^\prime$ comprises a copy of each machine in $M$, and both $f : M \to M^\prime$ and $g : M^\prime \to M$ are equal to the corresponding identity maps. 
We use $i$ and $r$ to index a machine in $M$ and $M^\prime$, respectively. 
If the index $i$ is used to denote a machine in $M^\prime$, then it necessarily means that the machine has not undergone any nontrivial splittings, and so it appears in both $M$ and $M^\prime$. 
We group the machines in $M^\prime$ based on their speeds, and let $C_1,\ldots,C_L$ be the speed classes in decreasing order of speeds. 
Clearly, the roundedness property~\ref{prop:smoothpowersof2} and the geometric-speeds property~\ref{prop:smoothgeomspeeds} are satisfied. 
We shall ensure that these properties are maintained at the end of each of the upcoming \smoothing steps. 
Next, we focus on attaining the geometric-processing-powers property~\ref{prop:smoothgeomcomputes} iteratively for each $\ell \in [L]$ in increasing order of $\ell$. 
The property is vacuously true for $\ell = 1$. 
Suppose property~\ref{prop:smoothgeomcomputes} holds for all $\ell \in [\ell_0]$, for some $\ell_0 \in [L]$, and let $\ell_0$ denote the largest index for which this is true. 
If $\ell_0 = L$, then we have nothing left to do,  so we terminate the \smoothing procedure. 
Otherwise, consider the largest $\ell_1 \in \{\ell_0+1,\ldots,L\}$ for which the following inequality holds: 
\[
\sum_{\ell=\ell_0+1}^{\ell_1} |C_\ell| \cdot \s_\ell < \sum_{\ell = 1}^{\ell_0} |C_{\ell}| \cdot \s_{\ell}.
\]
We remark that such an $\ell_1$ exists by the maximality of $\ell_0$. 
Note that property~\ref{prop:smoothgeomcomputes} for $\ell_0$ implies that the RHS in the above inequality can be upper bounded by ${2 |C_{\ell_0}| \s_{\ell_0}}$. 
A nice consequence of this observation and the fact that the speeds are powers of $2$ is that the entire load on machines in $\bigcup_{\ell = \ell_0+1}^{\ell_1} C_{\ell}$ can be reassigned to (or redistributed among) the machines in $C_{\ell_0}$. 
This reassignment does not make the makespan  (over the machines in $\bigcup_{\ell = \ell_0}^{\ell_1} C_{\ell}$) grow by a factor greater than $3$, for any choice of job sizes $p$ and under any assignment $\asgn$. 
We use this reassignment to update the mapping $f$: for any machine $i \in \bigcup_{\ell = \ell_0+1}^{\ell_1} C_{\ell}$, we set $f(i) \coloneqq r$ for a suitable machine $r \in C_{\ell_0}$. 
Note here that $\bigcup_{\ell = \ell_0+1}^{\ell_1} C_{\ell} \subseteq M$ since none of these machines have undergone a nontrivial splitting so far. 
However, as will be clear soon, machines in $C_{\ell_0}$ may have arisen from nontrivial splittings of machines in $M$ that were performed in previous iterations.   
Formally, we choose the reassignment so that the following inequality holds for all $r \in C_{\ell_0}$; such a reassignment exists by a greedy packing argument, since the total speed to be reassigned is less than $2 |C_{\ell_0}| \s_{\ell_0}$ and every reassigned machine has speed at most $\s_{\ell_0}/2$:
\begin{equation} \label{eq:smoothreassignment}
\sum_{\ell=\ell_0+1}^{\ell_1} \s_\ell \cdot |\{ i \in C_\ell : f(i) = r \} | \leq 2 \cdot \s_{\ell_0}.
\end{equation}
If $\ell_1 = L$ at this point, then we discard all machines in $\bigcup_{\ell = \ell_0+1}^{\ell_1} C_{\ell}$ from $M^\prime$ and terminate the \smoothing procedure; this is done because there are not enough machines left to form another speed class with sufficiently large total processing power. 
Otherwise, for each $\ell \in \{\ell_0+1,\ldots,\ell_1\}$ and each machine $i \in C_\ell$, we split the machine $i$ into $\s_\ell / \s_{\ell_1+1}$ slower machines each with speed $\s_{\ell_1+1}$; we implicitly set $g(r)$ to $i$ for each of the split machines $r$ and remove $i$ from $C_\ell$. 
Each of the classes $C_{\ell_0+1},\ldots,C_{\ell_1}$ becomes empty after the above splittings have been performed. 
We discard these empty classes, re-index the subsequent classes, and update $L$. 
Observe that property~\ref{prop:smoothgeomcomputes} now holds for $\ell_0 + 1$. 
We repeat the above procedure iteratively until $\ell_0$ or $\ell_1$ equals $L$. 
Since the ratio between the largest and smallest speed is at most $m$ after the initial reduction, each original machine is split into at most $m$ machines throughout the procedure. 
Thus, $|M^\prime| = O(m^2)$.

It is straightforward to see that properties~\ref{prop:smoothpowersof2}-\ref{prop:smoothgeomcomputes} of a \smoothed collection hold at the end of the \smoothing procedure.  
We already argued that claim~\ref{prop:smoothinglossreverse} follows from Lemma~\ref{lem:splittingislossy}. 
Claim~\ref{prop:smoothinglossforward} holds because of our choice of updating $f$ to satisfy \eqref{eq:smoothreassignment}. 
\end{proofof}

Where necessary to disambiguate the instance $\I$ under consideration when describing the cost of a reservation $\mset$, we use the notation $\cost_\I(\mset)$.
This is especially relevant when the machine speeds are modified. 

\begin{lemma} \label{lem:optundersmoothing} 
Suppose we are given an \slbr instance $\I = (\{P_j\}_{j \in \jobset}, \{s_i\}_{i \in M}, k)$ where $\jobset$ is the job-set, $M$ is a set of related machines, and $P_j$ is the size-distribution of job $j$. 
We can efficiently compute a set of \smoothed machines $M^\prime$ with $|M^\prime| = O(|M|^2)$ such that the instance $\I^\prime$ obtained from $\I$ by replacing $M$ with $M^\prime$ has the following properties: 
\begin{enumerate}[(a)]
    \item For any reservation $\sg : \jobset \to \subsets{M}$, there is a reservation $\sgp : \jobset \to \subsets{M^\prime}$ with $|\sgp(j)| \leq |\sg(j)|$ for all $j \in \jobset$ that also satisfies $\cost_{\I^\prime}(\sgp) = O(\cost_{\I}(\sg))$. 

    \item For any reservation $\sgp : \jobset \to \subsets{M^\prime}$, there is a reservation $\sg : \jobset \to \subsets{M}$ with $|\sg(j)| \leq |\sgp(j)|$ for all $j \in \jobset$ that also satisfies $\cost_{\I}(\sg) \leq \cost_{\I^\prime}(\sgp)$. 
\end{enumerate}
In particular, for all $k \in \Z_{>0}$, we have $\OPT_k(\I) \leq \OPT_k(\I^\prime) \leq O(\OPT_k(\I))$.
\end{lemma}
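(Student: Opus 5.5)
We take $M'$ to be the smoothed collection produced by \Cref{lem:smoothing}, together with its universal maps $f : M \to M'$ and $g : M' \to M$. Both parts of the statement come from pushing reservations forward or backward through these maps, applied pointwise for every realization $p$ of the job sizes. The final inequality chain is then immediate. Part (b) applied to an optimal $k$-reservation of $\I'$ gives $\OPT_k(\I) \leq \OPT_k(\I')$. Part (a) applied to an optimal $k$-reservation of $\I$ gives $\OPT_k(\I') = O(\OPT_k(\I))$. In both cases the cardinality bounds $|\sgp(j)|\le|\sg(j)|$ and $|\sg(j)|\le|\sgp(j)|$ keep the new reservation a $k$-reservation.

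For part (a), given $\sg$, we define $\sgp(j) \coloneqq f(\sg(j)) = \{f(i) : i \in \sg(j)\}$. This set is nonempty and satisfies $|\sgp(j)| \le |\sg(j)|$. Fix any realization $p$, and let $\asgn$ be an optimal $\sg$-consistent assignment, so $\mksp(\asgn;p) = \cost_\I(\sg;p)$. The composition $f\circ\asgn$ is $\sgp$-consistent, since $\asgn(j)\in\sg(j)$ implies $f(\asgn(j)) \in \sgp(j)$. By \Cref{lem:smoothing}\ref{prop:smoothinglossforward}, we get $\cost_{\I'}(\sgp;p) \le \mksp(f\circ\asgn;p) \le 12\,\cost_\I(\sg;p)$. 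Taking expectations over $P$ yields $\cost_{\I'}(\sgp)\le 12\,\cost_\I(\sg)$.

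For part (b), we argue symmetrically. Set $\sg(j) \coloneqq g(\sgp(j))$. For any $p$, take an optimal $\sgp$-consistent assignment $\asgn$ on $M'$. Then $g\circ\asgn$ is $\sg$-consistent, and \Cref{lem:smoothing}\ref{prop:smoothinglossreverse} gives $\cost_\I(\sg;p)\le \mksp(g\circ\asgn;p)\le \cost_{\I'}(\sgp;p)$. Taking expectations finishes this part.

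The only point needing care is that the maps $f$ and $g$ are universal: they do not depend on $p$ or $\asgn$. This is exactly why the reservation can be fixed before the sizes are realized while the pointwise bound still holds for every realization. \Cref{lem:smoothing} supplies this property, so I expect no real obstacle beyond checking that consistency is preserved under images and that images of nonempty sets are nonempty. The bound $|M'| = O(|M|^2)$ and efficient computability are also inherited directly from \Cref{lem:smoothing}.
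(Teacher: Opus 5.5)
Your proposal is correct and follows essentially the same route as the paper: you take images of reservations under the universal maps $f$ and $g$ from \Cref{lem:smoothing}, compose with an optimal consistent assignment for each fixed realization, apply the forward and reverse makespan bounds pointwise, and then take expectations. You also spell out how the chain $\OPT_k(\I) \leq \OPT_k(\I^\prime) \leq O(\OPT_k(\I))$ follows from the cardinality bounds, a step the paper leaves implicit.
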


\begin{proof}
    This is now an immediate corollary of \Cref{lem:smoothing}.
    Construct the \smoothed  collection $M^\prime$ according to \Cref{lem:smoothing}, and let $\I^\prime$ denote the corresponding instance.
        Given a reservation $\sg$ for $\I$, simply take $\sgp(j)$ to be the image of $\sg(j)$ under $f$. Note that $|\sgp(j)| \leq |\sg(j)|$. 
    Then for any $\sg$-\consistent assignment $\asgn$, the assignment $\asgn^\prime = f \circ \asgn$ is $\sgp$-\consistent in $\I^\prime$, and $\mksp_{\I^\prime}(\asgn^\prime;P) \leq 12 \cdot \mksp_{\I}(\asgn;P)$. (Here, we subscript by the appropriate instance to ensure no ambiguity.) 
    Conversely, given a reservation $\sgp$ for $\I^\prime$, take $\sg(j)$ to be the image of $\sgp(j)$ under $g$. Note that $|\sg(j)| \leq |\sgp(j)|$.
    Then for any $\sgp$-\consistent assignment $\asgn^\prime$, the assignment $\asgn = g \circ \asgn^\prime$ is $\sg$-\consistent in $\I$, and $\mksp_{\I}(\asgn;P) \leq \mksp_{\I^\prime}(\asgn^\prime;P)$. 
\end{proof}

\subsection{Concentration Bounds}

We now state Chernoff/Hoeffding concentration inequalities that are frequently used in our results. 
The general form is given below. 
For a reference, see the chapter entitled \emph{Chernoff and Hoeffding Bounds} in \cite{MU-book}. 

\begin{theorem} \label{thm:chernoff} 
Let $X_1,\dots,X_n$ be independent $[0,1]$-bounded random variables, and $X = \sum_{i=1}^n X_i$ denote their sum.
We have: 
\begin{enumerate}
\item For any $\mu \geq \E[X]$ and $\delta \geq 0$, we have:
\begin{equation} \label{eq:chernoffuppertail}
\Pr[X \geq (1+\delta)\mu] \leq \left(\frac{\euler^{\delta}}{(1+\delta)^{1+\delta}}\right)^{\mu} 
\leq \left(\frac{\euler}{1+\delta}\right)^{(1+\delta)\mu} 
\end{equation}
\item For any $\mu \leq \E[X]$ and $\delta \in [0,1)$, we have:
\begin{equation} \label{eq:chernofflowertail}
    \Pr[X \leq (1-\delta)\mu] \leq \left(\frac{\euler^{-\delta}}{(1-\delta)^{1-\delta}}\right)^{\mu} \leq \euler^{-\mu \delta^2/2}.
\end{equation}
\end{enumerate}

\end{theorem}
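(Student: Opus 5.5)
The plan is the standard moment-generating-function (Chernoff) method. Both tails are reduced, via Markov's inequality applied to $\euler^{tX}$, to a bound on $\E[\euler^{tX}]$. That bound comes from convexity and independence.

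\emph{Bounding the MGF.} Fix any real $t$. Since $x\mapsto \euler^{tx}$ is convex on $[0,1]$, for $X_i\in[0,1]$ we have $\euler^{tX_i}\le 1-X_i+X_i\euler^{t}$. Taking expectations and using $1+y\le \euler^{y}$ gives
\[
\E[\euler^{tX_i}]\le 1+\E[X_i](\euler^t-1)\le \exp\bigl(\E[X_i](\euler^t-1)\bigr).
\]
By independence, $\E[\euler^{tX}]=\prod_i \E[\euler^{tX_i}]\le \exp\bigl(\E[X](\euler^t-1)\bigr)$.

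\emph{Upper tail.} Take $t>0$. Then $\euler^t-1>0$, so the hypothesis $\mu\ge \E[X]$ lets us replace $\E[X]$ by $\mu$, giving $\E[\euler^{tX}]\le \exp(\mu(\euler^t-1))$. Markov's inequality then yields
\[
\Pr[X\ge (1+\delta)\mu]\le \exp\bigl(\mu(\euler^t-1)-t(1+\delta)\mu\bigr).
\]
Choosing $t=\ln(1+\delta)$ gives exactly $\bigl(\euler^{\delta}/(1+\delta)^{1+\delta}\bigr)^{\mu}$. The second inequality in \eqref{eq:chernoffuppertail} follows from $\euler^{\delta}\le \euler^{1+\delta}$. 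The degenerate cases $\delta=0$ or $\mu=0$, where the bound is at least $1$ or is handled by the convention $\euler^0=1$, are trivial and are treated separately.

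\emph{Lower tail.} Apply the same argument to $-X$, that is, use $t=-s$ with $s>0$. Now $\euler^{-s}-1<0$, so the hypothesis $\mu\le \E[X]$ is what is needed to replace $\E[X]$ by $\mu$, giving $\E[\euler^{-sX}]\le \exp(\mu(\euler^{-s}-1))$. Markov's inequality on $\euler^{-sX}$ yields
\[
\Pr[X\le(1-\delta)\mu]\le \exp\bigl(\mu(\euler^{-s}-1)+s(1-\delta)\mu\bigr).
\]
Setting $s=-\ln(1-\delta)$, which is valid for $\delta\in(0,1)$, gives $\bigl(\euler^{-\delta}/(1-\delta)^{1-\delta}\bigr)^{\mu}$. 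For the final simplification we use $(1-\delta)\ln(1-\delta)\ge -\delta+\delta^2/2$ on $[0,1)$. This is checked by noting that the difference vanishes at $\delta=0$ and has derivative $-\ln(1-\delta)-\delta\ge 0$. This gives $\euler^{-\mu\delta^2/2}$.

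\emph{Main obstacle.} The only real subtlety is the direction of the substitution $\E[X]\to\mu$. It is legitimate precisely because the sign of $\euler^t-1$ matches the hypothesis ($\mu\ge\E[X]$ for the upper tail, $\mu\le\E[X]$ for the lower tail). I would handle this step carefully; everything else is routine calculus.
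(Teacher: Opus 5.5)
Your proposal is correct: it is the standard moment-generating-function argument (the convexity bound on $\E[\euler^{tX_i}]$, independence, Markov's inequality, and the choices $t=\ln(1+\delta)$ and $t=\ln(1-\delta)$). The paper gives no proof of its own and simply cites the Chernoff--Hoeffding chapter of Mitzenmacher and Upfal, where these bounds are derived by this same method. The one implicit assumption is $\mu\ge 0$ when you raise $\euler^{-\delta}/(1-\delta)^{1-\delta}\le \euler^{-\delta^2/2}$ to the power $\mu$; this is harmless, since for $\mu<0$ the event $X\le(1-\delta)\mu$ has probability $0$.
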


\noindent The following result is derived by integrating the upper-tail bound. 
This result is convenient when $\delta = \Omega(1)$.  

\begin{lemma} \label{lem:uppertailmass}
Let $X_1,\dots,X_n$ be independent $[0,1]$-bounded random variables and $X \coloneqq \sum_{i=1}^n X_i$ denote their sum. 
For any $\delta \geq \euler^2 - 1$ and $B \geq (1+\delta) \E[X]$, we have $\E[\max(0,X-B)] \leq (\euler/(1+\delta))^B$. 
\end{lemma}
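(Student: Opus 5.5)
We will write $\E[\max(0,X-B)] = \int_0^\infty \Pr[X \geq B + t]\,dt$ and bound the integrand using the second form of the upper-tail bound \eqref{eq:chernoffuppertail} from \Cref{thm:chernoff}. The key observation is that we may apply \Cref{thm:chernoff} with any $\mu \geq \E[X]$. For a threshold $B+t$, we choose $\mu$ so that the threshold is exactly $(1+\delta)\mu$, i.e., we keep $\delta$ fixed and set $\mu_t \coloneqq (B+t)/(1+\delta)$. Since $B \geq (1+\delta)\E[X]$, we have $\mu_t \geq B/(1+\delta) \geq \E[X]$, so the hypothesis is met. This yields
\[
\Pr[X \geq B+t] \leq \left(\frac{\euler}{1+\delta}\right)^{(1+\delta)\mu_t} = \left(\frac{\euler}{1+\delta}\right)^{B+t}.
\]

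Write $a \coloneqq \ln\bigl((1+\delta)/\euler\bigr)$. The assumption $\delta \geq \euler^2 - 1$ gives $1+\delta \geq \euler^2$, and hence $a \geq 1$. Integrating the tail bound gives
\[
\E[\max(0,X-B)] \leq \int_0^\infty \euler^{-a(B+t)}\,dt = \frac{\euler^{-aB}}{a} \leq \left(\frac{\euler}{1+\delta}\right)^{B}.
\]
The last inequality uses $a \geq 1$, and this is exactly where the condition $\delta \geq \euler^2-1$ enters.

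The only subtle point, and the one we expect to need care, is justifying the use of a $t$-dependent $\mu$ with a fixed $\delta$, rather than fixing $\mu=\E[X]$ and letting $\delta$ grow with $t$. Fixing $\mu=\E[X]$ would give a messier bound, and it degenerates when $\E[X]=0$. The justification is that \Cref{thm:chernoff} is stated for arbitrary $\mu \geq \E[X]$, so monotonicity is built into its hypothesis and no further work is needed. The rest of the argument is the routine integral computation above.
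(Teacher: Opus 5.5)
Your proposal is correct and matches the paper's proof: fix $\delta$, take $\mu = (B+t)/(1+\delta) \geq \E[X]$ in the second form of the Chernoff upper tail, and integrate the resulting tail bound over $t$. As in the paper, the final step uses $(1+\delta)/\euler \geq \euler$ to discard the factor $1/\ln\bigl((1+\delta)/\euler\bigr) \leq 1$.
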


\begin{proof}
For any $t \geq 0$ and $\mu \coloneqq (B+t)/(1+\delta) \geq \E[X]$, the upper-tail bound in Theorem~\ref{thm:chernoff} yields $\Pr[X \geq B+t] \leq (\euler/(1+\delta))^{B+t}$. 
Thus, 
\begin{equation*}
\E[\max(0,X-B)] = \int_0^{\infty} \Pr[X \geq B+t] dt \leq \left(\frac{\euler}{1+\delta}\right)^{B} \cdot \int_0^{\infty} \left(\frac{\euler}{1+\delta}\right)^{t} dt \leq \left(\frac{\euler}{1+\delta}\right)^{B},
\end{equation*}
where we use $(1+\delta)/\euler \geq \euler$ and $\int_0^{\infty} a^t dt = -1/\ln a$ when $a \in (0,1)$. 
\end{proof}

\subsection{Expected Maximum of Independent Random Variables}

The following lemma provides a convenient way to estimate the expected maximum of a collection of independent random variables.

\begin{lemma}[See \cite{IbrahimpurS20}] \label{lem:rho2apx}
Let $\{X_i\}_{i \in I}$ be a finite collection of independent nonnegative random variables with finite mean. 
Let $\theta \geq 0$ be a real number.  
The following are true: 
\begin{enumerate}[(i)]
\item If $\sum_i \E[X_i \cdot \bone(X_i \geq \theta)] > \theta$, then $\E[\max_i X_i] > \theta/2$.
\item If $\sum_i \E[X_i \cdot \bone(X_i \geq \theta)] \leq \theta$, then $\E[\max_i X_i] \leq 2 \theta$.
\end{enumerate}
\end{lemma}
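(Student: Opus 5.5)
Both parts follow from a truncation argument organized around the quantity $S(\theta) \coloneqq \sum_i \E[X_i \cdot \bone(X_i \geq \theta)]$. Let $Y \coloneqq \max_i X_i$.

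\emph{Part (ii).} We bound $Y$ pointwise by splitting into small and large values:
\[
Y \leq \theta + \sum_i X_i \bone(X_i \geq \theta).
\]
This holds because either every $X_i < \theta$, in which case $Y < \theta$, or the maximizer satisfies $X_i \geq \theta$ and so appears in the sum. Taking expectations gives $\E[Y] \leq \theta + S(\theta) \leq 2\theta$. Independence is not needed here.

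\emph{Part (i).} We prove the contrapositive in a slightly different form: we show directly that $\E[Y] > \theta/2$ whenever $S(\theta) > \theta$. Let $q \coloneqq \Pr[\exists i : X_i \geq \theta]$. Since $Y \geq \theta$ on this event, $\E[Y] \geq \theta q$, so if $q \geq 1/2$ we are done (with $\geq$; strictness can be recovered below). Assume therefore $q < 1/2$. Write $E_i \coloneqq \{X_i \geq \theta\}$ and $Z_i \coloneqq X_i \bone(E_i)$. The key step uses independence: for each $i$, the events $E_i$ and $\bigcap_{i' \neq i} \overline{E_{i'}}$ are independent, and $\Pr\bigl[\bigcap_{i'\neq i}\overline{E_{i'}}\bigr] \geq 1-q$. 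On the intersection, $Y = X_i$. Since the events $E_i \cap \bigcap_{i'\neq i}\overline{E_{i'}}$ are disjoint,
\[
\E[Y] \geq \sum_i \E\Bigl[Z_i \, \bone\Bigl(\textstyle\bigcap_{i'\neq i}\overline{E_{i'}}\Bigr)\Bigr] = \sum_i \E[Z_i]\Pr\Bigl[\textstyle\bigcap_{i'\neq i}\overline{E_{i'}}\Bigr] \geq (1-q)\, S(\theta) > \tfrac12 \theta.
\]
Here the factorization of the expectation uses that $Z_i$ depends only on $X_i$.

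For the boundary case $q \geq 1/2$, strictness follows because $S(\theta) > \theta$ forces some $\E[Z_i] > 0$. More simply, combine the two bounds: $\E[Y] \geq \max(\theta q, (1-q)S(\theta))$. If $q > 1/2$, the first term already exceeds $\theta/2$ strictly. If $q = 1/2$, the second gives $\tfrac12 S(\theta) > \theta/2$. The case $\theta = 0$ is trivial, since $S(0) = \sum_i \E[X_i] > 0$ implies $\E[Y] > 0$.

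The only delicate step is the independence factorization in part (i), specifically ensuring that the lower bound $\Pr\bigl[\bigcap_{i'\neq i}\overline{E_{i'}}\bigr] \geq 1 - q$ holds. This follows because the intersection contains $\bigcap_{i'} \overline{E_{i'}}$, whose probability is exactly $1-q$.
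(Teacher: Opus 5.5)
The paper gives no proof of this lemma; it quotes it from \cite{IbrahimpurS20}, so there is no in-paper argument to compare against. Your proof is correct and is the standard argument for this fact.
\begin{itemize}
\item For (ii), the pointwise bound $\max_i X_i \le \theta + \sum_i X_i \bone(X_i \ge \theta)$ suffices, and needs no independence.
\item For (i), your two lower bounds cover every value of $q = \Pr[\exists i : X_i \ge \theta]$. These are $\E[\max_i X_i] \ge \theta q$ and $\E[\max_i X_i] \ge (1-q)\sum_i \E[X_i \bone(X_i \ge \theta)]$. The second follows from the disjointness of the events ``only $X_i$ reaches $\theta$'' together with independence.
\end{itemize}
Your handling of the strict inequality is also right, including treating $\theta = 0$ separately, where $q = 1$ makes both bounds vacuous. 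One cosmetic point: what you call a contrapositive is actually a direct proof.
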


\section{Identical Machines} \label{sec:identical} 

We prove Theorem~\ref{thm:iden2options} in this section.
We show that the stated guarantee holds in fact for a random $2$-reservation $\sg$, obtained as follows: for each job $j \in \jobset$, we sample two machines $i_1(j)$ and $i_2(j)$ independently and uniformly from $[m]$, and set $\sg(j) = \{i_1(j),i_2(j)\}$; note $|\sg(j)| = 1$ if the same machine is sampled twice. 
Note that the computation of $\sg$ does not depend on the underlying job-size distributions. Consequently, the guarantee in Theorem~\ref{thm:iden2options} holds even with arbitrarily correlated jobs. 
The following claim is straightforward. 

\begin{claim} \label{clm:jlandsinR}
For $j \in \jobset$ and nonempty $R \subseteq [m]$ we have $\Pr[\sg(j) \subseteq R] = |R|^2/m^2$. 
\end{claim}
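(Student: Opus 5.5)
The claim follows directly from how the reservation is sampled. Fix a job $j \in \jobset$ and a nonempty set $R \subseteq [m]$. By construction, $\sg(j) = \{i_1(j), i_2(j)\}$, where $i_1(j)$ and $i_2(j)$ are drawn independently and uniformly from $[m]$. Because $\sg(j)$ is exactly the set of the two sampled machines, the event $\{\sg(j) \subseteq R\}$ is the same as the event $\{i_1(j) \in R\} \cap \{i_2(j) \in R\}$. This also holds in the degenerate case $i_1(j) = i_2(j)$, where $\sg(j)$ is a singleton: the singleton lies in $R$ exactly when that common machine does.

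Uniformity gives $\Pr[i_1(j) \in R] = \Pr[i_2(j) \in R] = |R|/m$. Independence of the two draws then gives
\[
\Pr[\sg(j) \subseteq R] = \Pr[i_1(j) \in R] \cdot \Pr[i_2(j) \in R] = \frac{|R|}{m}\cdot\frac{|R|}{m} = \frac{|R|^2}{m^2}.
\]

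No step here is a real obstacle. The only point needing care is that sampling with replacement, which may produce a singleton reservation, does not change the event identity above. We also record, for later use, that the events $\{\sg(j) \subseteq R\}$ are independent across different jobs $j$, since each job's machines are sampled independently. This independence is what will allow Chernoff bounds (\Cref{thm:chernoff}) to be applied to $\sum_{j:\sg(j)\subseteq R} p_j$ in the density argument based on \Cref{lem:density}.
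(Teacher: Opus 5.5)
Your proof is correct and is the direct argument the paper has in mind; the paper gives no proof and simply calls the claim straightforward. The event $\{\sigma(j)\subseteq R\}$ is exactly the event that both independent uniform draws land in $R$, which gives $(|R|/m)^2$, and your handling of the singleton case and your remark on independence across jobs are both accurate.
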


We show $\E_{\sg}[\cost(\sg)] = O(\OPT_m)$ by showing the same inequality for any choice of $p \in \Rp^{\jobset}$: $\E_{\sg}[\cost(\sg;p)] = O(\max(\max_j p_j,\sum_j p_j/m))$ holds. 
That is, choosing two machines independently and uniformly at random for each job is sufficient to ensure a constant-factor approximation guarantee against the optimal makespan. 
This holds even for an adversarial choice of $p$, as long as the adversary is oblivious to the choice of $\sg$. 
For the special case of deterministic unit-size jobs, a similar result was obtained by Azar et al.~\cite{AzarBKU99}; see their Lemma 6.1.

\begin{theorem} \label{thm:twochoicessuffice}
Let $p \in \Rp^{\jobset}$ be arbitrary, and let
$p_\mx \coloneqq \max_{j \in \jobset} p_j$ and
$p_\total \coloneqq \sum_{j \in \jobset} p_j$.
Then 
\[
\E_{\sg}[\cost(\sg;p)] = O(\max(p_\mx,p_\total/m)).
\]
\end{theorem}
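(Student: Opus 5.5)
By \Cref{lem:density}(ii) applied with unit speeds and $|\sg(j)|\le 2$, we have $\cost(\sg;p)\le 2\,\denlb(p;\sg,\mathbf{1})$. It therefore suffices to bound $\E_\sg[\max_{R\neq\emptyset} X_R/|R|]$, where $X_R\coloneqq\sum_{j:\sg(j)\subseteq R}p_j$.

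First normalize. Set $T\coloneqq\max(p_\mx,p_\total/m)$ and scale so that $T=1$. Then every $p_j\in[0,1]$ and $p_\total\le m$. For fixed $R$ with $|R|=r$, the variable $X_R$ is a sum of independent $[0,1]$-bounded random variables $p_j\bone(\sg(j)\subseteq R)$; independence holds because the pairs $(i_1(j),i_2(j))$ are independent across jobs. By \Cref{clm:jlandsinR}, $\E[X_R]=p_\total r^2/m^2\le r^2/m$. Since $\E[\max(c,Y)]\le c+\E[\max(0,Y-c)]$ and the maximum of nonnegative terms is at most their sum, we get
\[
\E\Bigl[\max_R \tfrac{X_R}{|R|}\Bigr]\le c+\sum_{r=1}^m \frac1r\binom{m}{r}\,\E[\max(0,X_R-cr)]
\]
for a suitable constant $c$. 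Each term of the sum will be bounded with \Cref{lem:uppertailmass}, taking $B=cr$.

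The two key insights then become the regimes for $r$.

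\emph{Large sets.} Here $c\,r\ge(1+\delta)r^2/m$ holds automatically for every $r\le m$ once $c\ge 1+\delta$ and $\delta=\euler^2-1$, because $r^2/m\le r$. So \Cref{lem:uppertailmass} gives $\E[\max(0,X_R-cr)]\le(\euler/(1+\delta))^{cr}$. We need this tail to beat $\binom{m}{r}\le(\euler m/r)^r$, and that works only when $r$ is a constant fraction of $m$, i.e.\ when the factor $m/r$ is bounded.

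\emph{Small sets.} We exploit the large relative deviation. With $\mu=r^2/m$ and $1+\delta=cm/r$, \Cref{lem:uppertailmass} bounds the term by $(\euler r/(cm))^{cr}$. We need to verify $\delta\ge \euler^2-1$, which holds since $cm/r\ge c$. Then
\[
\binom{m}{r}\Bigl(\frac{\euler r}{cm}\Bigr)^{cr}\le\Bigl(\frac{\euler m}{r}\Bigr)^{r}\Bigl(\frac{\euler r}{cm}\Bigr)^{cr}=\Bigl[\frac{\euler^{1+c}}{c^c}\Bigl(\frac rm\Bigr)^{c-1}\Bigr]^{r}.
\]
For a large enough constant $c$ (say $c\ge 2\euler^2$) the bracket is at most $\euler^{1+c}/c^c$ times $(r/m)^{c-1}\le 1$. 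This is a constant much smaller than $1$, so the sum over $r$ is $O(1)$.

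This single choice actually covers all $r\in[1,m]$ at once: since $r\le m$, the bracket is uniformly at most $\euler^{1+c}/c^c<1/2$. Hence the whole sum is bounded by the geometric series $\sum_r 2^{-r}\le 1$, and
\[
\E[\cost(\sg;p)]\le 2(c+1)=O(1)\cdot T.
\]

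The main obstacle is the balance in the small-$r$ case, where the union bound over $\binom{m}{r}$ sets must be beaten by the tail bound. That is exactly where the quadratic probability $r^2/m^2$ from choosing two machines matters: with a single choice the mean would be $r/m$, and the tail $(r/m)^{cr}$ against $\binom mr$ would give only a $\log m/\log\log m$ bound. I would first double-check that the exponent $c-1$ in $(r/m)^{c-1}$ is positive, together with the hypothesis $B\ge(1+\delta)\E[X_R]$, when $p_\total<m$; in that case taking $\mu=r^2/m\ge\E[X_R]$ suffices. The remaining steps are routine.
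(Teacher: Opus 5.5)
Your proposal is correct and is essentially the paper's proof. The paper makes your uniform choice in \Cref{lem:uppertailmass}, with $c=\euler^2$: $B=\euler^2|R|$ and $1+\delta=\euler^2 m/|R|$. It then beats $\binom{m}{r}\le(\euler m/r)^r$ for all $r$ at once, so your separate ``large sets'' regime is unnecessary, as you yourself note at the end.
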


\begin{proof}
For simplicity, we scale down the instance by $\max(p_\mx,p_\total/m)$ so that $p_j \in [0,1]$ for all $j$ and $p_\total \leq m$ holds. 
It remains to argue that $\E_{\sg}[\cost(\sg;p)] = O(1)$.
By Lemma~\ref{lem:density}, we have:
\begin{flalign} \label{eq:idencostubden}
\cost(\sg;p) & = \min_{\substack{\sg\text{-\consistent} \\ \text{assignment } \asgn}} \mksp(\asgn;p) \leq 2 \cdot \max_{R \subseteq [m] : R \neq \emptyset} \frac{\sum_{j : \sg(j) \subseteq R} p_j}{|R|}.
\end{flalign}
We take an expectation over inequality~\eqref{eq:idencostubden} to obtain: 
\begin{flalign}
& \E_{\sg}[\cost(\sg;p)] \leq 2 \cdot \E_{\sg} \Bigl[ \max_{R \subseteq [m] : R \neq \emptyset} \frac{\sum_{j} p_j \cdot \bone(\sg(j) \subseteq R)}{|R|} \Bigr] \\ 
& \leq 2 \Bigl\{ \euler^2  + \sum_{R \subseteq [m] : R \neq \emptyset} \frac{1}{|R|} \E_{\sg} \Bigl[ \max\bigl(0, \sum_{j} p_j \cdot \bone(\sg(j) \subseteq R) - \euler^2 |R| \bigr) \Bigr] \Bigr\}, \label{eq:idenunionbound1}
\end{flalign}
where we use a union bound in the last inequality. 
Now, observe that for any nonempty $R \subseteq [m]$ the random variable $X \coloneqq \sum_{j} p_j \cdot \bone(\sg(j) \subseteq R)$ is a sum of independent $[0,1]$-bounded random variables. 
By Claim~\ref{clm:jlandsinR}, $\E[X] = p_{\total} \cdot |R|^2/m^2 \leq |R|^2/m$. 
Using Lemma~\ref{lem:uppertailmass} with $\delta = \euler^2 m/|R| - 1$ and $B = \euler^2 |R|$, we get: 
\[
\E_{\sg}[ \max(0, X - \euler^2 |R|)]  \leq (\euler m / |R|)^{-\euler^2 |R|}.
\]
Plugging the above bound into \eqref{eq:idenunionbound1} and using the well-known bound $\binom{m}{\ell} \leq (\euler m/\ell)^{\ell}$ we get:
\[
\E_{\sg}[\cost(\sg;p)] \leq 2 \euler^2 + 2 \cdot \sum_{\ell=1}^m \binom{m}{\ell} \cdot \frac{1}{\ell} \cdot \Bigl( \frac{\euler m}{\ell} \Bigr)^{-\euler^2 \ell} \leq 2 \euler^2 + 2 \cdot \sum_{\ell=1}^m \frac{\euler^{-6\ell}}{\ell} \leq 15. \tag*{\qedhere}
\]
\end{proof}

Our main result follows as a corollary of the above theorem. 

\begin{proof}[\Cref{thm:iden2options}]
The omniscient optimum is at least $\max(P_\mx,P_\total/m)$, where $P_{\mx} \coloneqq \max_j P_j$ and $P_{\total} \coloneqq \sum_j P_j$ are random variables that denote the maximum and the sum, respectively, of the job random variables.  
Thus, 
\begin{multline*} 
\E_{\sg}[\cost(\sg)] = \E_{\sg}[\E_P[\cost(\sg;P)]] = \E_{P}[\E_{\sg}[\cost(\sg;P)]] \\ \leq \E_{P}[15 \cdot \max(P_{\mx},P_{\total}/m)] = O(\OPT_m), 
\end{multline*}
where we use the Fubini-Tonelli theorem to exchange the expectations in the second equality.  
\end{proof}

\section{Price of Reservation Bounds for Related Machines} \label{sec:related-por}

In this section, we derive upper and lower bounds on the price of $k$-reservation for related machines.  
All our results in this section assume that the $P_j$s are mutually independent. 

\subsection{Lower Bounds} \label{relpor-lb}

Our main result of this section is the following. 

\relporlbthm* 

As a corollary, we can infer that the price of $k$-reservation for related machines is unbounded, if $k=o(\log m/\log \log m)$.

\begin{corollary} 
For any constant $c$, there is an \slbr instance on $m$ related machines, where $m$ is sufficiently large, such that $\OPT_k / \OPT_m \geq c$ for $k=o(\log m/\log \log m)$. 
\end{corollary}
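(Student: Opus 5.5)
This corollary follows directly from \Cref{relporlb-thm}; the only work is parameter bookkeeping. Fix the constant $c$ and let $k = k(m)$ satisfy $k = o(\log m/\log\log m)$. The plan is to check two things for all sufficiently large $m$: that \Cref{relporlb-thm} applies, i.e.\ $m = \exp(\Omega(k\log k))$, and that the guaranteed ratio $\Omega(\sqrt[2k]{m}/k)$ is at least $c$. The instance is then the one supplied by \Cref{relporlb-thm}.

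For applicability, note that $k\log k$ is increasing in $k$. Once $k \le \log m/\log\log m$, which holds eventually since $k = o(\log m/\log\log m)$, we get
\[
k\log k \le \frac{\log m}{\log\log m}\cdot \log\log m = \log m.
\]
Hence $m \ge \exp(k\log k)$, and \Cref{relporlb-thm} applies (its hidden constant being at most $1$ up to rescaling; otherwise use $k \le \epsilon \log m/\log\log m$ for a small enough $\epsilon$).

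For the ratio bound, write $k = \log m/(\omega \log\log m)$ with $\omega = \omega(m) \to \infty$. Then
\[
\sqrt[2k]{m} = \exp\Bigl(\frac{\log m}{2k}\Bigr) = \exp\Bigl(\frac{\omega \log\log m}{2}\Bigr) = (\log m)^{\omega/2},
\]
while $k \le \log m$. Therefore
\[
\frac{\sqrt[2k]{m}}{k} \ge (\log m)^{\omega/2 - 1} \to \infty .
\]
If $k$ is much smaller than this, for example bounded, then $\sqrt[2k]{m}/k$ grows even faster, since $\sqrt[2k]{m}/k$ is decreasing in $k$ on the relevant range. Combining this with the hidden constant $\alpha>0$ from \Cref{relporlb-thm}, we get $\OPT_k/\OPT_m \ge \alpha (\log m)^{\omega/2-1} \ge c$ for all large $m$.

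The only mild obstacle is the monotonicity and constant-handling in the first step, that is, making sure the hypothesis $m=\exp(\Omega(k\log k))$ is met with the right constant. I handle it by taking the $o(\cdot)$ slack small enough.
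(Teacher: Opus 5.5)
Your proposal is correct and matches how the paper obtains this corollary: it is an immediate consequence of \Cref{relporlb-thm}. Since $k=o(\log m/\log\log m)$ gives $k\log k=o(\log m)$, the theorem's size requirement (concretely $m\ge 2(2k+2)^{2k}$ in its proof) holds for large $m$, and $\sqrt[2k]{m}/k\ge(\log m)^{\omega/2-1}\to\infty$. Your separate remark about bounded $k$ is harmless but redundant, because writing $k=\log m/(\omega\log\log m)$ already covers every $k$ in the allowed range.
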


We start with a description of the instance $\I = (\jobset,M,(P_j)_{j \in \jobset})$ that is used in establishing Theorem~\ref{relporlb-thm}. 
Fix some integers $k \geq 1$ and $m \geq 2 (2k+2)^{2k}$.  
Let $\scaleparm$ be the largest power of $2$ such that $\sum_{\ell=0}^k \scaleparm^{2\ell} \leq m$ holds. 
Note that $\scaleparm \geq k+1$.
By setting some machine speeds to $0$, if necessary, we may assume that $\sum_{\ell=0}^k \scaleparm^{2\ell} = m$ holds. 
The machine-set $M$ has $k+1$ speed classes $C_1,\ldots,C_{k+1}$, where for any $\ell \in [k+1]$ the class $C_\ell$ has $\scaleparm^{2\ell-2}$ machines each with speed $\s_\ell \coloneqq \scaleparm^{k-\ell+1}$. 
Note $|C_\ell| \s_\ell = \scaleparm^{k+\ell-1}$ for any $\ell$. 
The instance has $n \gg |M|$ identical jobs. 
Every $P_j$ is an independent and identically distributed copy of a common random variable $X$.
This random variable $X$ is supported on the $(k+2)$-point set $\{0, 1, \scaleparm, \ldots,\scaleparm^k\}$ and has a probability mass of $\scaleparm^{2\ell-2}/n$ at the point $\scaleparm^{k+1-\ell}$ for each $\ell \in [k+1]$. 
The remaining probability mass of $1 - (\sum_{\ell=0}^{k} \scaleparm^{2\ell})/n$ is located at $0$.  
\Cref{table:porlb} summarizes important statistics of the instance. 

\begin{table}[ht]
\scriptsize 
\centering
\begin{equation*}
{\renewcommand{\arraystretch}{1.5}
\begin{array}{|c|c|c|c|c|c|c|c|c|} 
\hline
\text{(class index)} \, \ell & 1 & 2 & \ldots & \ell-1 & \ell & \ell+1 & \ldots & k+1 \\
\hline
\text{(\# class-} \ell 
\text{ machines)} \, |C_\ell| & 1 & \scaleparm^2 & \ldots & \scaleparm^{2\ell-4} & \scaleparm^{2\ell-2} & \scaleparm^{2\ell} & \ldots & \scaleparm^{2k} \\
\hline
\text{(speed of class } \ell) \, \s_\ell & \scaleparm^k & \scaleparm^{k-1} & \ldots & \scaleparm^{k-\ell+2} & \scaleparm^{k-\ell+1} & \scaleparm^{k-\ell} & \ldots & 1 \\
\hline
\text{(processing power in class } \ell) \, |C_\ell| \s_\ell & \scaleparm^{k} & \scaleparm^{k+1} & \ldots & \scaleparm^{k+\ell-2} & \scaleparm^{k+\ell-1} & \scaleparm^{k+\ell} & \ldots & \scaleparm^{2k} \\
\hline 
\E[N_\ell] = \sum_{j \in \jobset} \Pr[P_j = \s_\ell] = n \Pr[X = \s_\ell] & 1 & \scaleparm^{2} & \ldots & \scaleparm^{2\ell-4} &  \scaleparm^{2\ell-2} & \scaleparm^{2\ell} & \ldots &  \scaleparm^{2k} \\
\hline 
\end{array}}
\end{equation*}
\caption{Instance $\I$ showing unbounded gap between $\OPT_k$ and $\OPT_m$.}
\label{table:porlb}
\end{table}

\begin{proof}[\Cref{relporlb-thm}]
We prove the theorem in two parts: we first show that the omniscient optimum is $O(1)$ using Chernoff bounds, and then we show that any $k$-reservation must have a cost of $\Omega(\scaleparm/k)$.

For the first part, we define an integer random variable $N_\ell \coloneqq \sum_{j \in \jobset} \bone(P_j = \s_\ell)$ that counts the number of jobs that realize a size of $\s_\ell$, for any index $\ell \in [k+1]$. 
We have $\E[N_\ell] = |C_\ell| = \scaleparm^{2\ell-2}$ (see \Cref{table:porlb}). 
For any real number $\alpha \geq 1$, consider the event $G(\alpha) \coloneqq \{ \, N_\ell \leq \alpha |C_\ell| \; \forall \ell \, \}$.  
Clearly, the omniscient optimum is at most 
$\ceil{\alpha}$ whenever the event $G(\alpha)$ happens: for any $\ell$, we assign up to $\ceil{\alpha}$ jobs of size $\s_\ell$ to each class-$\ell$ machine. 
So it suffices to show that $\E[\alpha(P)] = O(1)$, where $\alpha(P) \coloneqq \max_{\ell} N_\ell/|C_\ell|$. 
By a union-bound argument: 
\[
\E[\alpha(P)] \leq \euler^2 + \sum_{\ell=1}^{k+1} \E[\max(0,N_\ell - \euler^2 |C_\ell|)] \leq \euler^2 + \sum_{\ell=1}^{k+1} \euler^{-|C_\ell|} = O(1), 
\]
where we use \Cref{lem:uppertailmass} with $\delta = \euler^2 - 1$ and $B = \euler^2 |C_\ell|$ in the second inequality. 

For the second part, fix a $k$-reservation $\sg : \jobset \to \subsets{M}$. 
Since we have $k+1$ speed classes, there is an index $\ell \in [k+1]$ such that at least $n/(k+1)$ of the jobs are devoid of a reservation for a class-$\ell$ machine. 
Let $J^\prime \coloneqq \{ j \in \jobset : \sg(j) \cap C_\ell = \emptyset\}$ denote these jobs, and let $Z \coloneqq \sum_{j \in J^\prime} \bone(P_j = \s_{\ell})$ denote the random count of size $\s_{\ell}$ realizations among these jobs. 
Note $Z$ is a binomial random variable with mean $\E[Z] \geq \scaleparm^{2\ell-2}/(k+1)$. 
We consider two cases depending on whether $\ell$ equals $1$. 
When $\ell = 1$, there is an $\Omega(1/k)$ probability that some job in $J^\prime$ realizes size $\s_1$ (=$\scaleparm^k$). 
In this case, any $\sg$-\consistent assignment $\asgn$ that processes this $\scaleparm^k$-sized job on a machine from $C_{2} \cup \ldots \cup C_{k+1}$ must necessarily incur a makespan of at least $\scaleparm$. 
On the other hand, when $\ell > 1$, the bad event $B = \{ Z \geq \scaleparm^{2\ell-2}/(k+1) \}$ happens with constant probability. 
Suppose $B$ happens. 
Note that the total realized volume of jobs with size $\s_\ell (=\scaleparm^{k-\ell+1})$ is at least $\scaleparm^{k+\ell-1}/(k+1)$. 
Any $\sg$-\consistent assignment $\asgn$ that processes even a single job of size $\scaleparm^{k-\ell+1}$ on a machine from $C_{\ell+1} \cup \ldots \cup C_{k+1}$ must necessarily incur a makespan of at least $\scaleparm$. 
If this is not the case, a volume of at least $\scaleparm^{k+\ell-1}/(k+1)$ must be handled by faster classes $C_1 \cup \ldots \cup C_{\ell-1}$ which in total have a processing power of at most $2 \scaleparm^{k+\ell-2}$. 
In any case, the cost of $\sg$ is $\Omega(\scaleparm/k)$. 
Since $\sg$ was assumed to be arbitrary, the same bound holds for $\OPT_k$. 
\end{proof}

\subsection{Upper Bounds} \label{relpor-ub}

Our main result of this section is the following. 

\relporubthm*

\begin{proof}
We prove the theorem for $k = 4 \log m$. 
By monotonicity, the result follows for all larger $k$. 
Consider an arbitrary instance of $\slbr$ on related machines. 
We assume that the machine-set $M$ in the instance is \smooth (\Cref{def:smoothedmachines}), and let $C_1,\ldots,C_L$ denote its speed classes. 
By Lemma~\ref{lem:optundersmoothing}, the \smoothing step may lead to an $O(1)$-loss in the cost and may increase the number of machines quadratically. 
So, it suffices to establish $\OPT_{2L}/\OPT_{m} = O(1)$ for the modified instance, where we use that $L \leq \log_2 |M| \leq 2 \log m$. 

Consider the trivial reservation $\sgs$ that reserves all of $M$ for every job: $\sgs(j) = M$ for all $j$. 
Note that the omniscient optimum is always \consistent with $\sgs$, so $\cost(\sgs) = \OPT_m$. 
Let $t \coloneqq 80 \cdot \OPT_m$. 
We scale down all the $P_j$s by $t$ so that $\cost(\sgs) = 1/80$ holds. 
Consider the class reservation $\psi$ that reserves every speed class for every job: $\psi(j) = [L]$ for all $j$. 
Note that $\psi$ is induced by $\sgs$.  
We invoke Lemma~\ref{lem:multiassignmentcostlb} w.r.t. the reservation $\sgs$ to obtain the following: 
\begin{itemize}
    \item A volume bound on scaled super-exceptional portions: $\sum_{j \in \jobset} \bigl(\E[P_j \cdot \bone(P_j \geq \s_1)]/\s_1\bigr) \leq 1/40$. 
    \item A volume bound on the $\ell$-truncated portions for any $\ell \in [L]$: 
    \[
    \sum_{j \in \jobset} \E[P_j \cdot \bone(\s_{\ell+1} \leq P_j < \s_\ell)] \leq \sum_{j \in \jobset} \E[P_j \cdot \bone(\s_{\ell+1} \leq P_j)] \leq |C_\ell| \s_\ell,
    \]
    where we recall $\s_{L+1} \coloneqq 0$. 
\end{itemize}

Consider the following random $2L$-reservation $\sg$: for each job $j \in \jobset$ and each index $\ell \in [L]$, we sample two machines independently and uniformly from class $C_\ell$, and reserve the sampled machines for $j$. 
Note that $\psi$ is also the induced class reservation for $\sg$. 
The above volume bounds imply that the hypothesis of Lemma~\ref{lem:twochoicesperclass} (stated and proved in \Cref{related-bicriteria}) holds for $\beta = 1$, and thus, $\E_{\sg}[\cost(\sg)] = O(1)$ holds.
We are done as we had scaled down the $P_j$s so that $\OPT_m = 1/80$ holds. 
\end{proof}

The above proof strategy implies a slightly stronger guarantee for \smoothed instances that have a small number of speed classes.  

\begin{theorem} 
Consider an instance $\I$ of $\slbr$ where the set of machines $M = C_1 \cup \ldots \cup C_L$ forms a \smoothed collection and admits a partition into $L$ speed classes. 
The price of $2L$-reservation for $\I$ is $O(1)$.  
\end{theorem}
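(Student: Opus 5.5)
The plan is to rerun the proof of \Cref{relporub-thm} without the smoothing step, since $M$ is already \smoothed with $L$ classes. No preprocessing loss or change in $m$ is incurred, and the reservation size is $2L$ directly rather than $4\log m$.

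\emph{Step 1 (normalization).} Let $\sgs(j)=M$ for every $j$, so that $\cost(\sgs)=\OPT_m$. Set $t\coloneqq 80\cdot\OPT_m$ and scale all $P_j$ down by $t$, so that $\cost(\sgs)=1/80$. The induced class reservation is $\psi(j)=[L]$ for every job.

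\emph{Step 2 (volume bounds).} Invoke Lemma~\ref{lem:multiassignmentcostlb} with respect to $\sgs$, exactly as in the proof of \Cref{relporub-thm}. This yields two bounds:
\[
\sum_{j}\E[P_j\bone(P_j\ge \s_1)]/\s_1\le 1/40, \qquad \sum_j\E[P_j\bone(\s_{\ell+1}\le P_j)]\le |C_\ell|\s_\ell \text{ for every } \ell\in[L],
\]
with $\s_{L+1}\coloneqq 0$. The second bound immediately implies the truncated bound $\sum_j \E[P_j\bone(\s_{\ell+1}\le P_j<\s_\ell)]\le |C_\ell|\s_\ell$ required by Lemma~\ref{lem:twochoicesperclass}.

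\emph{Step 3 (random reservation).} For each job $j$ and each class $\ell\in[L]$, sample two machines independently and uniformly from $C_\ell$ and reserve them for $j$. This gives a random $2L$-reservation $\sg$ whose induced class reservation is again $\psi$. The bounds from Step 2 are exactly the hypothesis of Lemma~\ref{lem:twochoicesperclass} with $\beta=1$, so $\E_{\sg}[\cost(\sg)]=O(1)$ in the scaled instance. Undoing the scaling gives $\E_\sg[\cost(\sg)]=O(\OPT_m)$, and since some realization of $\sg$ is at most its expectation, $\OPT_{2L}=O(\OPT_m)$.

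The main point to check is that Lemma~\ref{lem:multiassignmentcostlb} and Lemma~\ref{lem:twochoicesperclass} require only the \smoothed properties (P1)--(P3) and independence of the jobs. Neither lemma depends on how the \smoothed collection arose or on any relation between $L$ and $\log m$. This holds because the proof of \Cref{relporub-thm} applied them to an arbitrary \smoothed collection with the same class structure; the bound $L\le 2\log m$ was used there only to fit the reservation size $2L$ within $k$. The price of $2L$-reservation for $\I$ is therefore $O(1)$.
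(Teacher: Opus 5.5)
Your proposal is correct and is exactly the paper's argument. The paper obtains this theorem by noting that the proof of \Cref{relporub-thm} goes through verbatim once the smoothing step and the bound $L\le 2\log m$ are dropped: the trivial reservation $\sgs(j)=M$, scaling by $80\cdot\OPT_m$, Lemma~\ref{lem:multiassignmentcostlb}, two uniform samples per class, and then Lemma~\ref{lem:twochoicesperclass}.

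One constant-level nit, inherited from the paper's own write-up: the contrapositive of Lemma~\ref{lem:multiassignmentcostlb}~\ref{lemitem:excepvolumelbell} literally gives $\sum_j\E[P_j\bone(P_j\ge\s_{\ell+1})]\le\sum_{\ell'\le\ell}|C_{\ell'}|\s_{\ell'}\le 2|C_\ell|\s_\ell$ by property~\ref{prop:smoothgeomcomputes}. You should therefore apply Lemma~\ref{lem:twochoicesperclass} with $\beta=2$ rather than $\beta=1$, which leaves the $O(1)$ bound unaffected.
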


\section{Approximation Algorithms for \boldmath \slbr on Related Machines}
\label{sec:related}

In this section, we provide (bicriteria) approximation algorithms for the related-machines setting. 
We assume that the given set $M$ of $m$ related machines forms a \smoothed collection and that the job sizes $\{P_j\}_{j \in \jobset}$ are mutually independent.  
By Lemma~\ref{lem:optundersmoothing}, restricting attention to \smoothed machines changes the optimal value by at most a constant factor. 
Recall that a \smoothed collection $M$ can be partitioned into geometric speed classes $C_1,\ldots,C_L$, where $L = O(\log m)$, such that: (i)~for every $\ell \in [L]$, all machines in $C_\ell$ have the same speed $\s_\ell$, and the speeds $\s_1,\ldots,\s_L$ are distinct powers of $2$; and (ii)~for any $\ell > 1$, $|C_\ell| \s_\ell \geq \sum_{\ell^\prime < \ell} |C_{\ell^\prime}| \s_{\ell^\prime}$. 

\subsection{Lower Bound on the Cost of a Reservation} 

In this section, we derive sufficient conditions for a reservation $\sg$ to have a large cost. 
This lower bound (Lemma~\ref{lem:multiassignmentcostlb}) will inspire the valid inequalities that we include in the LP relaxation for the problem.  
We introduce some notation so that the cost of a reservation can be captured clearly and concisely. 
First, we extend the $\s_\ell$ notation for the speed of class $\ell$ by defining $\s_0 \coloneqq \infty$ and $\s_{L+1} \coloneqq 0$. 
Consider a nonempty index set $K \subseteq [L]$; later, $K$ will typically have the interpretation as the set of speed classes in the reservation of some job. 
Then for any $\ell \in \{0\} \cup [L]$, we define $\nxt(\ell;K) \coloneqq \min \{ \ell^\prime \in K \cup \{L+1\} : \ell^\prime > \ell\}$ as the smallest index in $K \cup \{L+1\}$ that exceeds $\ell$. 
Note that $\nxt(0;K) \in K$ is the index of the fastest speed class in $K$. 
For any job $j \in \jobset$ and any nonempty $K \subseteq [L]$, we define a decomposition of its random size $P_j$ as
\[ 
P_j = P_{j,K,0} + \sum_{\ell \in K} P_{j,K,\ell}, 
\]
where
\begin{equation} \label{eq:rangetruncatedrv} 
P_{j,K,\ell} \coloneqq P_j \cdot \bone(\s_{\nxt(\ell;K)} \leq P_j < \s_\ell) \qquad \text{ for each $\ell \in \{0\} \cup K$}.
\end{equation}
We refer to $P_{j,K,0}$ as the \emph{super-exceptional} or \emph{$0$-exceptional} portion, and $P_{j,K,\ell}$ as the \emph{$\ell$-truncated portion} whenever $\ell \in K$. 
We can also extend the notion of $\ell$-exceptional random variables to all $\ell \in [L]$ as follows: for any job $j$ and any nonempty $K \subseteq [L]$, 
\begin{equation} \label{eq:ellexceptionalrv} 
P_{j,K,\leq \ell} \coloneqq \sum_{\ell^\prime \in \{0\} \cup K : \ell^\prime \leq \ell} P_{j,K,\ell^\prime} = P_j \cdot \bone(P_j \geq \s_{\nxt(\ell;K)}).
\end{equation}
Note that $P_{j,K,\leq 0} = P_{j,K,0}$ and $P_{j,K,\leq L} = P_j$. 

The intuition for the above decomposition is quite simple. 
Whenever a super-exceptional random variable $P_{j,K,0}$ realizes a positive size, no machine in a speed class from $K$ can process this job $j$ without the makespan being at least $1$. 
The same conclusion holds for the $\ell$-exceptional random variable $P_{j,K,\leq \ell}$ for any $\ell \in [L]$ (or the $\ell$-truncated variable $P_{j,K,\ell}$ for any $\ell \in K$) when considering a slower machine in $\bigcup_{\ell^\prime \in K : \ell^\prime > \ell} C_{\ell^\prime}$. 
On the other hand, for any $\ell \in K$, a single $\ell$-truncated random variable $P_{j,K,\ell}$ can be processed on any machine from class $C_\ell$ (or faster) without inducing a load greater than $1$. 
We illustrate the notion of $\ell$-truncated and super-exceptional portions in \Cref{fig:truncated-random-variables}. 

\begin{figure}[!t]
    \centering
    \begin{tikzpicture}[x=1.25cm,y=1.05cm,font=\small]
        \def\xSix{1.00}
        \def\xFive{1.90}
        \def\xFour{2.80}
        \def\xThree{3.70}
        \def\xTwo{4.65}
        \def\xOne{5.60}
        \def\xInf{7.40}

        \def\densitypath{%
            (0,0.18)
            .. controls (0.45,1.65) and (0.82,2.75) .. (1.35,2.40)
            .. controls (1.95,2.02) and (2.30,1.12) .. (2.85,1.35)
            .. controls (3.35,1.60) and (3.95,2.05) .. (4.55,1.60)
            .. controls (5.10,1.18) and (5.85,0.65) .. (7.40,0.08)%
        }

        \begin{scope}
            \clip (0,0) rectangle (\xSix,3.2);
            \fill[gray!15]
                (0,0) -- \densitypath -- (\xInf,0) -- cycle;
        \end{scope}

        \begin{scope}
            \clip (\xSix,0) rectangle (\xThree,3.2);
            \fill[gray!30]
                (0,0) -- \densitypath -- (\xInf,0) -- cycle;
        \end{scope}

        \begin{scope}
            \clip (\xThree,0) rectangle (\xOne,3.2);
            \fill[gray!48]
                (0,0) -- \densitypath -- (\xInf,0) -- cycle;
        \end{scope}

        \begin{scope}
            \clip (\xOne,0) rectangle (\xInf,3.2);
            \fill[
                pattern=north east lines,
                pattern color=black
            ]
                (0,0) -- \densitypath -- (\xInf,0) -- cycle;
        \end{scope}

        \draw[->]
            (0,0) -- (7.75,0)
            node[right] {realized size $p_j$};

        \draw[->]
            (0,0) -- (0,3.15)
            node[above] {};

        \draw[very thick] \densitypath;

        \foreach \x/\lab in {
            \xSix/{\s_6},
            \xFive/{\s_5},
            \xFour/{\s_4},
            \xThree/{\s_3},
            \xTwo/{\s_2},
            \xOne/{\s_1}}
        {
            \draw[densely dashed,gray]
                (\x,0) -- (\x,3.00);

            \draw
                (\x,0.06) -- (\x,-0.06)
                node[below=2pt] {$\lab$};
        }

        \node[below=2pt] at (0,0) {$0$};
        \node[below=2pt] at (\xInf,0) {$\infty$};

        \draw[
            decorate,
            decoration={brace,mirror,amplitude=4pt}
        ]
            (0,-0.6) -- (\xSix,-0.6)
            node[midway,below=6pt]
            {$P_{j,\psi(j),6}$};

        \draw[
            decorate,
            decoration={brace,mirror,amplitude=4pt}
        ]
            (\xSix,-0.6) -- (\xThree,-0.6)
            node[midway,below=6pt]
            {$P_{j,\psi(j),3}$};

        \draw[
            decorate,
            decoration={brace,mirror,amplitude=4pt}
        ]
            (\xThree,-0.6) -- (\xOne,-0.6)
            node[midway,below=6pt]
            {$P_{j,\psi(j),1}$};

        \draw[
            decorate,
            decoration={brace,mirror,amplitude=4pt}
        ]
            (\xOne,-0.6) -- (\xInf,-0.6)
            node[midway,below=6pt]
            {$P_{j,\psi(j),0}$};
    \end{tikzpicture}

    \caption{Schematic illustration of the distribution of the job-size random variable $P_j$ in an instance with six speed classes. For the class reservation $\psi(j) = \{1,3,6\}$, the range of possible job sizes is partitioned into the $6$-truncated portion on $[0,\s_6)$, the $3$-truncated portion on $[\s_6,\s_3)$, the $1$-truncated portion on $[\s_3,\s_1)$, and the super-exceptional portion on $[\s_1,\infty)$. The thresholds $\s_5$, $\s_4$, and $\s_2$ correspond to speed classes not included in $\psi(j)$.}
    \label{fig:truncated-random-variables}
\end{figure}

\begin{lemma} \label{lem:multiassignmentcostlb}
Let $\sg : \jobset \to 2^{[m]}$ be a reservation of jobs to machines, and let $\psi : \jobset \to 2^{[L]}$ be the induced class reservation given by $\psi(j) = \{ \ell \in [L] : \sg(j) \cap C_\ell \neq \emptyset \}$. 
Then the following are true:
\begin{enumerate}[(a)]
\item $\cost(\sg) \geq \frac12 \cdot \min\bigl\{1,\sum_{j \in \jobset} \bigl(\E[P_{j,\psi(j),0}] / \s_{\nxt(0;\psi(j))}\bigr) \bigr\}$. \label{lemitem:excepvolumelbzero}
\item  If $\sum_{j \in \jobset} \E[P_{j,\psi(j),\leq \ell}] >  \sum_{\ell^\prime=1}^{\ell} |C_{\ell^\prime}| \cdot \s_{\ell^\prime}$ for some $\ell \in [L]$, then $\cost(\sg) > 1/80$. \label{lemitem:excepvolumelbell}
\end{enumerate}
\end{lemma}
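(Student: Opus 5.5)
For part~\ref{lemitem:excepvolumelbzero}, the plan is to reduce to Lemma~\ref{lem:rho2apx} applied to the independent random variables $X_j \coloneqq P_{j,\psi(j),0}/\s_{\nxt(0;\psi(j))}$. Whenever $P_{j,\psi(j),0}>0$, we have $P_j \geq \s_{\nxt(0;\psi(j))}$, which is the fastest speed among machines in $\sg(j)$. So every $\sg$-\consistent assignment places $j$ on a machine of speed at most $\s_{\nxt(0;\psi(j))}$, and that machine's load is at least $X_j$. Hence $\cost(\sg;P) \geq \max_j X_j$ pointwise, and $\cost(\sg) \geq \E[\max_j X_j]$. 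Moreover each positive realization of $X_j$ is at least $1$, so with $\theta \coloneqq \min\{1, \sum_j \E[X_j]\}$ we get $\sum_j \E[X_j \bone(X_j \geq \theta)] = \sum_j \E[X_j] \geq \theta$. If this inequality is strict, Lemma~\ref{lem:rho2apx}(i) gives $\E[\max_j X_j] > \theta/2$. In the boundary case of equality, apply the lemma with $\theta' \uparrow \theta$ and take a limit (if $\theta=0$ there is nothing to prove). This yields the bound $\frac12\min\{1,\cdot\}$.

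For part~\ref{lemitem:excepvolumelbell}, I would argue by contraposition. Assume $\cost(\sg) \le 1/80$ and show $\sum_j \E[P_{j,\psi(j),\le \ell}] \le \sum_{\ell'\le \ell}|C_{\ell'}|\s_{\ell'} \eqqcolon \Pi_\ell$. The key observation is that if $P_j \geq \s_{\nxt(\ell;\psi(j))}$, then in any assignment with makespan below $1$, job $j$ must go to a machine in $C_1\cup\dots\cup C_\ell$. This holds because all reserved machines in slower classes have speed at most $\s_{\nxt(\ell;\psi(j))}$. Thus on the event that the realized makespan is below $1$, the total realized $\ell$-exceptional volume $Y\coloneqq\sum_j P_{j,\psi(j),\le\ell}$ is at most $\Pi_\ell$ times the makespan. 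By Markov, $\Pr[\cost(\sg;P)\ge 1/2]\le 1/40$, so with probability at least $39/40$ we have $Y \le \Pi_\ell/2$.

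The main obstacle is converting this "with good probability $Y$ is small" statement into a bound on $\E[Y]$, since $Y$ may have heavy tails. My plan is to split each $P_{j,\psi(j),\le\ell}$ into two pieces:
\begin{enumerate}[(1)]
\item a part at most $\s_1$, where $\s_1 \le \Pi_\ell$, so each scaled summand $P/\Pi_\ell$ is $[0,1]$-bounded;
\item the super-exceptional part above $\s_1$, which is at least the part used in (a).
\end{enumerate}
For piece (2), part (a) together with $\cost\le 1/80$ gives $\sum_j \E[P_j\bone(P_j\ge \s_1)]/\s_1 \le 1/40$. This uses that $\s_{\nxt(0;\psi(j))}\le\s_1$ and that exceeding $\s_1$ forces the truncation. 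Consequently the contribution of piece (2) is at most $\s_1/40 \le \Pi_\ell/40$.

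For piece (1), I would use a lower-tail Chernoff bound (Theorem~\ref{thm:chernoff}, \eqref{eq:chernofflowertail}) on the independent $[0,1]$-bounded scaled variables. Suppose $\E[\text{piece (1)}]> (1-1/40)\Pi_\ell$, so the mean $\mu$ in units of $\Pi_\ell$ is close to or above $1$. Then the sum falls below $\Pi_\ell/2$ only with probability at most $e^{-\mu/8}$, which is small but perhaps not below $1/40$ when $\mu\approx1$. This would contradict the $39/40$ event. I expect tuning the constants here to be the delicate step; the $1/80$ was chosen for exactly this purpose. If the plain Chernoff bound is too weak at $\mu\approx1$, I would instead rescale by a constant, using blocks of machine capacity, or bound via Paley--Zygmund-type reasoning on the truncated sum.
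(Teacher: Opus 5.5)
Your part (a) is the paper's argument. Your part (b) is correct, but it takes a different route from the paper's.

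\textbf{The paper's route for (b).} It argues directly. It takes the minimal violating $\ell$, so subtracting the satisfied inequality for $\ell-1$ isolates the $\ell$-truncated mass: $\sum_{j:\ell\in\psi(j)}\E[P_{j,\psi(j),\ell}]>|C_\ell|\s_\ell$. For $\ell=1$ it first strips off the super-exceptional mass using (a), much as you do. It then normalizes these summands by $\s_\ell$, so the Chernoff mean exceeds $|C_\ell|\ge 1$. This gives realized truncated volume above $|C_\ell|\s_\ell/2$ with probability $>1/10$. Finally, Claim~\ref{clm:multiassignmentcostdet} turns this into makespan at least $1/4$; that claim relies on property~\ref{prop:smoothgeomcomputes} to get $\sum_{\ell'\le\ell}|C_{\ell'}|\s_{\ell'}\le 2|C_\ell|\s_\ell$.

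\textbf{Your route and what each buys.} You argue contrapositively. Markov turns $\cost(\sg)\le 1/80$ into a probability-$39/40$ event on which all $\ell$-exceptional volume fits into $C_1\cup\dots\cup C_\ell$ at makespan below $1/2$. You split at $\s_1$ and normalize by $\Pi_\ell$ instead of $\s_\ell$. This removes the minimality trick and the $\ell=1$ versus $\ell>1$ case split. It also never uses property~\ref{prop:smoothgeomcomputes}, so your proof holds for any partition into speed classes ordered by decreasing speed. The paper's version, in exchange, exhibits a concrete overloaded class with a constant-probability bad event. That mirrors the per-class truncated-volume bounds the later rounding works with (Lemma~\ref{lem:relatedcruderounding}).

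\textbf{The step you flag as delicate is fine as it stands.} Let $Y_1$ be the sum of your piece-(1) terms. Since $Y_1\le Y$, your Markov event already gives $\Pr[Y_1\le \Pi_\ell/2]\ge 39/40$. So Chernoff only has to show $\Pr[Y_1\le\Pi_\ell/2]<39/40$, not $<1/40$. With normalized mean at least $\mu=39/40$ and $\delta=19/39$ (so that $(1-\delta)\mu=1/2$), \eqref{eq:chernofflowertail} gives $\Pr[Y_1\le\Pi_\ell/2]\le e^{-361/3120}<0.9$. That is already a contradiction. No rescaling or Paley--Zygmund argument is needed, and the constants close as written.
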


Modulo the super-exceptional portion, the above lemma gives conditions for a reservation to have a small cost: the total mass of $\ell$-exceptional random variables must be bounded by the total processing power in the $\ell$ fastest speed classes for every $\ell$. 
A nice feature of the lemma is that it provides a means to estimate $\cost(\sg)$ via a scaling argument. 
Note that the terms appearing in Lemma~\ref{lem:multiassignmentcostlb} can be efficiently computed given some basic information about the job random variables. 
Since $\E[\cost(\sg;P)] = t \cdot \E[\cost(\sg;P/t)]$ for any real $t > 0$, a simple binary search can be used to estimate $\cost(\sg) = \E[\cost(\sg;P)]$ within constant factors. 
Another feature of our lemma is that it only uses terms such as $\E[P_{j,\psi(j),\leq \ell}]$ that are nonincreasing in the scaling parameter $t$. 

\smallskip

The following claim will be useful to us in the proof of Lemma~\ref{lem:multiassignmentcostlb} as it describes a lower bound on the second-stage makespan in terms of the realized $\ell$-exceptional jobs. 

\begin{claim} \label{clm:multiassignmentcostdet}
Let $\sg$ and $\psi$ be as in the statement of Lemma~\ref{lem:multiassignmentcostlb}. 
Fix some $\ell \in \{0\} \cup [L]$ and let $p \in \Rp^{\jobset}$ be such that $p_j > 0 \implies p_j \geq \s_{\nxt(\ell;\psi(j))}$ for all $j$. 
For any $\sg$-\consistent assignment $\asgn$, we have: 
\begin{enumerate}[(i)]
    \item If $\ell = 0$, then $\mksp(\asgn;p) \geq \max_{j} p_j/ \s_{\nxt(\ell;\psi(j))}$.
    \item If $\ell = 1$, then $\mksp(\asgn;p) \geq \min\{1,(\sum_j p_j)/(|C_1|\s_1)\}$.
    \item If $\ell > 1$, then $\mksp(\asgn;p) \geq \min\{1,(\sum_j p_j)/(2|C_\ell|\s_\ell)\}$. 
\end{enumerate}
\end{claim}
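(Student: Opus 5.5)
We fix $\ell$, $p$, and a $\sg$-\consistent assignment $\asgn$, and argue deterministically about where each job with $p_j>0$ can be placed. The key observation is that a job $j$ with $p_j>0$ is assigned to some machine $\asgn(j)\in\sg(j)$, lying in a class $C_{\ell'}$ with $\ell'\in\psi(j)$. Since $\nxt(\ell;\psi(j))$ is the smallest index of $\psi(j)\cup\{L+1\}$ exceeding $\ell$, every $\ell'\in\psi(j)$ satisfies either $\ell'\le \ell$ (a class at least as fast as $C_\ell$) or $\ell'\ge \nxt(\ell;\psi(j))$, i.e., $\s_{\ell'}\le \s_{\nxt(\ell;\psi(j))}\le p_j$.

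\textbf{Part (i).} Here $\ell=0$, so every reserved class satisfies $\ell'\ge \nxt(0;\psi(j))$. Hence the load of $\asgn(j)$ is at least $p_j/\s_{\ell'}\ge p_j/\s_{\nxt(0;\psi(j))}$. Taking the maximum over $j$ gives the claim.

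\textbf{Parts (ii) and (iii).} Suppose $\mksp(\asgn;p)<1$. Then no job $j$ with $p_j>0$ can sit on a machine of class $\ell'\ge\nxt(\ell;\psi(j))$, because that would create load $p_j/\s_{\ell'}\ge 1$. So every job of positive size is placed on a machine in $C_1\cup\dots\cup C_\ell$. By averaging load against speed, the makespan is then at least
\[
\sum_j p_j\Big/\sum_{\ell'\le\ell}|C_{\ell'}|\s_{\ell'}.
\]
For $\ell=1$ the denominator is $|C_1|\s_1$, which gives (ii). For $\ell>1$, property \ref{prop:smoothgeomcomputes} gives
\[
\sum_{\ell'<\ell}|C_{\ell'}|\s_{\ell'}\le |C_\ell|\s_\ell,
\]
so the denominator is at most $2|C_\ell|\s_\ell$, which gives (iii). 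If instead $\mksp(\asgn;p)\ge 1$, the bound holds trivially because of the $\min\{1,\cdot\}$.

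\textbf{Main obstacle.} The only delicate point is the case analysis on $\nxt$, including the boundary conventions. When $\nxt(\ell;\psi(j))=L+1$ we have $\s_{L+1}=0$; then the hypothesis imposes nothing on $p_j$, but also no reserved class lies beyond $\ell$, so the job is necessarily in a fast class and the argument is unaffected. When $\ell=0$ we have $\s_0=\infty$, but this is never used as a denominator. Once these conventions are checked, the rest of the proof is routine.
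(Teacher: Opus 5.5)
Your proposal is correct and follows the paper's proof essentially verbatim. Part (i) is the single-job load bound. For parts (ii) and (iii), you argue that a makespan below $1$ forces every positive-size job onto $C_1 \cup \dots \cup C_\ell$, then average against the total processing power of these classes, bounded via property~\ref{prop:smoothgeomcomputes}; your explicit treatment of the boundary case $\nxt(\ell;\psi(j)) = L+1$ just spells out a detail the paper leaves implicit.
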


\begin{proof}
The first part is trivial as the makespan is at least the load induced by any single job. 
Suppose $\ell \geq 1$. 
For the makespan to be less than $1$, every job $j$ with positive size must be processed on some machine in $C_1 \cup \ldots \cup C_\ell$, as the $p_j$s are $\ell$-exceptional. 
By property~\ref{prop:smoothgeomcomputes} of \smoothed machines, the total processing power in the $\ell$ fastest speed classes is at most $|C_1| \s_1$ when $\ell = 1$ and $2 \cdot |C_\ell| \cdot \s_\ell$ when $\ell > 1$. 
The desired conclusions follow from an averaging argument.  
\end{proof}

\begin{proofof}{Lemma~\ref{lem:multiassignmentcostlb}}
We start with the proof of claim (a).  
Define $Q_j \coloneqq P_{j,\psi(j),0} / \s_{\nxt(0,\psi(j))}$, and let $T \coloneqq \sum_{j \in \jobset} \E[Q_j]$ be the total mass of scaled super-exceptional variables.  
By definition of $\sg$-compatibility, $Q_j$ is a lower bound on the second-stage makespan regardless of which machine in $\sg(j)$ processes job $j$. 
Therefore, $\cost(\sg) \geq \E[\max_{j \in \jobset} Q_j]$. 
We prove the first part by showing that $\E[\max_j Q_j] \geq \min(1,T)/2$ holds. 
By definition, $\supp(Q_j) \subseteq \{0\} \cup [1,\infty)$. 
The contrapositive of Lemma~\ref{lem:rho2apx}~(i) for $\theta = 1$ states that $\sum_j \E[Q_j] > 1 \implies \E[\max_j Q_j] > 1/2$, so we are done when $T > 1$. 
Now suppose $T \leq 1$. 
Since $Q_j$ does not take any size in the open interval $(0,1)$, for any $\theta \in (0,T)$ we have: 
\[
\sum_j \E[Q_j \cdot \bone(Q_j \geq \theta)] = \sum_j \E[Q_j] = T > \theta.
\]
Thus, the contrapositive of Lemma~\ref{lem:rho2apx}~(i) implies $\E[\max_j Q_j] > \theta/2$ for every $\theta < T$. 
It follows that $\E[\max_j Q_j] \geq T/2$, which completes the proof of claim (a). 

\smallskip

We now prove claim (b). 
The heart of the proof is in showing that with $\Omega(1)$ probability, the total realized volume of the $\ell$-truncated random variables is $\Omega(|C_\ell| \cdot \s_\ell)$. 
Claim~\ref{clm:multiassignmentcostdet} would then imply $\cost(\sg;P) = \Omega(1)$. 

\smallskip

Let $\ell \in [L]$ be minimal such that $\sum_{j \in \jobset} \E[P_{j,\psi(j),\leq \ell}] >  \sum_{\ell^\prime=1}^{\ell} |C_{\ell^\prime}| \cdot \s_{\ell^\prime}$.
We consider the cases $\ell=1$ and $\ell > 1$ separately.

Suppose $\ell=1$.
By definition of $1$-exceptional variables (see \eqref{eq:ellexceptionalrv}), the volume inequality can be rewritten as: 
\[
\sum_{j \in \jobset} \E[P_{j,\psi(j),0}] + \sum_{j \in J^\prime} \E[P_{j,\psi(j),1}] > |C_1| \cdot \s_1,
\] 
where $J^\prime \coloneqq \{ j \in \jobset : 1 \in \psi(j)\}$. 
If the quantity $T$ from the first part of the proof is at least $1/2$, then we are already done because $\cost(\sg) \geq 1/4$ holds. 
Otherwise: 
\[
\sum_{j \in \jobset} \E[P_{j,\psi(j),0}] \leq \s_1 \cdot \sum_{j \in \jobset} \bigl(\E[P_{j,\psi(j),0}]/\s_{\nxt(0;\psi(j))}\bigr) \leq \s_1/2 \leq |C_1|\s_1/2.
\]
Combining the above two displayed inequalities, we get $\sum_{j \in J^\prime} \E[P_{j,\psi(j),1}] > |C_1| \cdot \s_1/2$. 
Now let $R_j \coloneqq P_{j,\psi(j),1} / \s_1$ for $j \in J^\prime$ and $X \coloneqq \sum_{j \in J^\prime} R_j$ denote their sum. 
Note that the $R_j$s form an independent collection of $[0,1]$-bounded random variables, and $\E[X] > |C_1|/2$. 
Using the lower tail bound in Theorem~\ref{thm:chernoff} for $\mu = |C_1|/2$ and $\delta = 1/2$, we get $\Pr[X \leq |C_1|/4] \leq \exp(-|C_1|/16) < 0.95$. 
Therefore, the event $\{\sum_{j \in J^\prime} P_{j,\psi(j),1} > |C_1|\s_1/4 \}$ happens with probability greater than $1/20$. 
By Claim~\ref{clm:multiassignmentcostdet}, we get $\cost(\sg) > 1/80$.

Now suppose $\ell > 1$. So 
\[ \sum_{j \in \jobset} \E[P_{j,\psi(j),\leq \ell}] > \sum_{\ell^\prime=1}^{\ell} |C_{\ell^\prime}|\s_{\ell^\prime}, \] and by minimality of $\ell$, 
\[ \sum_{j \in \jobset} \E[P_{j,\psi(j),\leq \ell-1}] \leq \sum_{\ell^\prime=1}^{\ell-1} |C_{\ell^\prime}|\s_{\ell^\prime}. \] 
Combining these two inequalities, we get $\sum_{j \in J^\prime} \E[P_{j,\psi(j),\ell}] > |C_\ell| \cdot \s_\ell$, where $J^\prime \coloneqq \{ j \in \jobset : \ell \in \psi(j)\}$. 
As before, let $R_j \coloneqq P_{j,\psi(j),\ell} / \s_\ell$ for $j \in J^\prime$ and $X \coloneqq \sum_{j \in J^\prime} R_j$ denote their sum. 
Note that the $R_j$s form an independent collection of $[0,1]$-bounded random variables, and $\E[X] > |C_\ell|$. 
Using the lower tail bound in Theorem~\ref{thm:chernoff} for $\mu = |C_\ell|$ and $\delta = 1/2$, we get $\Pr[X \leq |C_\ell|/2] \leq \exp(-|C_\ell|/8) < 0.9$. 
Therefore, the event $\{\sum_{j \in J^\prime} P_{j,\psi(j),\ell} > |C_\ell|\s_\ell/2 \}$ happens with probability greater than $1/10$. 
By Claim~\ref{clm:multiassignmentcostdet}, we get $\cost(\sg) > 1/40$. 
\end{proofof}

\subsection{\boldmath Linear Relaxation for \texorpdfstring{$\slbr$}{\slbr} on Related Machines} \label{subsec:lpoptk}

Fix an optimal first-stage reservation $\sgs$ for the given instance of \slbr.   
We now write a compact feasibility LP whose infeasibility implies $\cost(\sgs) = \Omega(1)$. 
Let $\indexset \coloneqq \{ K \subseteq [L] : 1 \leq |K| \leq k\}$.
For each job $j \in \jobset$ and each index set $K \in \indexset$, we have a Boolean decision variable $x_{j,K}$ modeling $K = \{ \ell \in [L] : \sgs(j) \cap C_\ell \neq \emptyset\}$. 
These decision variables are relaxed to take values in the interval $[0,1]$ to obtain a linear program.  
Note that the number of choices for $K$ is polynomial, even if $k$ is not constant, since $L$ has logarithmic size (Remark~\ref{remark:smoothedmachines}).
We use $\F$ to denote the feasible region of the program below.   

\begin{align} 
& \sum_{j \in \jobset, K \in \indexset} (\E[P_{j,K,0}]/\s_{\nxt(0;K)}) \cdot x_{j,K} \, \, \leq \, \, 1 &  \label{eq:superexceptionalvolume} \\
& \sum_{j \in \jobset, K \in \indexset} \E[P_{j,K,\leq \ell}] \cdot x_{j,K} \, \, \leq \, \,  |C_1| \cdot \s_1 +\ldots + |C_\ell| \cdot \s_\ell & \forall \ell \in [L]  \label{eq:ellexceptionalvolume} \\
& \sum_{K \in \indexset} x_{j,K} \, \,  = \, \, 1 &  \forall j \in \jobset \label{eq:reservationconstraint} \\ 
& x_{j,K} \, \, \geq \, \,  0 & \forall j \in \jobset, K \in \indexset \label{eq:nonnegativity} 
\end{align}

We also define $\F(t)$ to be this same polytope, but for processing times that are all scaled down by a factor of $t$ (so $\F(1) = \F$).
A straightforward consequence of Lemma~\ref{lem:multiassignmentcostlb} is the following. 

\begin{lemma} \label{feasibleguess}
For any $t \geq 80 \cdot \cost(\sgs)$, we have $\F(t) \neq \emptyset$.
\end{lemma}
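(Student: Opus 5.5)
The plan is to exhibit an explicit point of $\F(t)$ built from the optimal reservation $\sgs$, and to check each constraint via Lemma~\ref{lem:multiassignmentcostlb} applied to the scaled instance. Let $\psi^*$ be the class reservation induced by $\sgs$, i.e., $\psi^*(j) = \{\ell \in [L] : \sgs(j) \cap C_\ell \neq \emptyset\}$. Define the integral point $x_{j,K} \coloneqq \bone(K = \psi^*(j))$. Since $1 \leq |\sgs(j)| \leq k$, we have $1 \leq |\psi^*(j)| \leq k$, so $\psi^*(j) \in \indexset$. Hence \eqref{eq:reservationconstraint} and \eqref{eq:nonnegativity} hold trivially.

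Next we pass to the scaled instance with sizes $P_j/t$. Because $\cost(\sgs;P/t) = \cost(\sgs;P)/t$ pointwise, the cost of $\sgs$ in the scaled instance is $\cost(\sgs)/t \leq 1/80$. The scaled truncated portions $P_{j,K,\cdot}$ are the ones defined with respect to $P_j/t$; these are exactly the quantities appearing in $\F(t)$. With $x$ plugged in, the left-hand sides of \eqref{eq:superexceptionalvolume} and \eqref{eq:ellexceptionalvolume} become $\sum_j \E[P_{j,\psi^*(j),0}]/\s_{\nxt(0;\psi^*(j))}$ and $\sum_j \E[P_{j,\psi^*(j),\leq \ell}]$, which are precisely the terms in Lemma~\ref{lem:multiassignmentcostlb}.

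For \eqref{eq:ellexceptionalvolume}, argue by contrapositive. If it failed for some $\ell$, part~\ref{lemitem:excepvolumelbell} would give a scaled cost $> 1/80$, a contradiction. For \eqref{eq:superexceptionalvolume}, part~\ref{lemitem:excepvolumelbzero} gives $1/80 \geq \tfrac12 \min\{1, T\}$ with $T$ the left-hand side. Thus $\min\{1,T\} \leq 1/40 < 1$, which forces $T \leq 1/40 \leq 1$.

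The only delicate point, and it is minor, is making sure the lemma is applied to the scaled instance, with the scaling absorbed into the truncation thresholds rather than the speeds. Everything else is bookkeeping, so I expect no real obstacle.
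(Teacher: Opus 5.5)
Your proposal is correct and matches the paper's proof: reduce to $t=1$ (equivalently, work with sizes $P_j/t$) by homogeneity of the cost, and take the integral point $x_{j,K}=\bone(K=\psi^*(j))$ given by the class reservation induced by $\sgs$. Then verify \eqref{eq:superexceptionalvolume} via Lemma~\ref{lem:multiassignmentcostlb}~\ref{lemitem:excepvolumelbzero}, which gives the bound $1/40$, and \eqref{eq:ellexceptionalvolume} via the contrapositive of Lemma~\ref{lem:multiassignmentcostlb}~\ref{lemitem:excepvolumelbell}; your explicit deduction of $T\le 1/40$ from $\tfrac12\min\{1,T\}\le 1/80$ fills in a step the paper leaves implicit.
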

\begin{proof}
By homogeneity of $\cost$ it suffices to establish the claim for $t = 1$ under the assumption that $\cost(\sgs) \leq 1/80$. 
Let $\psi : \jobset \to \indexset$ denote the class reservation induced by $\sgs$, i.e., $\psi(j) = \{\ell \in [L] : \sgs(j) \cap C_\ell \neq \emptyset\}$ for all $j$. 
Note that $1 \leq |\psi(j)| \leq |\sgs(j)| \leq k$ for all $j$. 
Consider the $\{0,1\}$-solution $x \in [0,1]^{\jobset \times \indexset}$ defined as $x_{j,K} = 1$ if and only if $K = \psi(j)$. 
We show that $x \in \F$, thereby proving the claim. 
Nonnegativity and reservation constraints are satisfied by the definition of $x$. 
Since $x$ is a $\{0,1\}$-solution, we have:  
\[
\sum_{j \in \jobset, K \in \indexset} (\E[P_{j,K,0}]/\s_{\nxt(0;K)}) \cdot x_{j,K} = \sum_{j \in \jobset} (\E[P_{j,\psi(j),0}]/\s_{\nxt(0;\psi(j))}) \leq 1/40
\]
and 
\[
\sum_{j \in \jobset, K \in \indexset} \E[P_{j,K,\leq \ell}] \cdot x_{j,K} = \sum_{j \in \jobset} \E[P_{j,\psi(j),\leq \ell}] \leq |C_1| \cdot \s_1 +\ldots + |C_\ell| \cdot \s_\ell
\]
for all $\ell \in [L]$, where the first upper bound follows from   Lemma~\ref{lem:multiassignmentcostlb}~\ref{lemitem:excepvolumelbzero} and the second upper bound follows from the contrapositive of Lemma~\ref{lem:multiassignmentcostlb}~\ref{lemitem:excepvolumelbell}.  
\end{proof}

Let $t^* > 0$ be the infimum over real numbers such that $\F(t^*) \neq \emptyset$. 
By \Cref{feasibleguess}, $t^* \leq 80 \cdot \cost(\sgs)$, so $t^*/80$ serves as a lower bound on the optimal value for the given instance of \slbr.  
The following claim shows that $t^*$ lies in a real interval whose endpoints differ by a factor of $O(m)$. 
Therefore, a binary search can be used to obtain $t \approx t^*$ using $O(\log m)$ invocations of an algorithm that tests whether $\F(t)$ is empty. 

\begin{claim}
Define $\lb \coloneqq \bigl(\sum_{j \in \jobset} \E[P_j] \bigr)/\s_1$. 
We have $\lb/m \leq t^* \leq 80 \cdot \lb$. 
\end{claim}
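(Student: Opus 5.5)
The plan is to verify both inequalities directly from the constraints defining $\F(t)$. The upper bound comes from exhibiting an explicit feasible point at $t=\lb$. The lower bound comes from showing that the single constraint \eqref{eq:ellexceptionalvolume} with $\ell=L$ already forces $t\geq \lb/m$. Throughout, $m$ denotes the number of (smoothed) machines $|M|$. Scaling by $t$ replaces $P_j$ by $P_j/t$ everywhere, including inside the indicator thresholds defining $P_{j,K,\ell}$.

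\emph{Upper bound.} Take $t=\lb$ and the integral point $x_{j,\{1\}}=1$ for every job $j$, with all other coordinates $0$. This point is in $\indexset$ since $k\geq 1$, and constraints \eqref{eq:reservationconstraint} and \eqref{eq:nonnegativity} hold trivially. For \eqref{eq:superexceptionalvolume}, with $K=\{1\}$ we have $\nxt(0;K)=1$, so the left-hand side is
\[
\sum_j \E\bigl[(P_j/t)\,\bone(P_j/t\geq \s_1)\bigr]/\s_1 \;\leq\; \sum_j \E[P_j]/(t\,\s_1) \;=\; 1 .
\]
For \eqref{eq:ellexceptionalvolume} with any $\ell\geq 1$, we have $\nxt(\ell;\{1\})=L+1$ and $\s_{L+1}=0$, so $P_{j,K,\leq \ell}=P_j/t$. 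The left-hand side therefore equals $\sum_j\E[P_j]/\lb=\s_1\leq |C_1|\s_1$, which is at most the right-hand side. Hence $\F(\lb)\neq\emptyset$, and so $t^*\leq \lb\leq 80\cdot\lb$.

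\emph{Lower bound.} Suppose $\F(t)\neq\emptyset$ and take any $x\in\F(t)$. For $\ell=L$, equation \eqref{eq:ellexceptionalvolume} gives $P_{j,K,\leq L}=P_j/t$ for every $K$. Combined with \eqref{eq:reservationconstraint}, constraint \eqref{eq:ellexceptionalvolume} then reads
\[
\sum_j \E[P_j]/t \;\leq\; \sum_{\ell=1}^L |C_\ell|\,\s_\ell \;\leq\; m\,\s_1,
\]
where the last step uses that every speed is at most $\s_1$ by property \ref{prop:smoothgeomspeeds}. Rearranging gives $t\geq \lb/m$ for every feasible $t$, so the infimum satisfies $t^*\geq \lb/m$.

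The only step needing care is the bookkeeping of the scaled truncation thresholds. In particular, one must check that $\nxt(\ell;K)=L+1$ makes the $\ell$-exceptional portion equal the full (scaled) size, which is what gives both the trivial feasibility check and the aggregate volume bound. No probabilistic argument is needed.
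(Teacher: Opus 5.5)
Your proof is correct. The lower bound is the paper's own argument: you use the $\ell=L$ instance of \eqref{eq:ellexceptionalvolume}, together with \eqref{eq:reservationconstraint} and $\sum_{\ell}|C_\ell|\s_\ell\le m\s_1$.

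For the upper bound you take a different, more direct route. The paper does not check the LP constraints directly. It observes that the reservation sending every job to one fixed machine of $C_1$ has cost exactly $\lb$, so $\cost(\sgs)\le\lb$. It then invokes \Cref{feasibleguess} to get $t^*\le 80\cdot\cost(\sgs)\le 80\cdot\lb$.

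You instead verify by hand that the integral point $x_{j,\{1\}}=1$ lies in $\F(\lb)$; this point is precisely the class reservation induced by that trivial reservation. Your bookkeeping is right. Since $\nxt(\ell;\{1\})=L+1$ for all $\ell\ge 1$, every $\ell$-exceptional portion equals the full scaled size, so the left-hand side of \eqref{eq:ellexceptionalvolume} is $\s_1\le|C_1|\s_1$. The left-hand side of \eqref{eq:superexceptionalvolume} is at most $\lb/t=1$.

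Your route is self-contained, since it avoids the Chernoff-based \Cref{lem:multiassignmentcostlb} that underlies \Cref{feasibleguess}. It also gives the sharper bound $t^*\le\lb$. The factor $80$ in the paper is only an artifact of going through that lemma. The paper's route has the advantage that the upper bound becomes a one-line consequence of machinery already in place.
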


\begin{proof}
Consider the trivial reservation $\sg$ that assigns every job to one fixed machine in $C_1$. 
Clearly, $\cost(\sg) = (\sum_j \E[P_j])/\s_1$, so we get $t^* \leq 80 \cdot \cost(\sgs) \leq 80 \cdot \lb$. 
Next, by definition of $\ell$-exceptional variables \eqref{eq:ellexceptionalrv}, $P_{j,K,\leq L} = P_j$ for any job $j$ and index-set $K \subseteq [L]$. 
Hence, the LHS of the volume inequality \eqref{eq:ellexceptionalvolume} for $\ell = L$ equals $\sum_j \E[P_j]$, where we use $\sum_{K} x_{j,K} = 1$ for all $j$, while the RHS is at most $m \s_1$. 
Therefore, $\F(t) = \emptyset$ whenever $t < \lb/m$, and hence $t^* \geq \lb/m$.  
\end{proof}

\subsection{Rounding a Fractional Class Reservation}

By scaling if necessary, we can assume for the remainder that $\F \neq \emptyset$ and $\OPT_k = \Omega(1)$, i.e., that $t^* \approx 1$.
We now describe an iterative rounding algorithm that takes as input a fractional class reservation $x \in \F$ and outputs an integral class reservation $\psi$ that approximately satisfies the volume constraints in \eqref{eq:ellexceptionalvolume}. 
We use the following iterative rounding result of Linhares et al.~\cite{LinharesOSZ20} in a black-box fashion. 

\begin{theorem}[Follows from Corollary 11 in~\cite{LinharesOSZ20}] 
\label{thm:iterativerounding} 
Let $\M=(\U,\I)$ be a matroid with rank function
$\rank$ (specified by a value oracle), 
and let $\Q \coloneqq \{z\in\Rp^{\U}:
z(\U)=\rank(\U),\ \ z(R)\leq\rank(R)\ \forall R \subseteq \U\}$ be its base polytope. 
Let $\bz$ be a feasible solution to the following multi-budgeted matroid LP: 
\begin{equation*} 
\min \ \ c^\top z \qquad \text{s.t.} \qquad 
A z \leq b, \quad z \in \Q.
\end{equation*}
where $A \in \Rp^{k \times \U}$. 
Suppose that for any $e \in \supp(\bz)$ we have $\sum_{i \in [k]} A_{i,e} \leq \viol$
for some parameter $\viol$ and the rank function $\rank(R)$ can be evaluated in $\poly(|\U|)$ time for any $R \subseteq \U$. 
Then we can round $\bz$ in polynomial time to obtain a basis $B$ of $\M$ 
satisfying: 
(i)~$B \sse \supp(\bz)$;
(ii)~$c(B) \leq c^\top \bz$; and 
(iii)~$A\chi^B \leq b + \viol \mathbf{1}$, where $\chi^B$ is the characteristic vector of $B$ and $\mathbf{1}$ is the vector of all
$1$s.
\end{theorem}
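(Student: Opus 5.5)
The plan is to run the iterative relaxation scheme of Linhares et al.~\cite{LinharesOSZ20}, specialized to the base polytope $\Q$ of $\M$ with the nonnegative budget rows $Az\leq b$. Start from the LP $\min\{c^\top z : Az\leq b,\ z\in\Q\}$ restricted to $\supp(\bz)$, which $\bz$ shows is feasible. Keep a current matroid $\M'$, obtained from $\M$ by deletions and contractions, and a set $D$ of budget rows that are still active. Rank evaluations of minors reduce to evaluations of $\rank$, so every LP can be solved in polynomial time, for instance by the ellipsoid method with a submodular-minimization separation oracle, or by the standard compact extended formulation. In each round, compute an optimal extreme point $z^*$ of
\[
\min\{c^\top z : A_D z\leq b_D-A_D\chi^{I},\ z\in\Q(\M')\},
\]
where $I$ is the set of elements already contracted, i.e.\ fixed to $1$. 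Then do one of three things. If some $z^*_e=0$, delete $e$. If some $z^*_e=1$, contract $e$ and add it to $I$. Otherwise, drop a suitably chosen budget row from $D$. Every round shrinks the ground set or $D$, so there are polynomially many rounds. Each LP is feasible, because the previous $z^*$ (restricted appropriately) is still feasible, so the objective never increases and (ii) holds. We only ever delete elements outside the support, so (i) holds. Finally, the output $I$ is a basis, because $z^*\in\Q(\M')$ at every step.

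For (iii), suppose row $i$ is dropped at a moment when the current fractional solution is $z^*$ with fractional support $F$. At that moment $A_i\chi^{I}+A_i z^*\leq b_i$. Elements that are later added to $I$ lie in $F$, and each such element $e$ increases the row-$i$ load by at most $A_{i,e}(1-z^*_e)$ beyond what $z^*$ already accounts for. Since $A\geq 0$, discarding elements never hurts. Hence it suffices to drop only rows $i$ with
\[
\sum_{e\in F}A_{i,e}(1-z^*_e)\leq\viol .
\]
A row that is never dropped stays satisfied exactly.

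The key step, and the main obstacle, is to show that such a row always exists when $z^*$ has no integral coordinate. I will use the usual uncrossing argument. The tight rank constraints of $\Q(\M')$ at $z^*$ can be replaced by a chain $\emptyset\neq S_1\subsetneq\cdots\subsetneq S_r$ of tight sets. Together with some tight active budget rows $T\subseteq D$, they span the tight constraints, so $|F|\leq r+|T|$. The values $z^*(S_t\cap F)$ are integers (rank minus the contribution of the already integral part) and strictly increase along the chain, because every element of $F$ has $z^*_e>0$ and the sets $S_t\cap F$ can be taken distinct after removing redundant sets. This gives $r\leq z^*(F)$, and therefore
\[
\sum_{e\in F}(1-z^*_e)=|F|-z^*(F)\leq |T|.
\]
Summing over rows in $T$ and using the column-sum hypothesis, which holds for every $e\in F\subseteq\supp(\bz)$, yields
\[
\sum_{i\in T}\sum_{e\in F}A_{i,e}(1-z^*_e)\leq\viol\sum_{e\in F}(1-z^*_e)\leq \viol\,|T| .
\]
By averaging, some $i\in T$ satisfies the drop condition. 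The set $T$ is nonempty, since otherwise $z^*$ would be an extreme point of a matroid base polytope and hence integral. The delicate points I need to check carefully are the integrality of the chain values after deletions and contractions, and the strictness of the chain restricted to $F$.
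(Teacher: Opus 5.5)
Your proof is correct. At a drop step you have already deleted the zeros and contracted the ones, so the current ground set is exactly $F$. Hence the chain values $z^*(S_t)=r_{\M'}(S_t)$ are integers and strictly increase, which settles both of the ``delicate points'' you flagged. The counting then gives $|F|-z^*(F)\le|T|$, and averaging over $T$, using $A\ge 0$ and the column-sum bound on $\supp(\bz)$, yields the additive $\nu$ violation.

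The paper gives no argument of its own for this statement. It imports the result from Corollary~11 of \cite{LinharesOSZ20}, which comes out of their iterative-refinement framework for intersections of several matroids with packing constraints. You instead give the classical iterative-relaxation proof specialized to a single matroid. Your route is self-contained and shows exactly where the additive $\nu$ comes from. The citation is shorter and rests on a more general machinery that the paper does not need, since every use in the paper is a partition-matroid (assignment) polytope.

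One small correction concerns your algorithmic remark. There is no compact extended formulation for general matroid base polytopes; Rothvoss showed that some matroid polytopes have superpolynomial extension complexity. For an arbitrary matroid given by a rank oracle, you should therefore use the ellipsoid method with a rank-oracle separation routine (Cunningham's algorithm or submodular minimization), together with the standard technique for extracting an optimal extreme point. For the partition matroids used in the paper, the LPs are explicit and this issue does not arise.
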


\begin{lemma} \label{lem:relatedcruderounding}
For any given $x \in \F$, we can efficiently compute an integral class reservation $\psi : \jobset \to \indexset$ satisfying:
\begin{enumerate}[(i)]
\item For every job $j \in \jobset$, $x_{j,\psi(j)} > 0$.
\item We have $\sum_{j \in \jobset} \bigl(\E[P_{j,\psi(j),0}] / \s_{\nxt(0;\psi(j))}\bigr) \leq \sum_{j \in \jobset, K \in \indexset} (\E[P_{j,K,0}]/\s_{\nxt(0;K)}) \cdot x_{j,K}$. 
\item  For any $\ell \in [L]$, we have $\sum_{j \in \jobset : \psi(j) \ni \ell} \E[P_{j,\psi(j),\ell}] \leq 4 \cdot |C_{\ell}| \cdot \s_{\ell}$. 
\end{enumerate}    
\end{lemma}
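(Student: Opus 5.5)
The plan is to apply the iterative rounding result of \Cref{thm:iterativerounding} to a partition matroid that encodes the reservation constraints \eqref{eq:reservationconstraint}. The objective will be the super-exceptional volume from \eqref{eq:superexceptionalvolume}, and the budget rows will be per-class truncated volumes, normalized so that the column sums stay bounded.

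\textbf{The matroid and the objective.} Take the ground set $\U \coloneqq \jobset \times \indexset$ and the partition matroid whose parts are $\{j\} \times \indexset$ for $j \in \jobset$, each with capacity $1$. Its rank function $\rank(R) = |\{ j : R \cap (\{j\}\times\indexset) \neq \emptyset\}|$ is computable in polynomial time, since $|\indexset|$ is polynomial by Remark~\ref{remark:smoothedmachines}. Its base polytope is exactly the set of nonnegative $z$ with $\sum_{K} z_{j,K} = 1$ for every $j$, so $\bz \coloneqq x$ lies in $\Q$ by \eqref{eq:reservationconstraint} and \eqref{eq:nonnegativity}. Bases correspond to integral class reservations $\psi$, with $\psi(j)$ the unique $K$ such that $(j,K) \in B$. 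Set $c_{j,K} \coloneqq \E[P_{j,K,0}]/\s_{\nxt(0;K)}$. Then property (ii) is $c(B) \leq c^\top \bz$, and property (i) is $B \subseteq \supp(\bz)$.

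\textbf{The budget rows.} For each $\ell \in [L]$, put $A_{\ell,(j,K)} \coloneqq \E[P_{j,K,\ell}]/(|C_\ell|\s_\ell)$ if $\ell \in K$, and $0$ otherwise. To bound $A x$, note that for $\ell \in K$ we have $P_{j,K,\ell} \leq P_{j,K,\leq \ell}$ by \eqref{eq:ellexceptionalrv}. Hence constraint \eqref{eq:ellexceptionalvolume} and property~\ref{prop:smoothgeomcomputes} give
\[ \sum_{j,K : \ell \in K} \E[P_{j,K,\ell}]\, x_{j,K} \leq \sum_{\ell'=1}^{\ell} |C_{\ell'}|\s_{\ell'} \leq 2|C_\ell|\s_\ell, \]
so $Ax \leq b$ with $b \coloneqq 2\cdot\mathbf{1}$.

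\textbf{The column-sum bound.} This is the step where the smoothed structure is essential; everything else is bookkeeping. Since $P_{j,K,\ell} < \s_\ell$ by definition \eqref{eq:rangetruncatedrv}, every entry satisfies $A_{\ell,(j,K)} \leq 1/|C_\ell|$. Remark~\ref{remark:smoothedmachines} gives $|C_\ell| \geq 2^{\ell-1}$, so each column sum is at most $\sum_{\ell \geq 1} 2^{-(\ell-1)} \leq 2$. We may therefore take $\viol = 2$.

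\textbf{Conclusion.} \Cref{thm:iterativerounding} then produces a basis $B$ with $A\chi^B \leq b + 2\cdot\mathbf{1} = 4\cdot\mathbf{1}$. Multiplying row $\ell$ back by $|C_\ell|\s_\ell$ yields $\sum_{j : \ell \in \psi(j)} \E[P_{j,\psi(j),\ell}] \leq 4|C_\ell|\s_\ell$, which is (iii). Properties (i) and (ii) follow directly from the guarantees of the theorem as noted above. No budget row is needed for the super-exceptional part, because it is handled entirely by the cost function.
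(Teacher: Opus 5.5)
Your proposal is correct and follows essentially the same route as the paper. The paper also applies \Cref{thm:iterativerounding} to the assignment (partition-matroid) polytope, using the same cost vector $c_{j,K}=\E[P_{j,K,0}]/\s_{\nxt(0;K)}$, the same normalized truncated-volume rows with budget $b_\ell=2$ justified via \eqref{eq:ellexceptionalvolume} and property~\ref{prop:smoothgeomcomputes}, and the same column-sum bound $\viol=2$ from $|C_\ell|\ge 2^{\ell-1}$, which gives the factor $4$ in (iii).
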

\begin{proof}
We prove this result using Theorem~\ref{thm:iterativerounding}. 
To apply this theorem, we need to set up a minimization LP for which the given point $x$ is feasible and the feasible region of this LP should be a matroid base polytope intersected with some packing constraints that have the so-called ``bounded column sums'' property. 

Let $\U \coloneqq \{ (j,K) : j \in \jobset, K \in \indexset \}$. 
Let $\Q$ be the assignment polytope for the job-set $\jobset$ and the machine-set $\indexset$:
\[
\Q \coloneqq \{ z \in \Rp^{\U} : 
\sum_{K \in \indexset} z_{j,K} = 1 \; \forall j \in \jobset \}
\]
It is well known that the assignment polytope is integral and it can be described as the base polytope of a partition matroid on the groundset $\U$. 
The extreme points of $\Q$ are in one-to-one correspondence with the set of all class reservations. 
We now define $L$ packing constraints of the form $Az \leq b$, where $A \in [0,1]^{[L] \times \U}$ and $b \in \Rp^{[L]}$.   
For every $(j,K) \in \U$ and $\ell \in K$, we set $A_{\ell,j,K} = \E[P_{j,K,\ell}]/(|C_\ell|\s_\ell)$ to be the expected value of the $\ell$-truncated variable for job $j$ w.r.t. the index-set $K$, scaled down by the total processing power in speed class $\ell$. 
By definition of $\ell$-truncated variables (see \eqref{eq:rangetruncatedrv}), we have $0 \leq A_{\ell,j,K} \leq 1/|C_\ell|$. 
For $\ell \notin K$, we set $A_{\ell,j,K} = 0$. 
Next, we set $b_\ell = 2$ for all $\ell \in [L]$.  
Last, for all $(j,K) \in \U$ we set $c_{j,K} = \E[P_{j,K,0}] / \s_{\nxt(0;K)}$ to be the expected mass of the super-exceptional variable for job $j$ w.r.t. index-set $K$, scaled down by the speed of the fastest machine that can process it. 

\smallskip
Consider the linear program: $\min\{c^\top z : Az \leq b, z \in Q\}$. 
The given $x$ is feasible to this LP because: (i)~$x \in \F \implies x \in \Q$; (ii)~for any $\ell \in [L]$, the volume inequality \eqref{eq:ellexceptionalvolume} and property~\ref{prop:smoothgeomcomputes} of \smoothed machines imply:  
\begin{multline*}
\sum_{j,K : K \ni \ell} A_{\ell,j,K} \cdot x_{j,K} = \frac{\sum_{j,K : K \ni \ell} \E[P_{j,K,\ell}] \cdot x_{j,K}}{|C_\ell| \cdot \s_\ell} \leq \frac{\sum_{j,K} \E[P_{j,K,\leq \ell}] \cdot x_{j,K}}{|C_\ell| \cdot \s_\ell} \leq \frac{\sum_{\ell^\prime=1}^{\ell} |C_{\ell^\prime}| \cdot \s_{\ell^\prime}}{|C_{\ell}| \cdot \s_\ell} \leq 2.
\end{multline*}
Next, we compute the maximum column sum in $A$. 
Fix some $(j,K)\in \U$. 
Recall that $A_{\ell,j,K} = 0$ for $\ell \notin K$ and $0 \leq A_{\ell,j,K} \leq 1/|C_\ell|$ for $\ell \in K$. 
Since the cardinality of the speed classes grows geometrically (\Cref{remark:smoothedmachines}), we get $\sum_{\ell \in [L]} A_{\ell,j,K} \leq \sum_{\ell \in K} \frac{1}{|C_\ell|} \leq 2$. 

\smallskip

We now invoke Theorem~\ref{thm:iterativerounding} for the above LP with $\viol = 2$ to obtain an integral solution $\hat{x} \in \Q$ satisfying: (i)~$\supp(\hat{x}) \subseteq \supp(x)$; (ii)~$c^\top \hat{x} \leq c^\top x$; and~(iii)~$A\hat{x} \leq 4 \cdot \bone$. 
This $\hat{x}$ corresponds to an integral class reservation $\psi : \jobset \to \indexset$ with the claimed properties.  
\end{proof}

\subsection{\boldmath An \texorpdfstring{$O(\log m/\log \log m)$}{O(log m/log log m)}-Approximation Algorithm}
\label{related-log}
We now describe a simple greedy strategy for converting a class reservation $\psi$ to a reservation $\sg$ where for each job $j$ we choose exactly one machine from $C_\ell$ for every class-index $\ell$ that appears in $\psi(j)$ (Lemma~\ref{lem:onechoiceperclass}). 
We also show that this step leads to at most a factor $O(\log m/\log \log m)$ loss in the cost (Lemma~\ref{lem:costofonechoice}). 

\begin{lemma} \label{lem:onechoiceperclass}
Let $\beta \geq 1$ be a real number and let $\psi$ be a class reservation such that for any class-index $\ell \in [L]$, we have $\sum_{j \in \jobset : \psi(j) \ni \ell} \E[P_{j,\psi(j),\ell}] \leq \beta \cdot |C_{\ell}| \cdot \s_{\ell}$. 
We can efficiently compute a reservation $\sg$ of jobs to machines satisfying: 
\begin{enumerate}[(i)]
    \item For every job $j \in \jobset$ and index $\ell \in \psi(j)$,  $\sg(j)$ contains a unique machine from $C_\ell$. 
    In other words, $|\sg(j)| = |\psi(j)|$ for all $j$, and $\psi$ is the class reservation induced by $\sg$. 
    
    \item For any index $\ell \in [L]$ and machine $i \in C_\ell$, we have $\sum_{j \in \jobset : \sg(j) \ni i} \E[P_{j,\psi(j),\ell}] \leq (\beta+1) s_i$.
\end{enumerate}
\end{lemma}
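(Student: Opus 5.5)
The plan is to treat each speed class separately and run a greedy list-scheduling argument inside it. Since the choice of machine in $C_\ell$ for a job $j$ with $\ell \in \psi(j)$ affects only the load term $\E[P_{j,\psi(j),\ell}]$ charged to machines of $C_\ell$, the $L$ classes decouple. Fix $\ell \in [L]$, let $J_\ell \coloneqq \{ j : \psi(j) \ni \ell\}$, and define the weight $w_j \coloneqq \E[P_{j,\psi(j),\ell}]$ for $j \in J_\ell$. By the definition \eqref{eq:rangetruncatedrv} of $\ell$-truncated portions, $P_{j,\psi(j),\ell} < \s_\ell$ pointwise, hence $w_j \le \s_\ell$. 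The hypothesis gives $\sum_{j \in J_\ell} w_j \le \beta |C_\ell| \s_\ell$.

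The construction is to process the jobs of $J_\ell$ in arbitrary order and assign each to a machine $i \in C_\ell$ of currently minimum accumulated weight. Then $\sg(j)$ is the set of machines chosen for $j$ over all $\ell \in \psi(j)$. This yields exactly one machine per class in $\psi(j)$, and since the classes are disjoint, $|\sg(j)| = |\psi(j)|$ and the induced class reservation is $\psi$, which gives part (i). The running time is clearly polynomial.

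For part (ii), fix $i \in C_\ell$ and consider the last job $j$ assigned to $i$. Just before that assignment, $i$ had minimum load among the machines of $C_\ell$, so its load was at most the average, $\sum_{j' \in J_\ell} w_{j'}/|C_\ell| \le \beta \s_\ell$. Adding $w_j \le \s_\ell$ gives a final load of at most $(\beta+1)\s_\ell = (\beta+1)s_i$, as claimed.

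There is no real obstacle here. The only point needing care is the pointwise bound $w_j \le \s_\ell$, which comes from the upper cutoff $P_j < \s_\ell$ built into the $\ell$-truncated portion. This is precisely why the additive error is a single $\s_\ell$ term.
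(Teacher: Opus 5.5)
Your proof is correct and matches the paper's argument: a greedy least-loaded assignment within each class $C_\ell$, where averaging bounds the chosen machine's load before the assignment by $\beta \s_\ell$, and the pointwise bound $P_{j,\psi(j),\ell} < \s_\ell$ bounds the added job. The only cosmetic difference is that the paper processes the pairs $(j,\ell)$ with $\ell \in \psi(j)$ in one arbitrary interleaved order instead of class by class, which is equivalent because the classes decouple.
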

\begin{proof}
The reservation $\sg$ can be greedily constructed by iterating over the jobs and class-indices in $\psi(\cdot)$ in an arbitrary order. 
Initially, $\sg : \jobset \to \{\emptyset\}$ and all pairs $(j,\ell)$ with $\ell \in \psi(j)$ are unmarked. 
Consider some unmarked job-index pair $(j^*,\ell)$ s.t. $\ell \in \psi(j^*)$. 
Let $i \in C_\ell$ be the class-$\ell$ machine with the smallest value of $\sum_{j \in J : \sg(j) \ni i} \E[P_{j,\psi(j),\ell}]$ w.r.t. the partial reservation $\sg$ that has been constructed so far. 
As $P_{j,\psi(j),\ell}$ is always bounded from above by $s_i (= \s_\ell)$ for all relevant $j$ (recall \eqref{eq:rangetruncatedrv}), we get $\sum_{j \in J : \sg(j) \ni i} \E[P_{j,\psi(j),\ell}] \leq \beta \cdot s_i$ using an averaging argument and our assumption that the total $\ell$-truncated mass is at most $\beta \cdot |C_\ell| \cdot \s_\ell$. 
We include the machine $i$ in $\sg(j^*)$ (i.e., reserve $i$ for $j^*$) and mark the pair $(j^*,\ell)$. 
We repeat the above procedure as long as there is an unmarked pair. 
It is trivial to check that the resulting reservation $\sg$ satisfies all of the properties stated in the lemma.  
\end{proof}

Our next result is that the reservation $\sg$ produced by Lemma~\ref{lem:onechoiceperclass} is an $O(\log m/\log \log m)$-approximation to the given instance of \slbr. 
The proof is based on Chernoff bounds. 
The high-level idea is to exhibit a $\sg$-\consistent assignment $\asgn$, depending on the realization $P$, such that $\mksp(\asgn;P)$ has strong concentration properties. 
We recall \eqref{eq:rangetruncatedrv} where we gave a decomposition of a job variable $P_j$ into a super-exceptional portion and one or more $\ell$-truncated portions. 
We restate this decomposition w.r.t. the index-set $\psi(j)$:
\[
P_j = \sum_{\ell \in \{0\} \cup \psi(j)} P_{j,\psi(j),\ell}, 
\text{ where } P_{j,\psi(j),\ell} = P_j \cdot \bone(\s_{\nxt(\ell;\psi(j))} \leq P_j < \s_\ell).
\]

\begin{lemma} \label{lem:costofonechoice} 
Let $\beta \geq 1$ be a real number, $\sg$ a reservation, and $\psi$ the class reservation induced by $\sg$. 
Suppose for any index $\ell \in [L]$ and any machine $i \in C_\ell$, we have $\sum_{j \in J : \sg(j) \ni i} \E[P_{j,\psi(j),\ell}] \leq \beta \cdot s_i$. 
Then, 
\begin{equation} \label{eq:multiassignmentcostub}    
\cost(\sg) \leq O(\log m/\log \log m) \cdot \beta + \sum_{j \in \jobset} \bigl(\E[P_{j,\psi(j),0}] / \max_{i \in \sg(j)} s_i \bigr).
\end{equation}
\end{lemma}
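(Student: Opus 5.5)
We will exhibit, for every realization $P$, a $\sg$-\consistent assignment $\asgn_P$ and bound $\E[\mksp(\asgn_P;P)]$ directly. The rule is: decompose $P_j=\sum_{\ell\in\{0\}\cup\psi(j)}P_{j,\psi(j),\ell}$. Exactly one of these portions is nonzero in any realization. If it is the $\ell$-truncated portion for some $\ell\in\psi(j)$, assign $j$ to the unique machine $i_\ell(j)\in\sg(j)\cap C_\ell$. If it is the super-exceptional portion (or $P_j=0$), assign $j$ to a fastest machine of $\sg(j)$, whose speed is $\max_{i\in\sg(j)}s_i=\s_{\nxt(0;\psi(j))}$. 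Then for each machine $i\in C_\ell$,
\[
\load(i;\asgn_P,P)\le \sum_{j:\sg(j)\ni i} \frac{P_{j,\psi(j),\ell}}{s_i} + \sum_{j} \frac{P_{j,\psi(j),0}}{\max_{i'\in\sg(j)}s_{i'}},
\]
so the makespan is at most $\max_i Y_i + \sum_j Q_j$, where $Y_i:=\sum_{j:\sg(j)\ni i}P_{j,\psi(j),\ell}/s_i$ and $Q_j:=P_{j,\psi(j),0}/\max_{i\in\sg(j)}s_i$. The super-exceptional jobs are charged crudely via the sum. Taking expectations, the term $\E[\sum_j Q_j]$ is exactly the second term of \eqref{eq:multiassignmentcostub}. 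It remains to show $\E[\max_i Y_i]=O(\beta\log m/\log\log m)$.

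For a fixed machine $i\in C_\ell$, $Y_i$ is a sum of independent random variables, since jobs are independent. Each summand lies in $[0,1]$ because $P_{j,\psi(j),\ell}<\s_\ell=s_i$. By hypothesis $\E[Y_i]\le\beta$. Set $\mu=\beta$ and $B=c\beta\log m/\log\log m$ for a suitable constant $c$, so $1+\delta=B/\beta=c\log m/\log\log m\ge\euler^2$ once $m$ is large; small $m$ is handled by adjusting constants. Lemma~\ref{lem:uppertailmass} then gives
\[
\E[\max(0,Y_i-B)]\le\Bigl(\frac{\euler\log\log m}{c\log m}\Bigr)^{c\beta\log m/\log\log m}\le m^{-c'\beta},
\]
using $\beta\ge1$. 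With $c$ large, this is at most $m^{-3}$, say.

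Finally, $\E[\max_i Y_i]\le B+\sum_i\E[\max(0,Y_i-B)]$. The machine set may have size $|M|=O(m^2)$ after smoothing, or $m$ if the lemma is read on the current instance; either way the sum is $O(1)$ when the exponent is chosen large enough (with $\log|M|=\Theta(\log m)$). This yields $\E[\max_iY_i]=O(\beta\log m/\log\log m)$, completing the bound. The main obstacle is the tail-bound calculation: we must check that the exponent $(1+\delta)\mu\cdot\ln((1+\delta)/\euler)$ is $\Omega(\beta\log m)$ with a large enough constant to survive the union bound over all machines. This works because $\ln(\log m/\log\log m)=\Theta(\log\log m)$. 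We also need to confirm that each job contributes only a single portion to each machine's load, which holds since $\sg(j)\cap C_\ell$ contains exactly one machine by Lemma~\ref{lem:onechoiceperclass}~(i).
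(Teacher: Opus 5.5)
Your proposal is correct and follows essentially the same route as the paper. It uses the same realization-dependent assignment (each $\ell$-truncated portion goes to the reserved class-$\ell$ machine, each super-exceptional portion to a fastest reserved machine), bounds the truncated makespan with Lemma~\ref{lem:uppertailmass} at threshold $\Theta(\beta\log m/\log\log m)$ plus a union bound over machines, and charges the exceptional part by summing expectations. As a minor aside, your bound via $Y_i$ holds for any choice of class-$\ell$ machine in $\sigma(j)$, so the uniqueness from Lemma~\ref{lem:onechoiceperclass} is not actually needed.
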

\begin{proof}
Our upper bound on $\cost(\sg)$ is based on a particular choice of a $\sg$-\consistent assignment $\asgn : \jobset \to [m]$ which depends on the realizations $P$. 
We remark that the choice of machine $\asgn(j)$ depends only on $P_j$ and not on $P_{j^\prime}$ for $j^\prime \neq j$.
For any job $j \in \jobset$, the super-exceptional portion $P_{j,\psi(j),0}$ is always processed by the fastest machine $i \in \sg(j)$, which also belongs to $C_{\nxt(0;\psi(j))}$. 
Next, for any $j \in \jobset$ and index $\ell \in \psi(j)$, the $\ell$-truncated portion $P_{j,\psi(j),\ell}$ is always processed by the unique class-$\ell$ machine that has been reserved for $j$.  
By the above discussion, the (random) load on a class-$\ell$ machine $i$ can be split into two parts: (i)~the \emph{truncated} portion $\load^{\tr}(i)$, given by $\bigl(\sum_{j \in J^{\tr}_i} P_{j,\psi(j),\ell}\bigr)/s_i$, where $J^{\tr}_i = \{ j \in \jobset : i \in \sg(j) \}$; and (ii)~the \emph{exceptional} portion $\load^{\ex}(i)$, given by $\bigl(\sum_{j \in J^{\ex}_i} P_{j,\psi(j),0}\bigr)/s_i$, where $J^{\ex}_i = \{ j \in \jobset : i \in \sg(j) \cap C_{\nxt(0;\psi(j))}\}$. 

\smallskip

The expected value of the exceptional makespan can be bounded as follows:
\[
\E[\max_{i \in [m]} \load^{\ex}(i)] \leq \sum_{i \in [m]} \E[\load^{\ex}(i)] = \sum_{j \in \jobset} \frac{\E[P_{j,\psi(j),0}]}{\max_{i \in \sg(j)} s_i}.
\]
Next, we bound the expected value of the truncated makespan as follows: 
\[
\E[\max_{i \in [m]} \load^{\tr}(i)] \leq B + \sum_{i \in [m]} \E[\max(0,\load^{\tr}(i) - B)],
\]
where $B$ is a threshold to be chosen later so that $\E[\max(0,\load^{\tr}(i) - B)] = o(1/m)$ holds for every $i \in [m]$. 
To this end, fix some class-$\ell$ machine $i$. 
By definition, $\load^{\tr}(i) = \sum_{j : i \in \sg(j)} P_{j,\psi(j),\ell}/s_i$ is a sum of independent $[0,1]$-bounded random variables, whose mean $\E[\load^{\tr}(i)]$ is at most $\beta$ (recall the hypothesis). 
Using Lemma~\ref{lem:uppertailmass} with $B = (2\euler\beta) \cdot \log m/\log \log m$ and $\delta = B/\beta - 1$, we get:
\[
\E[\max(0,\load^{\tr}(i) - B)] \leq \left( \frac{\euler}{1+\delta} \right)^{B} \leq \Bigl(\frac{2 \log m}{\log \log m}\Bigr)^{-(2 \cdot \log m/\log \log m)} = o(1/m).
\]
Using triangle inequality for norms, we get:
\begin{multline*}
\cost(\sg) \leq \E\Bigl[ \bigl(\max_{i \in [m]} \load^{\tr}(i)\bigr) + \bigl(\max_{i \in [m]} \load^{\ex}(i)\bigr) \Bigr] \leq O\Bigl(\frac{\log m}{\log \log m}\Bigr) \cdot \beta + \sum_{j \in \jobset} \frac{\E[P_{j,\psi(j),0}]}{\max_{i \in \sg(j)} s_i}. \tag*{\qedhere}
\end{multline*}
\end{proof}

Putting everything together, we get the following approximation result for \slbr. 

\rellogthm*

\begin{remark}
For any $k = o(\log m/\log \log m)$, the analysis of the approximation ratio in \Cref{rellog-thm} is tight up to constant factors when comparing against the LP based lower bound from \Cref{subsec:lpoptk}. 
See \Cref{append:examples} for a concrete instance. 
\end{remark}

\subsection{\boldmath Bicriteria \texorpdfstring{$O(1)$}{O(1)}-Approximation Algorithm} \label{related-bicriteria}
In the previous section, we saw that converting a given class reservation $\psi$ to a reservation $\sg$ by choosing one machine per class that appears in $\psi(j)$ leads to an $O(\log m/\log \log m)$-approximation algorithm. 
This loss is unavoidable. 
Imagine a situation where some size class $\ell$ has $|C_\ell|/p$ jobs with identical distributions, each being $\s_\ell$ with probability $p$ and $0$ with probability $1-p$. Standard arguments show that an  optimal assignment of these jobs to machines in $C_\ell$ will give an expected makespan of $O(\log |C_\ell|/\log \log |C_\ell|)$, even though the omniscient optimum is a constant.  
In \Cref{sec:identical}, we saw that sampling two uniformly random machines is useful in obtaining approximation guarantees relative to the average load over a group of identical machines. 
We take the same approach here and sample two uniformly random machines from each class that appears in $\psi(j)$. 
This leads to a bicriteria approximation algorithm for \slbr. 

\relbicriteriathm*

This theorem follows immediately from the following main technical lemma, applied to an approximately optimal class reservation obtained from the previous subsection. 

\begin{lemma} \label{lem:twochoicesperclass}
Let $\beta \geq 1$ be a real number and $\psi : \jobset \to 2^{[L]} \sm \{\emptyset\}$ be a class reservation such that for any index $\ell \in [L]$, we have $\sum_{j \in \jobset : \psi(j) \ni \ell} \E[P_{j,\psi(j),\ell}] \leq \beta \cdot |C_{\ell}| \cdot \s_{\ell}$. 
Consider the random reservation $\sg : \jobset \to 2^{[m]} \sm \{\emptyset\}$ obtained as follows: for every job $j \in \jobset$ and every index $\ell \in \psi(j)$, we sample two machines $i_1(j,\ell),i_2(j,\ell)$ independently and uniformly from $C_\ell$, and set $\sg(j) = \bigcup_{\ell \in \psi(j)} \{i_1(j,\ell), i_2(j,\ell)\}$. 
Then, 
\begin{equation} \label{eq:twomultiassignmentcostub} 
\E_{\sg}[\cost(\sg)] \leq O(\beta) + \sum_{j \in \jobset} \frac{\E[P_{j,\psi(j),0}]}{ \s_{\nxt(0;\psi(j))}}.
\end{equation}
\end{lemma}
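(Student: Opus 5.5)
Our plan is to mirror the proof of Lemma~\ref{lem:costofonechoice}, replacing the fixed one-machine-per-class assignment with a realization-dependent assignment. For each realization we split each job's size as $P_j = P_{j,\psi(j),0} + \sum_{\ell \in \psi(j)} P_{j,\psi(j),\ell}$. Only one summand is nonzero, so each job is handled in exactly one of the following two ways. If its super-exceptional portion is positive, we send it to $i_1(j,\nxt(0;\psi(j)))$, which is a fastest reserved machine with speed $\s_{\nxt(0;\psi(j))}$. Otherwise its $\ell$-truncated portion is positive for a unique $\ell \in \psi(j)$, and we restrict it to the pair $\{i_1(j,\ell),i_2(j,\ell)\} \subseteq C_\ell$.

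The cost is then at most the exceptional makespan plus the maximum over $\ell$ of the best makespan within class $C_\ell$, where class $\ell$ handles the jobs whose $\ell$-truncated portion is positive, each restricted to its two sampled machines. The exceptional makespan is bounded by the sum of expected loads, exactly as in Lemma~\ref{lem:costofonechoice}; this produces the additive term $\sum_j \E[P_{j,\psi(j),0}]/\s_{\nxt(0;\psi(j))}$. Within class $\ell$ the machines are identical with speed $\s_\ell$, so Lemma~\ref{lem:density}(ii) bounds the class-$\ell$ makespan by $2\max_{\emptyset \ne R \subseteq C_\ell} X_R/(|R|\s_\ell)$. Here $X_R \coloneqq \sum_{j} P_{j,\psi(j),\ell}\,\bone(\{i_1(j,\ell),i_2(j,\ell)\}\subseteq R)$.

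The key point is that, after normalizing by $\s_\ell$, $X_R/\s_\ell$ is a sum over jobs of independent $[0,1]$-bounded variables. This uses the independence of the $P_j$s and of the samples across jobs; the summands are $[0,1]$-bounded because $P_{j,\psi(j),\ell} < \s_\ell$. Its mean is at most $\beta |C_\ell| \cdot |R|^2/|C_\ell|^2 = \beta |R|^2/|C_\ell|$, by Claim~\ref{clm:jlandsinR} applied within $C_\ell$ together with the hypothesis.

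Next we repeat the computation of Theorem~\ref{thm:twochoicessuffice}. We take the threshold $B = \euler^2\beta|R|$ and $1+\delta = \euler^2|C_\ell|/|R|$, and apply Lemma~\ref{lem:uppertailmass}, union-bounding over $R \subseteq C_\ell$. This gives
\[
\E\Bigl[\max_R X_R/(|R|\s_\ell)\Bigr] \le \euler^2\beta + \sum_{r\ge1}\binom{|C_\ell|}{r}\frac1r\Bigl(\frac{\euler|C_\ell|}{r}\Bigr)^{-\euler^2\beta r} \le \euler^2\beta + \sum_{r \ge 1} \euler^{-6r}/r.
\]
The final bound is $O(\beta)$, since $\beta \ge 1$.

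The main obstacle is that we need the expectation of the \emph{maximum over classes}, not the maximum of per-class expectations. A naive sum over $\ell$ loses a factor $L = O(\log m)$. To avoid this, we will sum the excess terms $\E[\max(0,\cdot-\euler^2\beta|R|)]$ over all classes and all sets simultaneously. This gives $\euler^2\beta + \sum_\ell \sum_{r \ge 1}\binom{|C_\ell|}{r}\frac1r(\euler|C_\ell|/r)^{-\euler^2 r}$. The inner sum is dominated by its $r=1$ term, which is roughly $|C_\ell|\cdot(\euler|C_\ell|)^{-\euler^2}$, and hence decays polynomially in $|C_\ell|$. Since $|C_\ell| \ge 2^{\ell-1}$ by Remark~\ref{remark:smoothedmachines}, the sum over $\ell$ converges to a constant. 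Combining this with the exceptional bound via the triangle inequality yields \eqref{eq:twomultiassignmentcostub}.
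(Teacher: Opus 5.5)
Your proof is correct, but it handles the randomness differently from the paper. The paper conditions on the job sizes first. \Cref{boundedgamma} is a two-choices statement for \emph{fixed} sizes $p$ whose class volumes are at most $\gamma|C_\ell|$, generalizing Theorem~\ref{thm:twochoicessuffice}. \Cref{alphaconcentration} then uses Chernoff bounds over $P$ alone to show that the volume ratio $\alpha(P)$ has exponentially decaying tails beyond $\euler^2\beta$, and the two are combined by integrating that tail.

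You instead work under the joint randomness of $P$ and $\sg$. The summands $P_{j,\psi(j),\ell}\,\bone(\{i_1(j,\ell),i_2(j,\ell)\}\subseteq R)/\s_\ell$ are independent and $[0,1]$-bounded across jobs, with total mean at most $\beta|R|^2/|C_\ell|$. So one application of Lemma~\ref{lem:uppertailmass} per pair $(\ell,R)$, plus a single union bound over all pairs with the common threshold $\euler^2\beta$, suffices. This removes $\alpha(P)$ and the tail integral entirely. Applying Lemma~\ref{lem:density}(ii) class by class, rather than using the expanded job set and Lemma~\ref{lem:density}(iii), is an equivalent, cosmetic difference.

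Your route is shorter. The paper's route is more modular: job independence is used only to concentrate the class volumes, while the two-choices step holds for arbitrary, even adversarial, sizes, exactly as in the identical-machines theorem. Your argument uses independence inside the core union bound, which is fine here since it is assumed.

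One step needs justification: the claim that the inner sum is ``dominated by its $r=1$ term.'' Write $n=|C_\ell|$. After $\binom{n}{r}\le(\euler n/r)^r$, each term is at most $\frac1r\bigl(r/(\euler n)\bigr)^{(\euler^2-1)r}$. The map $r\mapsto r\ln\bigl(r/(\euler n)\bigr)$ is convex on $[1,n]$ and largest at $r=1$, so every term is at most $(\euler n)^{-6}$. Hence the inner sum is at most $1/n$, which sums to at most $2$ over $\ell$ because $|C_\ell|\ge 2^{\ell-1}$.
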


\begin{proof}
We follow the same high-level strategy as in Lemma~\ref{lem:costofonechoice}. 
We separately upper-bound the cost of the truncated and exceptional load vectors, which together imply an upper bound on the expected cost of $\sg$ over the randomness in $\sg$ and the $P_j$s. 
For each job $j \in \jobset$ and index $\ell \in \psi(j)$, the $\ell$-truncated portion $P_{j,\psi(j),\ell}$ is always processed by machine $i_1(j,\ell)$ or $i_2(j,\ell)$ in the second stage. 
The actual choice is decided after the job sizes are realized. 
The truncated load vector is induced by these truncated jobs, and the associated makespan objective will be bounded by a certain max-density function that we introduced in Lemma~\ref{lem:density}. 
The exceptional load vector arises from the super-exceptional portions.
Recall that for any job $j$, $\nxt(0;\psi(j))$ is the smallest index in $\psi(j)$ and, by definition, $\sg(j)$ always contains either one or two machines from class $C_{\nxt(0;\psi(j))}$. 
For any $j$, we will always process the super-exceptional portion $P_{j,\psi(j),0}$ on machine $i_1(j,\nxt(0;\psi(j)))$ in the second stage; note that this machine has speed $\s_{\nxt(0;\psi(j))}$. 
Like in Lemma~\ref{lem:costofonechoice}, the expected value (over the randomness in the $P_j$s) of the exceptional makespan can be charged to the second term in the RHS of 
\eqref{eq:twomultiassignmentcostub}, for every realization of $\sg$. 

We now focus on the cost arising from truncated jobs. 
In Lemma~\ref{lem:density}, we saw that for any $p \in \Rp^{\jobset}$, there is a $\sg$-\consistent assignment $\asgn$ satisfying: 
\[
\mksp(\asgn;p) \leq \denlb(p;\sg,s) \cdot \max_{j \in \jobset} |\sg(j)|, 
\]
where 
\[
\denlb(p;\sg,s) \coloneqq \max_{R \subseteq [m] : R \neq \emptyset} \frac{\sum_{j \in \jobset: \sg(j) \subseteq R} p_j}{\sum_{i \in R} s_i}.
\]
The problem with using the above result in a black-box fashion is that the approximation guarantee would be $O(\max_{j} |\psi(j)|) = O(k)$, which could be a super-constant.  
Since we will always be processing the $\ell$-truncated jobs on class-$\ell$ machines, the random reservation $\sg$ has a nice structure which can be exploited to circumvent this $O(k)$ loss.  
To this end, we define an extended job-set $\jobset^\prime \coloneqq \{(j,\ell) : j \in \jobset, \ell \in \psi(j)\}$ comprising all job-index pairs w.r.t. the class reservation $\psi$. 
Next, let $P^\prime \coloneqq (P_{j,\psi(j),\ell})_{(j,\ell) \in \jobset^\prime}$ denote the joint distribution of all truncated job variables.  
Note that for distinct indices $\ell,\ell^\prime \in \psi(j)$, both $P_{j,\psi(j),\ell}$ and $P_{j,\psi(j),\ell^\prime}$ cannot simultaneously realize a positive value. 
As a consequence, each job $j$ that realizes a positive size (in the second stage) is processed on a single machine from $\sg(j)$.  
However, the random variables $P_{j,\psi(j),\ell}$ and $P_{j^\prime,\psi(j^\prime),\ell^\prime}$ are independent whenever $j \neq j^\prime$. 
Last, we define a $2$-reservation $\sgp : \jobset^\prime \to 2^{[m]} \sm \{\emptyset\}$ on the expanded job-set by $\sgp(j,\ell) = \sg(j) \cap C_\ell$. 
Note that the distribution of $\sgp(j,\ell)$ is given by two independent and uniform samples from $C_\ell$. 
Using Lemma~\ref{lem:density}, we get the following (random) upper bound on the truncated cost of $\sg$ w.r.t. $P$:
\begin{align} \label{eq:expandedtwoassignment}
\cost(\sg) & = \E[\cost(\sg;P)] \leq \E[\cost(\sgp;P^\prime)] \notag \\ 
& \leq 2 \cdot \max_{R \subseteq [m]: R \neq \emptyset} \frac{\sum_{(j,\ell) \in \jobset^\prime} P_{j,\psi(j),\ell} \cdot \bone(\sgp(j,\ell) \subseteq R)}{\sum_{i \in R} s_i}.
\end{align} 
By construction, the $2$-reservation $\sgp$ has the nice property that for any $(j,\ell) \in \jobset^\prime$, we have $\sgp(j,\ell) \subseteq C_\ell$. 
Since the speed classes partition the machine-set $[m]$, Lemma~\ref{lem:density}(iii) implies that the maximum density in the RHS of \eqref{eq:expandedtwoassignment} is attained by a subset of one of the speed classes. 
Therefore, we get:
\begin{equation} \label{eq:costatmosttwicedensity}
\cost(\sg) \leq 2 \cdot \max_{\ell \in [L], R \subseteq C_\ell : R \neq \emptyset} \frac{\sum_{(j,\ell) \in \jobset^\prime} P_{j,\psi(j),\ell} \cdot \bone(\sgp(j,\ell) \subseteq R)}{|R| \cdot \s_\ell}  
\end{equation}

In the above inequality, there are two sources of randomness: $\sg$ and the $P_j$s. 
We first show that the expectation of $\cost(\sg)$ over the randomness in $\sg$ can be bounded by $O(\alpha)$, for some sufficiently large $\alpha$, whenever the event $\{\sum_{j \in \jobset : \psi(j) \ni \ell} P_{j,\psi(j),\ell} \leq \alpha \cdot |C_\ell| \cdot \s_\ell \, \forall \ell \in [L]\}$ happens (\Cref{boundedgamma}). 
We then use Chernoff bounds to show that $\E_P[\alpha] = O(\beta)$ (\Cref{alphaconcentration}), which completes the proof. 
To this end, consider a real-valued random function $\alpha$ that captures the maximum multiplicative gap between the total volume of the $\ell$-truncated job variables and the available processing power in class $C_\ell$ over all $\ell$: 
\[
\alpha(P) \coloneqq \max\Bigl(1,\max_{\ell \in [L]} \frac{\sum_{j \in \jobset : (j,\ell) \in \jobset^\prime} P_{j,\psi(j),\ell}}{|C_\ell| \cdot \s_\ell}\Bigr).
\]

The following two results follow from a slight generalization of the ideas we used in \Cref{thm:twochoicessuffice}. 

\begin{lemma} \label{boundedgamma}
Let $\gamma \geq 1$ be a real number and $p \in [0,1]^{\jobset^\prime}$. 
Suppose for all $\ell \in [L]$ we have $\sum_{j \in \jobset : (j,\ell) \in \jobset^\prime} p_{j,\ell} \leq \gamma \cdot |C_\ell|$. 
Then, 
\[
\E_{\sgp}\Bigl[ \max_{\ell \in [L], R \subseteq C_\ell : R \neq \emptyset} \frac{\sum_{j \in \jobset: (j,\ell) \in \jobset^\prime} p_{j,\ell} \cdot \bone(\sgp(j,\ell) \subseteq R)}{|R|} \Bigr] \leq \euler^2 \gamma + 2.
\]
\end{lemma}

\begin{proof}
For any index $\ell \in [L]$ and any nonempty $R \subseteq C_\ell$, let $X_R \coloneqq \sum_{j : (j,\ell) \in \jobset^\prime} p_{j,\ell} \cdot \bone(\sgp(j,\ell) \subseteq R)$ denote the random sum of $p_{j,\ell}$s for which the event $\{\sgp(j,\ell) \subseteq R\}$ happens. 
For any $(j,\ell) \in \jobset^\prime$, the distribution function of $\sgp(j,\ell)$ is just the outcome of sampling two machines independently and uniformly from $C_\ell$. Therefore, $\E[X_R] = \bigl(\sum_{j : (j,\ell) \in \jobset^\prime} p_{j,\ell}\bigr) \cdot \bigl(|R|^2/|C_\ell|^2\bigr) \leq \gamma \cdot |R|^2 / |C_\ell|$.
Using a union-bound calculation, we get: 
\[
\LHS \leq \euler^2 \gamma + \sum_{\ell \in [L], R \subseteq C_\ell : R \neq \emptyset} \E_{\sgp}\bigl[ \bigl( X_R - \euler^2 \gamma |R| \bigr)^{+} \bigr]/|R|.
\]
Using Lemma~\ref{lem:uppertailmass} with $B = \euler^2 \gamma |R|$ and $\delta = \euler^2 |C_\ell|/|R| - 1$, we get: 
\[
\E_{\sgp}\bigl[ \bigl( X_R - \euler^2 \gamma |R| \bigr)^{+} \bigr] \leq \Bigl(\frac{|R|}{\euler |C_\ell|}\Bigr)^{\euler^2 \gamma |R|}.
\]
Summing the above inequality over all subsets of some class $C_\ell$ and using $\binom{n}{k} \leq (\euler n/k)^k$ yields:
\begin{multline*}
\sum_{R \subseteq C_\ell : R \neq \emptyset} \frac{1}{|R|} \cdot \E_{\sgp}\bigl[ \bigl( X_R - \euler^2 \gamma |R| \bigr)^{+} \bigr] \leq \sum_{r=1}^{|C_\ell|} \binom{|C_\ell|}{r} \cdot \Bigl(\frac{r}{\euler |C_\ell|}\Bigr)^{\euler^2 \gamma r} \leq \sum_{r=1}^{|C_\ell|} \Bigl(\frac{r}{\euler |C_\ell|}\Bigr)^{6r} \leq \frac{1}{|C_\ell|}.
\end{multline*}
Summing the above inequality over all $\ell \in [L]$ yields: 
\[
\LHS \leq \euler^2 \gamma + \sum_{\ell = 1}^{L} |C_\ell|^{-1} \leq \euler^2 \gamma + 2. \tag*{\qedhere}
\]
\end{proof}

\begin{lemma} \label{alphaconcentration} 
For any $\gamma \geq \euler^2 \beta$, we have $\Pr[\alpha(P) > \gamma] \leq 2\euler^{-\gamma}$. 
\end{lemma}

\begin{proof}
Fix some $\gamma \geq \euler^2 \beta$.
For any $\ell \in [L]$, define $X_\ell \coloneqq \sum_{j \in \jobset : (j,\ell) \in \jobset^\prime} P_{j,\psi(j),\ell}/\s_\ell$. 
Note that $X_\ell$ is a sum of independent $[0,1]$-bounded random variables with mean $\E[X_\ell] \leq \beta |C_\ell|$ (recall the hypothesis of the lemma). 
Using Theorem~\ref{thm:chernoff} with $\mu = \beta |C_\ell|$ and $\delta = \euler^2-1$ gives: 
\[
\Pr[\alpha(P) > \gamma] \leq \sum_{\ell = 1}^{L} \Pr\bigl[ X_\ell > \gamma \cdot |C_\ell|\bigr] \leq \sum_{\ell = 1}^{L} \euler^{-\gamma \cdot |C_\ell|} \leq 2 \euler^{-\gamma}. \tag*{\qedhere} 
\]
\end{proof}

Putting everything together we get the following bound on the expected value of the truncated cost where we use $\tau = 2\euler^4 \beta + 4$:
\begin{flalign}
\E_{\sg,P}[\cost(\sg;P)] & \leq \tau + \int_{\tau}^{\infty} \Pr_{P}\Bigl[ \E_{\sgp}\bigl[\cost(\sgp;P^\prime)\bigr] > \gamma \Bigr] \, d\gamma \notag \\
& \leq \tau + \int_{\tau}^{\infty} \Pr_{P}\Bigl[ \E_{\sgp}\Bigl[ \max_{\ell \in [L], R \subseteq C_\ell : R \neq \emptyset} \frac{\sum_{j \in \jobset: (j,\ell) \in \jobset^\prime} P^\prime_{j,\ell} \cdot \bone(\sgp(j,\ell) \subseteq R)}{|R|} \Bigr] > \frac{\gamma}{2} \Bigr] \, d\gamma \tag{by \eqref{eq:costatmosttwicedensity}} \\
& \leq \tau + \int_{\tau}^{\infty} \Pr_{P}\Bigl[ \alpha(P) > \frac{\gamma-4}{2\euler^2} \Bigr] \, d\gamma \tag{contrapositive of \Cref{boundedgamma}} \\
& \leq \tau + 2 \cdot \int_{\tau}^{\infty} \euler^{-(\gamma-4)/(2\euler^2)} \, d\gamma \tag{\Cref{alphaconcentration}} \\
& \leq \tau + 4 \euler^2 \cdot \int_{\euler^2\beta}^{\infty} \euler^{-\gamma} \, d\gamma \notag \\
& = \tau + 4 \euler^2 \cdot \euler^{-\euler^2 \beta} \leq \tau + 1 = O(\beta). \notag
\end{flalign}

To summarize, the exceptional cost is at most $\sum_{j \in \jobset} (\E[P_{j,\psi(j),0}] / \s_{\nxt(0;\psi(j))})$ and the truncated cost is $O(\beta)$.  
\end{proof}

\section{Effectiveness of 2-Reservations versus Adaptive Optimum for Related Machines} \label{sec:adaptive}

Consider a set of related machines $M$ with speeds $\{s_i\}_{i\in M}$ and a set of jobs $\jobset$ with independent job-size distributions $\{P_j\}_{j \in \jobset}$.  
An adaptive policy is an algorithm that is allowed to adaptively process the jobs on the machines one by one. 
The processing time of a job is only observed after irrevocably committing to process it on a certain machine. 
Once its processing time has been observed, 
the algorithm can decide which of the remaining jobs to process next, and can assign it to any of the machines.
We say that an adaptive policy $\A$ with respect to $M$ and $\jobset$ is optimal if it minimizes the expected makespan, given by 
\[
\E\Bigl[ \max_{i \in M} \sum_{j \in \jobset} \frac{P_j}{s_i} \cdot \bone(j \mapsto i) \Bigr],
\]
where $\bone(j\mapsto i)$ is the indicator random variable for the event that $\A$ assigns job $j$ to machine $i$, and where the expectation is over the random processing times and all of the (possibly random) decisions that $\mathcal{A}$ makes.
We use $\OPTAD$ to denote the expected makespan of an optimal adaptive policy.
The main result of this section is the following.

\reladapthm*

\Cref{reladap-thm} implies that in instances where adaptive policies perform well with respect to the omniscient optimum $\OPT_m$, $2$-reservations should perform well too. 
For example, consider instances on identical machines. 
It is not hard to see that an adaptive policy given by a list schedule---one that processes the jobs in an arbitrary order, assigning each job to the current least-loaded machine---results in a schedule whose makespan is within twice the optimum for that realization of the job sizes. 
Therefore, up to constant factors, Theorem~\ref{reladap-thm} is a stronger version of Theorem~\ref{thm:iden2options}.  
Other types of instances for which adaptive policies perform well are those with weighted Bernoulli jobs, and arbitrary machine speeds. 
We say that $j \in \jobset$ is a weighted Bernoulli job if for some $w > 0$ and $q \in [0,1]$, $P_j$ takes size $w$ with probability $q$, and size $0$ with remaining probability.

\begin{corollary} 
Given a set $M$ of related machines and a set $\jobset$ of independent weighted Bernoulli jobs, one can efficiently compute a 2-reservation $\sg : \jobset \to 2^M$ such that $\E_{\sg}[\cost(\sigma)] = O(\OPT_m)$.
\end{corollary}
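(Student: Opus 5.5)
The corollary follows from \Cref{reladap-thm} once we show $\OPTAD = O(\OPT_m)$ for independent weighted Bernoulli jobs on related machines. \Cref{reladap-thm} already gives an efficiently computable random $2$-reservation $\sg$ with $\E_{\sg}[\cost(\sg)] = O(\OPTAD)$. So it suffices to exhibit an adaptive policy whose expected makespan is $O(\OPT_m)$.

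First I would reduce to smoothed machines via \Cref{lem:optundersmoothing}. Both the omniscient optimum and adaptive policies transform under the mappings $f,g$ of \Cref{lem:smoothing} with only constant loss, since $f$ and $g$ act on assignments realization by realization and step by step. Alternatively, I would simply argue directly on the original machines.

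The natural adaptive policy processes jobs one at a time. Since a weighted Bernoulli job $j$ has a known nonzero size $w_j$, the only uncertainty is whether it is active. The difficulty is that the adaptive policy must commit to a machine before observing activity, whereas the omniscient optimum sees the set $A$ of active jobs.

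My plan is to process jobs in decreasing order of $w_j$ and run a greedy "earliest finish time" rule: assign job $j$ to the machine minimizing $(\text{current load} + w_j)/s_i$, where current load counts only realized sizes. Inactive jobs contribute $0$ wherever they go. So, per realization, the resulting schedule equals greedy list scheduling, in decreasing weight order, of exactly the active jobs, with the inactive jobs interleaved harmlessly. This holds because the greedy choice for an active job depends only on the loads from previously processed active jobs. Greedy earliest-finish-time list scheduling on related machines is an $O(\log m)$-approximation in general, so I must be more careful.

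To get a constant, I would instead use an LPT-style rule on related machines: process in decreasing $w_j$ and assign to the machine minimizing completion time. Known results show this is an $O(1)$-approximation (e.g.\ LPT on uniform machines has ratio at most $2$). Because the rule ignores inactive jobs, per realization its makespan is at most $2\cdot \min_{\asgn}\mksp(\asgn;P)$. Taking expectations gives $\OPTAD \le 2\,\OPT_m$.

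The main obstacle is verifying precisely that the adaptive execution coincides with LPT on the active jobs. The key observation is that the order is fixed in advance by $w_j$, independent of the realizations. Each decision also uses only realized loads, which are exactly the loads LPT would have after processing the earlier active jobs. I would also need to quote the constant-factor guarantee for LPT on uniform machines as a known result, since it is not proved in the paper. Combining these pieces with \Cref{reladap-thm} yields $\E_{\sg}[\cost(\sg)] = O(\OPTAD) = O(\OPT_m)$.
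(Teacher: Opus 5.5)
Your proposal is correct and matches the paper's proof. Both invoke \Cref{reladap-thm} and then show $\OPTAD \le 2\,\OPT_m$ using the adaptive policy that processes jobs in decreasing order of $w_j$ and assigns each job to the machine on which it would finish earliest if it realized size $w_j$; for every realization this coincides with the deterministic LPT list schedule on the active jobs, and is therefore a $2$-approximation to that realization's optimum.

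One small clean-up: your first rule (decreasing $w_j$ with earliest finish time) is already exactly this LPT rule. The $O(\log m)$ bound you cite applies to greedy scheduling in an arbitrary job order, so no switch of rules is needed. The smoothing step is also unnecessary, since \Cref{reladap-thm} handles smoothing internally.
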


\begin{proof}
By Theorem~\ref{reladap-thm}, it suffices to show that $\OPTAD = O(\OPT_m)$. 
For each $j\in\jobset$, suppose that $P_j$ takes a positive size $w_j$ with probability $q_j$. 
Assume without loss of generality that the jobs are ordered so that $w_1\geq w_2\geq\hdots\geq w_n$, where $n=|\jobset|$. 
Consider the adaptive policy that processes the jobs in this order, assigning the current job $j$ to the machine $i$ that would finish processing $j$ the fastest, should $P_j$ instantiate to size $w_j$, i.e., 
\[
i=\operatorname*{argmin}_{i^\prime \in M}
\left(
\sum_{j^\prime < j}
\frac{p_{j^\prime}}{s_{i^\prime}}
\cdot \bone(j^\prime \mapsto i^\prime)
+
\frac{w_j}{s_{i^\prime}}
\right).
\]
It is a well-known fact that when job sizes are deterministic, that is, $q_j=1$ for all $j\in\jobset$, 
the above list schedule has a makespan that is at most twice the corresponding optimum. 
Note that for any realization $\{p_j\}_{j\in\jobset}$ of the processing times, jobs with positive $p_j$ are processed according to a deterministic list schedule by the adaptive policy. 
Therefore, conditioned on $\{p_j\}_{j\in\jobset}$, the resulting schedule is a 2-approximation for that particular realization.
It follows that the expected makespan of the adaptive policy is at most twice the omniscient optimum.  
\end{proof}

Our proof of \Cref{reladap-thm} is based on exploiting some structural results on near-optimal adaptive policies that Eberle et al. established in \cite{EberleGMMZ25}. 
We would like to reuse their results in a black-box fashion; however, their notion of \smoothing the machine speeds is slightly different. 
Thus, we have to show two things: (i)~the adaptive optimum does not increase by a super-constant factor under our \smoothing operation; and (ii)~our \smoothed machines satisfy the \smoothing properties that their results require. 
The second point can be checked by inspecting their machine \smoothing procedure (see Algorithm~6 in \cite{EberleGMMZ25}). 
The following lemma addresses the first point. 

\begin{lemma} \label{lem:adaptsmoothing} 
Suppose we are given an instance $\I = (M, (P_j)_{j \in \jobset})$ of stochastic load balancing on a set $M$ of related machines. 
We can efficiently compute a set of \smoothed machines $M^\prime$ with $|M^\prime| = O(|M|^2)$ such that $\OPTAD(P,M^\prime) = \Theta(\OPTAD(P,M))$. 
\end{lemma}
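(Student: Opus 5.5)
We take $M^\prime$ to be the \smoothed collection produced by \Cref{lem:smoothing}, together with the universal maps $f : M \to M^\prime$ and $g : M^\prime \to M$. The key observation is that an adaptive policy on one machine set can be simulated by an adaptive policy on the other: we run the same decision process and push every machine choice through the appropriate map. Because $f$ and $g$ are fixed in advance and do not depend on job sizes, this composition is itself a legal adaptive policy. It assigns each job irrevocably at the time it is processed and sees exactly the same information as the original policy.

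For the upper bound $\OPTAD(P,M^\prime) \le 12\cdot\OPTAD(P,M)$, fix an optimal (possibly randomized) adaptive policy $\A$ on $M$. Define $\A^\prime$ on $M^\prime$ as follows: whenever $\A$ would place the current job $j$ on machine $i$, $\A^\prime$ places it on $f(i)$. The makespan depends only on the assignment and the realized sizes, not on which machine is named, so the decision tree of $\A$ is unaffected. Consequently $\A^\prime$ observes the same realizations and makes the same choices in the same order. Along every sample path the final assignment of $\A^\prime$ is $f\circ\asgn$, where $\asgn$ is the assignment produced by $\A$. \Cref{lem:smoothing}\ref{prop:smoothinglossforward} then gives $\mksp(f\circ\asgn;p)\le 12\,\mksp(\asgn;p)$ pointwise. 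Taking expectations proves this direction.

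For the reverse inequality $\OPTAD(P,M)\le\OPTAD(P,M^\prime)$, we argue symmetrically. Take an optimal adaptive policy on $M^\prime$ and compose each of its choices with $g$. Pathwise, \Cref{lem:smoothing}\ref{prop:smoothinglossreverse} (equivalently \Cref{lem:splittingislossy} with $q=\infty$) gives $\mksp(g\circ\asgn^\prime;p)\le\mksp(\asgn^\prime;p)$. Efficiency of the construction and the bound $|M^\prime|=O(|M|^2)$ are inherited from \Cref{lem:smoothing}.

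The only step that needs care is the \emph{pathwise} nature of the bounds in \Cref{lem:smoothing}. They must hold for every assignment $\asgn$ and every size vector $p$, with $f$ and $g$ independent of both, and this is exactly what the lemma asserts. A secondary check is that the discarding step (removing slow machines and mapping them to a fastest machine) and the final discarding of leftover classes are already captured in $f$ and $g$. The maps remain well defined on discarded machines because $f$ sends them to surviving machines, and $g$ is only ever applied to machines that survive in $M^\prime$. With these points in place the argument is a direct simulation, and no probabilistic analysis is required.
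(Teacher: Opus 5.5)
Your proposal is correct and matches the paper's proof. Both simulate an adaptive policy on one machine set by composing each of its machine choices with the universal map $f$ (resp.\ $g$) from \Cref{lem:smoothing}, and then take expectations over the decision tree of the pathwise bounds $\mksp(f\circ\asgn;p)\le 12\,\mksp(\asgn;p)$ and $\mksp(g\circ\asgn;p)\le\mksp(\asgn;p)$. As the paper does, you can apply the simulation to an arbitrary policy and pass to the infimum, which avoids assuming that an optimal adaptive policy exists.
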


\begin{proof}
The proof heavily uses \Cref{lem:smoothing}. 
Construct the \smoothed{} collection $M^\prime$ according to \Cref{lem:smoothing}, and let $f : M \to M^\prime$ and $g : M^\prime \to M$ be the universal mappings with the claimed guarantees. 
Given any adaptive policy $\phi$ for the original instance, we can simulate an adaptive policy $\phi^\prime$ for the \smoothed instance as follows. 
Whenever policy $\phi$ decides to process job $j$ on machine $i \in M$, the policy $\phi^\prime$ processes this same job on machine $f(i)$. 
We look at the transcript once $\phi$ has processed all the jobs. 
Let $\asgn : \jobset \to M$ be the assignment that was made and let $\{p_j\}_{j \in \jobset}$ be the realized job sizes. 
By \Cref{lem:smoothing}~\ref{prop:smoothinglossforward}, we get that $\mksp(f \circ \asgn;p) \leq 12 \cdot \mksp(\asgn;p)$. 
Taking an expectation over the entire decision tree of $\phi$, we get that the expected makespan of the policy $\phi^\prime$ is at most $12$ times the expected makespan of $\phi$. 

The argument for the other direction is completely analogous by using the mapping $g$ and the reverse approximation loss given by  \Cref{lem:smoothing}~\ref{prop:smoothinglossreverse}. 

As the above argument works for any adaptive policy, we get that the adaptive optimum does not change by more than a constant factor under the \smoothing operation.  
\end{proof}

Both \Cref{lem:optundersmoothing,lem:adaptsmoothing} have the same underlying \smoothing procedure (given by \Cref{lem:smoothing}). 
So, up to constant factors, it suffices to establish Theorem~\ref{reladap-thm} under the assumption that ${M = C_1 \cup \ldots \cup C_L}$ is \smoothed and has $L$ geometric speed classes for some $L = O(\log |M|)$. 

\smallskip

We now formally state the structural result of Eberle et al. in the context of our problem. 

\begin{theorem}[Theorem~7 in \cite{EberleGMMZ25}]\label{thm:eberlestruct}
There exists an adaptive policy with expected makespan and total expected exceptional load at most $2 \cdot \OPTAD$ with respect to any truncation threshold $\tau \geq 2 \cdot \OPTAD$. 
Further, if job $j$ is assigned to machine $i$ by this policy then $\E[P_j] \leq \tau \cdot s_i$.
\end{theorem}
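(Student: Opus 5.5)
The plan is to build the required policy from an optimal adaptive policy $\A$ by a \emph{restart} construction. Whenever $\A$ hits a costly event, we discard its continuation and run a fresh optimal policy on the remaining jobs. We then check that a restart happens with probability at most $1/2$, so that a geometric recursion yields exactly the factor $2$. Throughout, if job $j$ is run on machine $i$, write $X_j \coloneqq P_j/s_i$ for its load contribution and $Y_j \coloneqq X_j\cdot\bone(X_j\ge\tau)$ for its exceptional load. Two facts are used repeatedly. First, by independence of the jobs, the optimal adaptive value of any subset of the jobs is at most $\OPTAD$: simulate the full policy with the missing jobs drawn fictitiously. Second, every single term $X_j$ is bounded by the makespan.

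\textbf{Construction.} Fix $\tau\ge 2\OPTAD$. For a job set $J$, let $\A_J$ be an optimal adaptive policy for $J$, so its expected makespan is at most $\OPTAD$. The policy $\mathcal{B}_J$ runs $\A_J$ until the first \emph{stopping time} $T$, which is the first moment at which either
\begin{enumerate}[(a)]
\item a job just processed realizes $X_j\ge\tau$, or
\item $\A_J$ is about to place some $j$ on some $i$ with $\E[P_j]>\tau s_i$; in this case the placement is not made.
\end{enumerate}
At time $T$, $\mathcal{B}_J$ restarts with $\mathcal{B}_{J'}$ on the set $J'$ of unprocessed jobs, using fresh machine loads. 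Actual loads are the sum of the phase loads. By construction, event (b) never results in an assignment, so every assignment of $\mathcal{B}$ satisfies $\E[P_j]\le\tau s_i$. This is the last claim of the theorem.

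\textbf{Restart probability.} I would show $\Pr[T<\infty]\le \OPTAD/\tau\le 1/2$. On event (a), $\A_J$'s makespan is at least $X_j\ge\tau$. On event (b), the history is independent of $P_j$, and $\A_J$ does place $j$ on $i$. Hence, conditioned on that history, $\A_J$'s expected makespan is at least $\E[P_j]/s_i>\tau$. Combining the two cases, $\E[\mksp(\A_J)]\ge\tau\Pr[T<\infty]$. The same bound also shows the recursion makes progress. A restart at the very first step would be event (b) with probability $1$, forcing $\E[\mksp(\A_J)]>\tau\ge 2\OPTAD$, which is impossible. So each restart occurs after at least one job is processed, and $|J'|<|J|$, so the recursion has finite depth.

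\textbf{Geometric recursion.} Let $m^*$ be the supremum, over subsets $J$, of the expected makespan of $\mathcal{B}_J$, and let $e^*$ be the analogous supremum of the expected total exceptional load; both are finite by the finite depth. In the first phase, the loads are those of $\A_J$ truncated at $T$. Its phase makespan $M_1$ is therefore at most $\A_J$'s makespan, and $Y$ is nonzero in this phase for at most one job, namely the job of event (a). That job satisfies $Y_{j}\le M_1$. Using subadditivity of the maximum across phases, and applying the strong Markov property at $T$ (which works because jobs are independent), I get
\[
m^*\le \OPTAD+\tfrac12 m^*,\qquad e^*\le \OPTAD+\tfrac12 e^*.
\]
For the second inequality I bound $\E[Y_{j}\bone(\text{event (a)})]\le\E[M_1]\le\OPTAD$. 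Both inequalities give the bound $2\OPTAD$.

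The step I expect to be the main obstacle is making the conditioning in event (b) rigorous. It needs $\A_J$'s conditional expected makespan given the history up to $T$ to be at least $\E[P_j]/s_i$, while at the same time this expected overshoot is charged only once, to $\Pr[T<\infty]$. The clean way I would try first is to write $\E[\mksp(\A_J)]\ge\E[\tau\bone(T<\infty)]$ directly. For event (a) this holds pointwise. For event (b), I would use the tower property at the stopping time, where $P_j$ is still unrevealed and hence independent of the history.
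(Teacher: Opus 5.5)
The paper does not prove this statement. It quotes it from Eberle et al.\ and uses it only as a black box, to get feasibility of $\ADAPTLP(\tau)$, so there is no in-paper argument to compare against. Your restart construction is a correct, self-contained proof. The key estimate $\E[\mksp(\A_J)]\ge\tau\Pr[T<\infty]$ holds pointwise on event (a). On event (b) it follows by conditioning on the history, in which $P_j$ is still unrevealed and hence independent. Each phase contributes at most one exceptional job, and that job's load is at most the phase's makespan. Restarting on the unprocessed jobs, which keep their original independent distributions, gives $m^*\le\OPTAD+m^*/2$ and $e^*\le\OPTAD+e^*/2$.

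A few points to tighten:
\begin{itemize}
\item Your progress argument says a first-step restart ``would be event (b) with probability $1$''. That presumes $\A_J$ is deterministic. A randomized optimal policy might make a blocked first move only with some probability, and then $J'=J$. Either fix a deterministic optimal policy, or drop the finite-depth and supremum bookkeeping altogether. Iterating the conditional bound, phase $r$ is reached with probability at most $2^{-(r-1)}$. Conditional on being reached, it contributes expected makespan and expected exceptional load at most $\OPTAD$. Summing the geometric series gives $2\OPTAD$ directly.
\item You tacitly assume an optimal adaptive policy exists. With only $\varepsilon$-optimal policies you get $2\OPTAD+O(\varepsilon)$, which is harmless for the paper's $O(1)$ application.
\item The statement can be read as asking for a single policy that serves every $\tau\ge 2\OPTAD$ at once. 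In that case, run your construction with $\tau_0=2\OPTAD$. Since $X\bone(X\ge\tau)$ is nonincreasing in $\tau$, and $\E[P_j]\le\tau_0 s_i\le\tau s_i$, that one policy works for all larger $\tau$.
\end{itemize}
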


In the above theorem, the total expected exceptional load refers to the following quantity that depends on the truncation threshold $\tau$: 
\[
\E\Bigl[ \sum_{i \in M} \sum_{j \in \jobset} \frac{P_j}{s_i} \cdot \bone(P_j \geq \tau s_i) \cdot \bone(j \mapsto i) \Bigr].
\]
A nice consequence of \Cref{thm:eberlestruct} is that it gives an $O(1)$-approximate LP-based lower bound on $\OPTAD$. 
The LP, given by \eqref{eq:adapttrunc}--\eqref{eq:adaptnonneg}, 
has variables $y_{ij}$ that model the probability that job $j$ is processed on machine $i$ in a near-optimal adaptive policy (e.g., the policy in \Cref{thm:eberlestruct}). 
We refer to this LP as $\ADAPTLP(\tau)$, where $\tau$ is a threshold parameter. 

\begin{align} 
\sum_{j \in \jobset} \E\bigl[P_j \cdot \bone(P_j < \tau s_i) \bigr] \cdot y_{ij} & \, \, \leq \, \, \tau s_i && \forall i \in M \label{eq:adapttrunc} \\
\sum_{i \in M, j \in \jobset} \frac{\E\bigl[ P_j \cdot \bone(P_j \geq \tau s_i) \bigr]}{s_i} \cdot y_{ij} & \, \, \leq \, \, \tau && \label{eq:adaptexcep} \\
\sum_{i \in M} y_{ij} & \, \, = \, \, 1 && \forall j \in \jobset \label{eq:adaptassign} \\ 
y_{ij} & \, \, = \, \, 0 && \forall i,j : \, \E[P_j] > \tau \cdot s_i \label{eq:adaptmassive} \\
y & \, \, \geq \, \, 0 && \label{eq:adaptnonneg}
\end{align}

\begin{lemma}[Lemma~8 in \cite{EberleGMMZ25}] \label{lem:adaptlpfeasibility}
$\ADAPTLP(\tau)$ has a feasible solution for any $\tau \geq 2 \cdot \OPTAD$. 
\end{lemma}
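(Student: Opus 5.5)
We take $\tau \ge 2\cdot\OPTAD$, apply Theorem~\ref{thm:eberlestruct}, and read off a fractional solution from the near-optimal policy it provides. Let $\A$ be that policy. Set $y_{ij} \coloneqq \Pr[\A \text{ assigns } j \text{ to } i]$. We then check the constraints of $\ADAPTLP(\tau)$ in turn.

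The assignment constraint \eqref{eq:adaptassign} holds because every job is processed by exactly one machine. Nonnegativity is immediate. Constraint \eqref{eq:adaptmassive} holds because the theorem guarantees that $\A$ never assigns $j$ to $i$ when $\E[P_j] > \tau s_i$, so $y_{ij}=0$ there.

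The key observation is that the decision to assign $j$ to $i$ is made before $P_j$ is observed. Hence the event $\{j\mapsto i\}$ is independent of $P_j$: it is determined by the policy's randomness and by the realizations of previously processed jobs, and jobs are independent. Consequently, for any function $h$,
\[
\E[h(P_j)\,\bone(j\mapsto i)] = \E[h(P_j)]\, y_{ij}.
\]

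With this, the left side of \eqref{eq:adaptexcep} equals the total expected exceptional load of $\A$ at threshold $\tau$. The theorem bounds this by $2\cdot\OPTAD \le \tau$.

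For \eqref{eq:adapttrunc}, fix a machine $i$. The left side equals
\[
s_i\,\E\Bigl[\sum_j \frac{P_j}{s_i}\,\bone(P_j<\tau s_i)\,\bone(j\mapsto i)\Bigr],
\]
which is at most $s_i$ times the expected load of machine $i$. That expected load is at most the expected makespan of $\A$, which is at most $2\OPTAD \le \tau$. So the left side is at most $\tau s_i$.

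The main point requiring care is justifying the independence of $\{j\mapsto i\}$ from $P_j$ in a general adaptive decision tree, including possibly random decisions. We handle it by conditioning on the history at the moment job $j$ is committed: $P_j$ is independent of that history. If $j$ is never explicitly processed, this cannot happen, since a policy processes all jobs. Everything else is a direct transcription of the guarantees in Theorem~\ref{thm:eberlestruct}.
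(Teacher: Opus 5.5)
Your proof is correct. It is essentially the argument behind the cited Lemma~8 of \cite{EberleGMMZ25}, which this paper does not prove and only alludes to by describing $y_{ij}$ as the probability that the policy of Theorem~\ref{thm:eberlestruct} processes job $j$ on machine $i$. The one non-trivial point is that $\{j\mapsto i\}$ is independent of $P_j$, because $j$'s machine is fixed before $P_j$ is revealed; this is what makes the expectations factor, and you justify it properly.
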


We are now ready to prove the main result in this section. 

\begin{proof}[\Cref{reladap-thm}]
Using binary search, we may assume that we have a $\tau$ such that $\ADAPTLP(\tau)$ is feasible, whereas $\ADAPTLP(\tau/1.01)$ is not feasible. 
By \Cref{lem:adaptlpfeasibility}, $\tau \leq 2.02 \cdot \OPTAD$. 
We scale the $P_j$s by $\tau$ so that $\OPTAD = \Omega(1)$ and $\ADAPTLP(1)$ is feasible. 
It remains to compute a $2$-reservation whose expected cost is $O(1)$. 

Fix a feasible solution $\by$ to $\ADAPTLP(1)$. 
We now write another linear program by consolidating all $y_{ij}$ variables within each speed class $C_\ell$, i.e., for each job $j$ and class index $\ell$, we take $z_{\ell,j} = \sum_{i \in C_\ell} y_{ij}$. 
The volume constraint~\eqref{eq:adapttrunc} can be easily consolidated within any single class $C_\ell$ as $s_i = \s_\ell$ for all $i \in C_\ell$. 
We discard constraint~\eqref{eq:adaptmassive} as it is not needed for our purposes. 
We refer to the linear program given by \eqref{eq:adapttruncclass}--\eqref{eq:adaptnonnegclass} as $\ADAPTCLP$. 
 
\begin{align} 
\sum_{j \in \jobset} \E[P_j \cdot \bone(P_j < \s_\ell)] \cdot z_{\ell,j} & \, \, \leq \, \, |C_\ell| \cdot \s_\ell & \forall \ell \in [L] \label{eq:adapttruncclass} \\
\sum_{\ell \in [L], j \in \jobset} \bigl( \E\bigl[ P_j \cdot \bone(P_j \geq \s_\ell) \bigr] / \s_\ell \bigr) \cdot z_{\ell,j} & \, \, \leq \, \, 1 & \label{eq:adaptexcepclass} \\
\sum_{\ell \in [L]} z_{\ell,j} & \, \, = \, \, 1 & \forall j \in \jobset \label{eq:adaptassignclass} \\ 
z & \, \, \geq \, \, 0 & \label{eq:adaptnonnegclass}
\end{align}

Define $\bz \in [0,1]^{[L] \times \jobset}$ as follows: for each job $j$ and index $\ell$, let
\[
\bz_{\ell,j} \coloneqq \sum_{i \in C_\ell} \by_{ij}.
\]
By construction, $\bz$ is feasible to $\ADAPTCLP$.
We round $\bz$ directly using Theorem~\ref{thm:iterativerounding}.
Let $\U \coloneqq [L] \times \jobset$ and let $\Q \coloneqq \{ x \in \Rp^{\U} : \sum_{\ell \in [L]} x_{\ell,j} = 1 \, \forall j \in \jobset \}$.
The polytope $\Q$ is the base polytope of a partition matroid whose
parts are $\{(\ell,j):\ell\in[L]\}$ for $j\in\jobset$.
Its integral points are in one-to-one correspondence with integral
$1$-class reservations. 
For every $r,\ell \in [L]$ and $j \in \jobset$, define
\[
A_{r,\ell,j}
\coloneqq
\begin{cases}
\displaystyle
\frac{\E[P_j \cdot \bone(P_j < \s_\ell)]}
{|C_\ell| \cdot \s_\ell},
& \text{if } r=\ell,\\[8pt]
0,
& \text{otherwise}.
\end{cases}
\]
Also define
\[
c_{\ell,j} \coloneqq \frac{ \E[P_j \cdot \bone(P_j \geq \s_\ell)]}{\s_\ell}.
\]
Consider the multi-budgeted matroid LP given by: $
\min\{ c^\top x : Ax \leq \mathbf{1}, \ x \in \Q\}$. 
The point $\bz$ is feasible for this LP. Indeed,
constraint~\eqref{eq:adapttruncclass} implies that, for every
$r\in[L]$,
\[
(A\bz)_r
=
\frac{
\sum_{j \in \jobset}
\E[P_j \cdot \bone(P_j < \s_r)] \cdot \bz_{r,j}
}{
|C_r| \cdot \s_r
}
\leq 1,
\]
and constraint~\eqref{eq:adaptexcepclass} implies that
\[
c^\top \bz
=
\sum_{\ell \in [L],\,j \in \jobset}
\frac{\E[P_j \cdot \bone(P_j \geq \s_\ell)]}
{\s_\ell}
\cdot \bz_{\ell,j}
\leq 1.
\]
Moreover, every column of $A$ has sum at most $1$. In fact, for every
$(\ell,j)\in\U$,
\[
\sum_{r\in[L]} A_{r,\ell,j}
=
\frac{\E[P_j \cdot \bone(P_j < \s_\ell)]}
{|C_\ell| \cdot \s_\ell}
\leq
\frac{1}{|C_\ell|}
\leq 1.
\]
We can therefore invoke Theorem~\ref{thm:iterativerounding} with $\viol = 1$. 
It returns an integral point $\hat{z} \in \Q$ satisfying $c^\top \hat{z} \leq c^\top \bz \leq 1$ and $A\hat{z} \leq 2 \cdot \mathbf{1}$. 

Let $\psi : \jobset \to [L]$ be the integral $1$-class reservation
corresponding to $\hat{z}$; that is, $\psi(j) = \ell$ precisely when $\hat{z}_{\ell,j} = 1$. 
We then have
\[
\sum_{j \in \jobset}
\frac{\E[P_j \cdot \bone(P_j \geq \s_{\psi(j)})]}
{\s_{\psi(j)}}
=
c^\top \hat{z}
\leq 1,
\]
and, for every $\ell\in[L]$,
\[
\sum_{j \in \jobset:\,\psi(j)=\ell}
\E[P_j \cdot \bone(P_j < \s_\ell)]
\leq
2 \cdot |C_\ell| \cdot \s_\ell.
\]

Finally, we apply Lemma~\ref{lem:twochoicesperclass} to $\psi$ with
$\beta=2$ to obtain a random $2$-reservation $\sg$ satisfying
\[
\E_{\sg}[\cost(\sg)]
\leq
O(\beta)
+
\sum_{j \in \jobset}
\frac{\E[P_j \cdot \bone(P_j \geq \s_{\psi(j)})]}
{\s_{\psi(j)}}
=
O(1).
\tag*{\qedhere}
\]
\end{proof}

We conclude this section by showing that in some instances $2$-reservations can have significantly superior performance guarantees relative to the adaptive optimum. 

\begin{lemma} \label{lem:adaptvstworeslb}
For any $m \geq 1$, there is an instance of stochastic load balancing on $m+1$ related machines for which $\OPTAD/\OPT_2 = \Omega(\sqrt{m})$.
\end{lemma}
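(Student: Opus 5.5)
We would prove \Cref{lem:adaptvstworeslb} using the instance from \Cref{sec:overview}. There is one fast machine $i_0$ of speed $\sqrt{m}$ and $m$ unit-speed machines $1,\dots,m$. There are $n \gg m$ i.i.d.\ jobs, say with $n$ a multiple of $m$ and $n \geq m^2$. Each job has size $P_j = m/n + \sqrt{m}\cdot B_j$, where the $B_j \sim \Ber(1/n)$ are independent. Let $T \coloneqq \sum_j B_j$ be the number of ``large'' jobs. Then $T \sim \mathrm{Binomial}(n,1/n)$, so $\E[T] = 1$ and $\E[T^2] \leq 2$. The proof has two parts: an $O(1)$ upper bound on $\OPT_2$, and an $\Omega(\sqrt{m})$ lower bound on $\OPTAD$.

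For the upper bound we exhibit an explicit deterministic $2$-reservation $\sg$. Partition the jobs into $m$ groups of $n/m$ jobs each, and let $\sg(j) = \{i_0, i(j)\}$, where $i(j)$ is the unit machine of $j$'s group. In the second stage, send every job with $B_j = 1$ to $i_0$ and every other job to $i(j)$.
\begin{itemize}
\item Each unit machine receives at most $(n/m)\cdot(m/n) = 1$ load.
\item The fast machine receives load at most $T(m/n + \sqrt{m})/\sqrt{m} \leq 2T$.
\end{itemize}
Hence $\cost(\sg) \leq 1 + 2\E[T] = 3$, so $\OPT_2 = O(1)$.

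For the lower bound, fix any adaptive policy and let $a$ be the (random) number of jobs it commits to $i_0$. The key point is that $B_j$ is independent of the decision where to place $j$, since the size is revealed only after commitment. We split into two cases according to $\E[a]$.

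\emph{Case $\E[a] \geq n/2$.} The deterministic parts alone give the fast machine load at least $a\cdot (m/n)/\sqrt{m} = a\sqrt{m}/n$. Taking expectations, the expected makespan is at least $\sqrt{m}/2$.

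\emph{Case $\E[a] < n/2$.} Let $X$ be the number of large jobs placed on unit machines. Whenever $X \geq 1$, the makespan is at least $\sqrt{m}$, because a unit machine then holds a job of size at least $\sqrt{m}$.
\begin{itemize}
\item By the independence above and a Wald-type argument over the sequence of decisions, $\E[X] = \E[n-a]/n > 1/2$.
\item Since $X \leq T$, we have $\E[X^2] \leq \E[T^2] \leq 2$.
\item Paley--Zygmund (second-moment) then gives $\Pr[X \geq 1] \geq \E[X]^2/\E[X^2] \geq 1/8$.
\end{itemize}
So the expected makespan is at least $\sqrt{m}/8$.

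In both cases $\OPTAD = \Omega(\sqrt{m})$, and combined with $\OPT_2 = O(1)$ this gives the claim. The step needing most care is the identity $\E[X] = \E[n-a]/n$ in the second case. We would justify it by writing $X = \sum_j B_j\,\bone(j \not\mapsto i_0)$ and noting that the event $\{j \not\mapsto i_0\}$ is determined before $B_j$ is revealed, so the two are independent.
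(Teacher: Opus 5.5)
Your proof is correct and takes the paper's route: the same instance, the same $2$-reservation (the fast machine plus one slow machine per block of $n/m$ jobs), and the same two-case lower bound on $\OPTAD$. Your version makes the paper's informal steps precise: you split on $\E[a]$, use the independence of $B_j$ from $j$'s placement to get $\E[X] > 1/2$, and apply Paley--Zygmund for the constant-probability claim in the second case.
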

\begin{proof}
We describe an instance with two speed classes and $n \gg m$ independent and identically distributed jobs for which $\OPTAD = \Omega(\sqrt{m})$ and $\OPT_2 = O(1)$. 

We have one fast machine with speed $\sqrt{m}$ and $m$ slow machines with speed $1$.
For each job $j \in [n]$, its size $P_j$ is distributed as $m/n  + \sqrt{m} \cdot \Ber(1/n)$. 
Since any adaptive policy must risk processing at least $n/2$ jobs on the fast machine or (cumulatively over) the slow machines, $\OPTAD = \Omega(\sqrt{m})$ holds.   
This is because, in the former case, the total volume arising from the tiny deterministic portions $m/n$ causes the fast machine's load to exceed $\sqrt{m}/2$.
In the latter case, with constant probability, one of the $P_j$s assigned to a slow machine realizes a size of $m/n + \sqrt{m}$. 
This too leads to a load of $\sqrt{m}$ on the slow machine, and so $\OPTAD = \Omega(\sqrt{m})$. 

We claim that the following $2$-reservation $\sg$ has $O(1)$ cost: for each job $j \in [n]$, we always reserve the fast machine and the $\ceil{mj/n}$th slow machine for $j$. 
We process each $P_j$ on the fast machine if and only if it realizes a large size. 
Each slow machine appears in the reservation for roughly $n/m$ jobs, so even if all of them realize a small size, the total volume is $\approx 1$.
Thus, the load on every slow machine is always bounded by $1$. 
Since the expected number of large job-size realizations is $1$, we get that the expected load on the fast machine is also $O(1)$. 
Therefore, $\OPT_2 = O(1)$.  
\end{proof}

\section{Conclusions} \label{conclusions}

Our work opens up the study of reservation-based algorithms for stochastic load balancing as an exciting research direction, with many interesting avenues for exploration. 
We mention just a few questions here, and hope that our work will stimulate further work on this topic. 
Our results show that for identical and related machines, even $2$-reservations can offer a surprising amount of power. 
Obtaining a true $O(1)$-approximation for \slbr on related machines is still open.
We leave the setting of unrelated machines as a concrete open question for future work.
An intermediate setting is that of restricted machines, wherein each job can only be processed on some given subset of machines, but its processing time is the same on all machines in this set. 
It would be interesting to see if one can obtain guarantees here similar to those for identical machines. 
In particular, can a $2$-reservation yield a makespan that is within an $O(1)$ factor of
$\OPT_m$, or (if not) the adaptive optimum? 

Another question is whether our result for identical machines can be derandomized: can we efficiently compute a (deterministic) $2$-reservation with cost $O(\OPT_m)$?

It would also be interesting to incorporate the concept of $k$-reservations into other stochastic-load-balancing models, where varying $k$ would again allow one to interpolate between different solution concepts.  
For instance, while \slbr is a 2-stage model with offline stochastic jobs, one can also consider the setting where jobs are assigned adaptively with partial job-realization information at hand, and one needs to be \consistent with a $k$-reservation (chosen beforehand). 
This would interpolate between non-adaptive policies ($k=1$) and adaptive policies ($k=m$).

\paragraph{Acknowledgements}
David thanks Sepehr Assadi, and Sharat thanks Vipin Ravindran Vijayalakshmi for helpful discussions.

\printbibliography

\appendix

\section{Integrality gap of our linear relaxation for \boldmath \slbr} \label{append:examples} 

In this section, we show that the linear relaxation for \slbr discussed in \Cref{subsec:lpoptk} (see \eqref{eq:ellexceptionalvolume}) has an integrality gap of $\Theta(\log m/\log \log m)$.
Thus, improving upon the approximation guarantee in \Cref{rellog-thm} requires a stronger lower bound.  

The instance showcasing this gap is a slight modification of the \slbr instance given in \Cref{table:porlb}. 
The instance has $m+o(m)$ related machines that can be grouped in $k$ speed classes and there are $n \gg m$ i.i.d. jobs.   
In what follows, $k = o(\log m/\log \log m)$ and $\scaleparm = \Theta(\log m/\log \log m)$.

\begin{table}[ht]
\scriptsize 
\centering
\begin{equation*}
{\renewcommand{\arraystretch}{1.5}
\begin{array}{|c|c|c|c|c|c|c|c|c|} 
\hline
\text{(class index)} \, \ell & 1 & 2 & \ldots & \ell & \ldots & k-1 & k \\
\hline
\text{(\# class-} \ell 
\text{ machines)} \, |C_\ell| & 1 & \scaleparm^4 & \ldots & \scaleparm^{4\ell-4} & \ldots & \scaleparm^{4k-8} & m \\
\hline
\text{(speed of class } \ell) \, \s_\ell & \scaleparm^{2k-2} & \scaleparm^{2k-4} & \ldots & \scaleparm^{2(k-\ell)} & \ldots & \scaleparm^2 & 1 \\
\hline
\text{(processing power in class } \ell) \, |C_\ell| \s_\ell & \scaleparm^{2k-2} & \scaleparm^{2k} & \ldots & \scaleparm^{2k+2\ell-4} & \ldots & \scaleparm^{4k-6} & m \\
\hline 
\E[N_\ell] = \sum_{j \in \jobset} \Pr[P_j = \s_\ell] = n \Pr[X = \s_\ell] & 1 & \scaleparm^{4} & \ldots & \scaleparm^{4\ell-4} & \ldots & \scaleparm^{4k-8} & m \\
\hline 
\end{array}}
\end{equation*}
\caption{Integrality gap example for our LP relaxation for \slbr}
\label{table:badlpgap}
\end{table}

We give a rough sketch of our integrality gap claim by assuming that $k$ and $\scaleparm$ have been chosen carefully to satisfy $\scaleparm^{4k} = o(m^{0.01})$. 
The LP lower bound for the above instance is $O(1)$, which is realized by reserving all classes for all jobs.  
Next, we argue that $\OPT_k = \Omega(\log m/\log \log m)$.
Fix an optimal $k$-reservation $\mset$. 
We consider three cases.
\begin{enumerate}[(a)]
\item Let $J_0$ denote the subset of jobs that lack a reservation for class $C_k$. 
If $|J_0| = \Theta(n)$, then the cost of $\mset$ is clearly $\Omega(m^{0.99})$. 
This is because the total processing power in the first $k-1$ speed classes is $o(m^{0.01})$, while the expected total volume of jobs that realize a size of exactly $1$ is $\Theta(m)$. 

\item Let $J_2$ denote the subset of jobs that have at least two machines reserved from class $C_k$. 
If $|J_2| = \Theta(n)$, then the cost of $\mset$ is $\omega(\log m/\log \log m)$, as every job in $J_2$ is devoid of a reservation for at least one of the first $k-1$ speed classes. 
In more detail, consider the \slbr sub-instance $\I^\prime = (J_2, C_1 \cup \ldots \cup C_{k-1}, k-2)$. 
Note that $\I^\prime$ is similar to the instance in \Cref{table:porlb} with $\scaleparm$ replaced by $\scaleparm^2$. 
The appropriate restriction $\mset^\prime$ of $\mset$ w.r.t. $\I^\prime$ is a $(k-2)$-reservation. 
By mimicking the proof of \Cref{relporlb-thm}, we can show that $\cost_{\I^\prime}(\mset^\prime) = \Omega(\scaleparm^2/(k-1)) = \omega(\log m/\log \log m)$. 
Note that the two reservations in class $k$ (for jobs in $J_2$) are not of much help since even a single job of size at least $\scaleparm^2$ being processed on a class $k$ machine causes the makespan to be at least $\scaleparm^2$. 

\item Given the above two cases, we may assume that every job has exactly one reservation from class $k$ under $\mset$ and that these reservations are spread out uniformly over $C_k$.  
This only leads to a constant-factor loss in estimating $\cost(\mset)$. 
We now do a more careful version of the balls-and-bins analysis. 
Let $J^\prime_i$ denote the set of jobs with a reservation for machine $i \in C_k$. 
Note that the $J^\prime_i$s partition the job-set. 
Let $X_i := \sum_{j \in J^\prime_i} P_j \cdot \bone(P_j = 1)$. 
By our assumption, $X_i$ behaves like a Poisson variable with mean $1$ for sufficiently large $n$. 
With constant probability, one of the $X_i$s is $\Omega(\log m/\log \log m)$. 
If this entire volume of size-$1$ jobs is handled by the reserved machine $i$, then we get that the makespan is $\Omega(\log m/\log \log m)$. 
Suppose this is not the case, i.e., each class-$k$ machine only handles a load of up to $\tau = o(\log m/\log \log m)$. 
Since $\Pr[X_i \geq \tau] \approx 1/\tau!$, we get that $\sum_{i \in C_k} \E[\max(0,X_i-\tau)] =  \Omega(m/\tau!) = \Omega(m^{0.99})$. 
Since the first $k-1$ speed classes have a total processing power of $o(m^{0.01})$, this excess load of $\Omega(m^{0.99})$ arising from machines in $C_k$ not handling loads above $\tau$ leads to a makespan of $\Omega(m^{0.98})$.
\end{enumerate}

\end{document}